\tikzset{
->-/.style args={#1rotate#2}{decoration={markings, mark=at position #1 with {\arrow[scale=1.5,rotate = #2 ]{stealth}}}, postaction={decorate}}
}
\numberwithin{equation}{section}
\def\ie{\begin{equation}\begin{aligned}}
\def\fe{\end{aligned}\end{equation}}
\renewcommand{\H}{\mathcal{H}}
\renewcommand{\A}{\mathcal{A}}
\renewcommand{\B}{\mathcal{B}}
\newcommand{\R}{R}
\newcommand{\Rbulk}{\mathfrak{R}}
\newcommand{\N}{\mathcal{N}}
\renewcommand{\S}{\mathcal{S}}
\renewcommand{\hat}{\widehat}
\renewcommand{\bar}{\overline}
\renewcommand{\tilde}{\widetilde}
\newcommand{\wt}{\widetilde}
\newcommand{\bi}{\begin{itemize}}
\newcommand{\ei}{\end{itemize}}
\newcommand{\bfig}{\begin{figure}\begin{center}}
\newcommand{\efig}{\end{center}\end{figure}}
\newcommand{\ol}{\overline}
\newtheoremstyle{break}
{\topsep}{\topsep}
{}{}%
{\bfseries}{.}
{5pt plus 1pt minus 1pt}{}
\theoremstyle{break}
\newtheorem{thm}{Theorem}
\newtheorem{lem}{Lemma}
\begin{document}
 \begin{titlepage}

\hfill  MIT-CTP/5901\\
 
\title{Disjoint additivity and local quantum physics
}

\author{Daniel Harlow, Shu-Heng Shao, Jonathan Sorce, and Manu Srivastava} 

\textit{Center for Theoretical Physics --- a Leinweber Institute, Massachusetts Institute of Technology}

\vspace{2cm}

\abstract{
Quantum systems of physical interest are often local, but there are at least three competing perspectives on how ``locality'' should be formalized: an algebraic framework, a path-integral framework, and a lattice framework.
One puzzle in this competition is that systems with higher-form symmetries, which are perfectly local from the path-integral and lattice perspectives, can violate the algebraic principle of ``additivity''.
In this paper, we propose a resolution to this puzzle by introducing a weaker locality principle, \textit{disjoint} additivity, which together with Haag duality should always be satisfied in local quantum systems. 
As evidence, we give examples in which disjoint additivity is preserved when ordinary additivity is violated; we show that Haag duality and disjoint additivity are satisfied in rather general lattice systems with local symmetry constraints; we give examples of nonlocal theories in which either disjoint additivity or Haag duality is violated; and finally we give examples of systems with nonlocal symmetry constraints in which disjoint additivity is violated, but can be restored by passing to a local ``SymTFT'' system in one higher dimension.}

\end{titlepage}

\tableofcontents
 
\section{Introduction}

Locality is a core feature of the laws of physics as we currently understand them, but defining it precisely is more difficult than it first appears.  In relativistic quantum field theories one form of locality is \textit{microcausality}, which says that if $O_1$ and $O_2$ are local operators and $x$ and $y$ are spacelike separated then we have
\be\label{mccond}
[O_1(x),O_2(y)]_\pm =0.
\ee
Here $[\cdot ,\cdot ]_\pm$ indicates a commutator unless $O_1$ and $O_2$ are both fermionic, in which case it is an anticommutator.  In non-relativistic systems this condition is weakened to equal-time (anti)commutativity at spatial separation. 

It has long been understood however that microcausality alone is not sufficient to imply all the phenomena one might like to view as consequences of locality.  This is clearly true for non-relativistic systems, since (anti)commutativity at spatial separation puts no constraint whatsoever on the nature of the Hamiltonian.  Moreover even in the context of relativistic theories, where microcausality does constrain the Hamiltonian since it is a statement about the algebra of operators at different times, there are theories which obey microcausality but are not genuinely local.  These include
\bi
\item[(1)] \textbf{Restriction of correlation functions to a Lorentzian hyperplane:} The correlation functions of a $D$-dimensional quantum field theory may be restricted to a $(D-1)$-dimensional timelike hyperplane such as $x=0$, on which they define a set of operator-valued distributions in $D-1$ spacetime dimensions that are consistent with microcausality.  
\item[(2)] \textbf{Generalized free fields:}  A generalized free field theory consists of a set of Gaussian correlation functions that obey microcausality but do not satisfy any local wave equation.  Generalized free fields can be obtained as the boundary limits of quantum fields in AdS space, so this example is similar to the previous one.
\item[(3)] \textbf{Invariant sector under a global symmetry:} Given a quantum field theory with a global symmetry, we can restrict to correlation functions of symmetry-invariant operators.
\item[(4)] \textbf{Virasoro multiplet of the identity:} In a generic two-dimensional conformal field theory, we can consider the ``sub-theory'' containing only the stress tensor correlation functions.
\ei
All four of these examples obey the classical Wightman and Haag/Kastler axioms for relativistic quantum field theory \cite{streater2000pct,Haag:axioms}, but we feel they are too perverse to be classified as local quantum theories. Theories (1) and (2) are easy to criticize: neither one has a conserved energy-momentum tensor, which a reasonable local field theory should surely have.  It is more subtle to pinpoint what is wrong with theories (3) and (4).

Perhaps the most common critique of theories (3) and (4) is that their partition functions cannot be defined on general spacetime manifolds in a way that is compatible with diffeomorphism invariance.  More concretely, in a local quantum field theory the partition function on any spacetime manifold $M$ evaluated as a functional of its background fields $\chi$ should obey
\be\label{diffZ}
Z[f_*\chi]=Z[\chi],
\ee
where $f:M\to M$ is any diffeomorphism and $f_*\chi$ is the action of $f$ on the background fields $\chi$.  A classic example of this requirement arises in the context of 2D conformal field theories, where we demand that the thermal partition function on a spatial circle gives a torus partition function which is invariant under the modular  S  transformation $\tau'=-1/\tau$ \cite{Cardy:modular, Moore:1988uz,Moore:1988qv}.\footnote{In fermionic theories this statement is only true if we choose anti-periodic boundary conditions on the spatial circle, as otherwise the spin structure (which we view as part of $\chi$) is not invariant under this transformation. 
}  There are various ways to justify \eqref{diffZ}, with perhaps the most intuitive being that it is true for any quantum field theory which is constructed from a local scalar Lagrangian density and a path integral measure that respects diffeomorphism invariance.  There is also a more formal categorical description of the requirement \eqref{diffZ} based on cutting and gluing path integrals \cite{Segal:CFT,Atiyah:TQFT, Lurie:extended}.

In principle the condition \eqref{diffZ} may be a sufficient definition of locality in field theory, but it does have some undesirable features.  For one thing it is difficult to check in practice, as we are required to formulate the theory on all possible spacetime manifolds $M$.  Moreover it does not work as stated in theories that have diffeomorphism anomalies such as a pair\footnote{We use a pair because the theory with a single right-mover is not local according to our criteria, as we will see in section \ref{sec:fermions}.} of right-moving real fermions in $1+1$ dimensions or the chiral $(\mathfrak{e}_8)_1$ WZW model (which features in the heterotic string theory), and it seems excessive to exclude such theories by fiat. 
A somewhat more nebulous concern is that \eqref{diffZ} has the feeling of being a consequence of locality, rather than its essential formulation. 

A second starting point for excluding theories such as (3) and (4) is the lattice: we restrict to theories that can be obtained by starting with a tensor product Hilbert space at short distances and a Hamiltonian with interactions that only mix sites at $O(1)$ lattice separation.  This is the predominant approach to locality in condensed matter theory, and it is arguably a more intuitive definition of locality than one based on \eqref{diffZ}.  However, this approach also has its problems --- in particular the lattice theory has non-universal structure at short distances which is removed in the continuum limit.  There can also be emergent features of the continuum limit (such as Lorentz invariance) which are not present at any finite lattice spacing.  Moreover it is again challenging to discuss theories with diffeomorphism anomalies from this point of view, and there could also be intrinsically strongly-coupled quantum field theories that do not admit any controllable lattice formulation.

A third approach to formalizing locality in quantum systems is the algebraic approach, which assigns an operator algebra to each spatial subregion and then demands that these algebras obey compatibility conditions \cite{Haag:axioms,Haag:book}.  We will review these conditions in more detail in section \ref{sec:review}, but two that are often discussed are ``Haag duality,'' which says that the commutant of the algebra $\A(\R)$ associated to a spatial region $\R$ is the algebra $\A(\R')$ associated to its spatial complement, and ``additivity,'' which says that the algebra $\A(\R_1\cup\R_2)$ associated to the union of two spatial regions $\R_1$ and $\R_2$ in the same time slice is the algebra $\A(\R_1) \vee \A(\R_2)$ generated by $\A(\R_1)$ and $\A(\R_2)$.  This approach to locality has several advantages in comparison to the previous two: it is formulated directly in the continuum, it allows for theories with diffeomorphism anomalies, and it has constraining power directly in Minkowski space without need to consider other manifolds.  One recent illustration of the power of this approach is that in \cite{Benedetti:2024dku} it was shown that in 2D conformal field theories additivity together with Haag duality implies invariance of the thermal partition function under the modular S transformation (see also \cite{Kawahigashi:modular, Rehren:modular} for earlier related work). Relatedly it has been understood for some time that theory (3) above, the invariant sector of a quantum field theory under a global symmetry, violates Haag duality and/or additivity \cite{Casini:2019kex,Casini:2020rgj,Shao:2025mfj}. 
These algebraic axioms thus have substantial constraining power for excluding non-local theories such as (1-4).

An important problem with using Haag duality and additivity to diagnose locality was explained in a series of papers by Casini, Mag\'{a}n, and collaborators \cite{Casini:2020rgj,Casini:2021tax,Casini:completeness}.  The problem is that there are quantum field theories that should clearly count as local, for example free Maxwell theory, in which we cannot assign algebras to regions in a way that obeys both Haag duality and additivity.  This happens in theories which have higher-form global symmetries \cite{Gaiotto:2014kfa}.  This is because theories with such symmetries have ``unbreakable'' extended operators, 
which cannot be generated out of local operators restricted to the region where the surface is defined.  For other examples of physically relevant theories with higher-form symmetries see \cite{Gaiotto:2014kfa,McGreevy:2022oyu, Cordova:2022ruw, Schafer-Nameki:2023jdn, Brennan:2023mmt, Bhardwaj:2023kri, Shao:2023gho}.  We thus need a better algebraic criterion for locality to allow for such theories.
  
The central proposal of this paper is that any local quantum theory must obey Haag duality together with a weaker form of additivity that we term \textit{disjoint additivity}.  Disjoint additivity is like additivity, except that we only require $\A[\R_1\cup \R_2]=\A[\R_1]\vee \A[\R_2]$ when $R_1$ and $R_2$ are disjoint in an appropriate sense. (For spatial regions in a continuum non-relativistic theory the appropriate sense is $\bar{\R_1}\cap \bar{\R_2}=\varnothing$.)  In the remainder of this paper we will argue that this rule is obeyed by  ``good'' theories such as free Maxwell theory which have higher-form symmetries, but not by ``bad'' theories such as examples (1-4) above. 

In section \ref{sec:review}, we define additivity and Haag duality and review the tension between them in systems with higher-form symmetries. We then define \textit{disjoint} additivity as a relaxation of the notion of additivity, and give additivity-violating examples in which disjoint additivity is preserved.
In section \ref{sec:lattice}, we prove that Haag duality is satisfied in any lattice system with a compact Lie group constraint; this includes lattice gauge theory and the ground space of a generic stabilizer Hamiltonian.  We also show that disjoint additivity holds provided that the action of the symmetry group is sufficiently locally generated (as it is for lattice gauge theory and local stabilizer codes).  
In section \ref{sec:violation-examples}, we show that Haag duality and/or disjoint additivity are violated in the nonlocal theories (1-4) above.  In section \ref{sec:fermions} we explain how to generalize our algebraic definitions to fermionic systems, and use them to show that a Majorana chain with an odd number of fermions is not local according to our rules.   Finally, in section \ref{sec:SymTFT}, we revisit the global symmetry-invariant sectors of local systems, in which both additivity and disjoint additivity are typically violated, and show that in some cases disjoint additivity can be restored if we reinterpret the theory as a boundary condition for a topological ``SymTFT'' theory living in one higher dimension.  

\section{Algebras and higher-form symmetries}
\label{sec:review}

\subsection{Algebraic notions}
\label{sec:algebraic-notions}
A general quantum mechanical system comes equipped with a Hilbert space and a Hamiltonian.  To turn it into a local quantum system we must give it some additional structure: a collection of regions in which degrees of freedom can be localized, together with a rule assigning an algebra of operators to each region \cite{Haag:axioms,Haag:book}.\footnote{A potential source of confusion is that the same quantum mechanical system can be equipped with inequivalent local algebras, just as the same abstract set can be equipped with inequivalent topologies.  A simple example of this is that Maxwell theory on $\mathbb{R}^4$ can have gauge group $U(1)$ or gauge group $\mathbb{R}$; the Hilbert space and Hamiltonian are the same in either case, but the assignment of operators to regions is different since the theories have different Wilson and 't Hooft lines; they are thus distinct as quantum field theories \cite{Aharony:2013hda} (see also section 3.4 of \cite{Harlow:2018tng}).}  We will take the term ``region'' to mean slightly different things in relativistic and non-relativistic systems.  
In a non-relativistic lattice system where the global Hilbert space is a tensor product of local Hilbert spaces, we define a region to be a subset of these tensor factors.  The associated (bosonic) algebra is the collection of operators supported on those factors.  In the non-relativistic continuum these regions become open subsets of a constant time slice; for technical reasons we further require such subsets to be regular (an open set is regular if it is equal to the interior of its closure). 

In continuum relativistic quantum field theory, however, there is ambiguity in the literature for how regions should be defined.  The definition we will adopt is that in a Lorentzian spacetime $M$, a region $R$ is a subset of $M$ which obeys
\be
R''=R.
\ee
Here $R'$ is the \textit{spatial complement} of $R$, which is defined as the interior of the causal complement $R^c=M-J(R)$, with $J(R)$ being the set of points in $M$ which can be connected to $R$ by a causal curve.  Note that with this definition $R$ is automatically open, and it is regular as well since it is the interior of a closed set.\footnote{Regions of this type were called ``wedges'' in \cite{Bousso:2022hlz}, but we will not use this term.}  The motivations for this definition are (1) it is covariant, (2) it makes sense in any spacetime without causal restrictions such as time-orientability or global hyperbolicity, and (3) compared to the various alternatives it most closely parallels the mathematical structure of von Neumann algebras.\footnote{An alternative that works when $M$ is globally hyperbolic, and which more closely resembles the non-relativistic definition, is to require $R$ to be the domain of dependence of a regular open subset $T$ of an acausal Cauchy surface $\Sigma$ for $M$. The complement $R'$ is then defined as $D[\mathrm{Int}(\Sigma-T)]$.  Our definition does not require $M$ to be globally hyperbolic, and even in the globally hyperbolic setting our definition includes all regions defined this way but also allows for more regions that seem natural to include (see appendix \ref{app:causality}).}  We explain this motivation in more detail in appendix \ref{app:causality}.  In non-relativistic lattice theories the spatial complement $R'$ simply indicates the complementary set of lattice factors, so $R=R''$ is automatic, and in the non-relativistic continuum we will take $R'$ to be the interior of the complement of $R$ in its time slice.  In figure \ref{opregionsfig} we illustrate these definitions for the non-relativistic and relativistic cases.  

\bfig
\includegraphics[height=5cm]{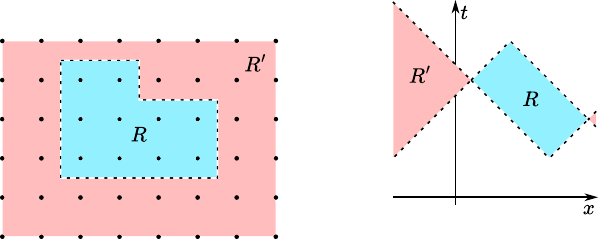}
\caption{Defining a region $\R$ and its spatial complement $R'$ in non-relativistic and relativistic systems.  On the left we have a Hamiltonian lattice at fixed time, while on the right we have a spacetime diagram where light moves on 45-degree lines.  The dashed boundaries are not included in either $R$ or $R'$.}\label{opregionsfig}
\efig
In both the nonrelativistic and relativistic cases, we refer to the algebra associated to a region $\R$ as $\A(\R).$  Roughly speaking, $\A(\R)$ is the set of operators generated by fields in $\R$, although it can also include operators such as twist operators and 't Hooft loops which are created by removing pieces of $R$ and imposing boundary conditions on the boundary which this creates.  In our convention, local algebras are always represented by bounded operators on the Hilbert space of the theory, and moreover we require each algebra to be a \textit{von Neumann algebra}, meaning that it is closed with respect to a natural way of taking limits.\footnote{For a physics-friendly introduction to von Neumann algebras, see the reviews \cite{Witten:notes, Sorce:notes}.} Given a set $\mathcal{X}$ of bounded operators acting on the Hilbert space $\H$, the commutant of $\mathcal{X}$ is the set
\begin{equation}
    \mathcal{X}' \equiv \{x' \in \B(\H)\,|\, [x', x] = 0 \text{ for all } x \in \mathcal{X}\},
\end{equation}
with $\B(\H)$ the set of all bounded operators acting on $\H$.  
A useful theorem proved by von Neumann states that an adjoint-closed set $\A$ acting on $\H$ is a von Neumann algebra if and only if it is equal to its own double commutant, $\A = \A''.$
In particular, this means that for any adjoint-closed set of operators $\mathcal{X},$ the set $\mathcal{X}''$ is the smallest von Neumann algebra containing $\mathcal{X}.$  Moreover it is immediate that $\mathcal{X}'=\mathcal{X}'''$ for $\mathcal{X}$ any set of bounded operators, so if $\mathcal{X}$ is adjoint-closed then $\mathcal{X}'$ is always a von Neumann algebra.
The smallest von Neumann algebra containing two distinct von Neumann algebras $\A_1$ and $\A_2$ is written $\A_1 \vee \A_2$, equivalently
\begin{equation}
    \A_1 \vee \A_2 \equiv (\A_1 \cup \A_2)''.
\end{equation}

For simplicity of exposition, in this section and in most of the rest of the paper, we will restrict to bosonic theories where operators in $R$ and $R'$ commute; we will discuss fermionic theories in section \ref{sec:fermions}. We thus have the microcausality relation
\begin{equation} \label{eq:microcausality}
    \A(\R') \subseteq \A(\R)'.
\end{equation}
If this inclusion is an equality for all regions $\R$, then we say the theory satisfies \textit{Haag duality}.
While condition \eqref{eq:microcausality} is always satisfied, there are many interesting situations in which Haag duality can be violated depending on how one chooses to assign algebras to regions; see for example \cite{Haag:book, Benedetti:2024dku, Casini:2020rgj, Casini:2013rba, Shao:paper1}.
As we will explain in section \ref{sec:lattice}, however, our perspective is that in local quantum systems the local algebras $\A(\R)$ always satisfy Haag duality.  A simple consequence of Haag duality is that the following statements are equivalent:
\begin{itemize}
    \item The algebra associated to the empty set contains only scalar multiples of the identity, 
            \be\label{Aempty}
            \A(\varnothing)=\lambda I.
            \ee
    \item 
    The algebra associated to all of space(time) is the full set of bounded operators on the Hilbert space. 
\end{itemize}
Both of these statements are quite natural, so we will assume they are both true.  

Another natural property of local algebras is nesting (sometimes called isotony):
\begin{equation}
    (\R_1 \subseteq \R_2) \Rightarrow (\A(\R_1) \subseteq \A(\R_2)).
\end{equation}
In particular this implies
\begin{equation}\label{addeq}
    \A(\R_1) \vee \A(\R_2) \subseteq \A\left((\R_1 \cup \R_2)''\right),
\end{equation}
where on the right-hand side the double complement is necessary to get a valid region in the relativistic case (in the non-relativistic case it does nothing).  In non-relativistic theories, if this inclusion is an equality for all regions $R_1$ and $R_2$  then the theory is said to satisfy \textit{additivity}.  Additivity is trickier to formulate in relativistic theories on general Lorentzian spacetimes.  However, if we take $M$ to be globally hyperbolic, then a fairly simple definition is possible: additivity is satisfied if \eqref{addeq} is saturated for all regions $\R_1,\R_2$ with the property that as submanifolds of $M$ they admit acausal Cauchy surfaces $\Sigma_1,\Sigma_2$ which lie in a single Cauchy surface $\Sigma$ for $M$.\footnote{The difficulty in formulating additivity for general spacetimes is perhaps another indication that we should not afford it the same level of respectability as Haag duality.  There is no such problem with the ``disjoint additivity'' that we introduce below.}  

Additivity and Haag duality are locality principles --- they are formal ways of stating that the local degrees of freedom in a quantum theory satisfy a form of completeness.  Haag duality captures the idea that we should take $\A(\R)$ to be the ``maximal'' set of operators which microcausality ``allows'' to be in $\R$, while additivity captures the idea that in a local quantum theory all operators in $\A(\R)$ should be built out of local operators in $\R$.  We can also motivate these principles from the lattice: in a tensor product Hilbert space they hold essentially by definition.  On the other hand we will see in a moment that there is some tension between these two principles in general.  The perspective we will advocate for in this paper is that Haag duality is sacred while additivity may be violated.

\subsection{Violation of additivity due to higher-form global symmetries}
\label{sec:symmetry-violation}

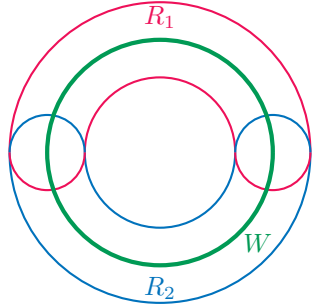
\begin{figure}
\centering
\begin{tikzpicture}

\draw[ thick, RoyalBlue] (-2,0) arc[start angle=180, end angle=360, radius=2];
\draw[ thick, OrangeRed] (2,0) arc[start angle=0, end angle=180, radius=2];

\draw[ thick, RoyalBlue] (-1,0) arc[start angle=180, end angle=360, radius=1];
\draw[ thick, OrangeRed] (1,0) arc[start angle=0, end angle=180, radius=1];

\draw[ thick, RoyalBlue] (-1,0) arc[start angle=0, end angle=180, radius=.5];
\draw[ thick, OrangeRed] (-2,0) arc[start angle=180, end angle=360, radius=.5];

\draw[ thick, RoyalBlue] (2,0) arc[start angle=0, end angle=180, radius=.5];
\draw[ thick, OrangeRed] (1,0) arc[start angle=180, end angle=360, radius=.5];

\draw[ultra thick, ForestGreen] (1.5,0) arc[start angle=0, end angle=360, radius=1.5];

\node at (1.3, -1.2) {\footnotesize \textbf{\color{ForestGreen}$W$}};
\node at (0, 1.8) {\footnotesize \textbf{\color{OrangeRed}$\R_1$}};
\node at (0, -1.8) {\footnotesize \textbf{\color{RoyalBlue}$\R_2$}};

\end{tikzpicture}

\caption{A non-contractible region $\R=\R_1\cup \R_2$ that encloses a line operator $W$ (green). Here $\R_1$ (red) and $\R_2$ (blue) are overlapping, contractible regions. If $W$ carries a nontrivial topological 1-form global symmetry charge, it cannot end on local operators. In this case, we have $W\in {\cal A}(\R_1\cup \R_2)$ but $W \notin {\cal A}(\R_1) \vee {\cal A}(\R_2)$.  
This violates additivity.  Here we have shown spatial regions; to get relativistic regions, we can take the domain of dependence of each spatial region.}\label{fig:tube}
\end{figure}

Although Haag duality and additivity are natural requirements for a local theory to obey, there exist physical systems in which they cannot simultaneously be satisfied \cite{Casini:2020rgj, Casini:completeness} (see also section III.4.2 of \cite{Haag:book}).   
In fact this happens already in free Maxwell $U(1)$ gauge theory without matter fields.  This is because the Wilson loop operator $W = \exp(i q\oint A)$ can only be defined on a closed loop in spacetime; it cannot be defined on a curve with endpoints, as there are no charged matter fields on which the Wilson line could end. 
In the presence of such an ``unbreakable" line operator $W$, additivity and Haag duality become incompatible for non-contractible regions where $W$ wraps a nontrivial cycle.  If $W$ is not included in the algebra for such a region, then Haag duality is violated since $W$ commutes with everything in the complementary region.  If $W$ is included in the algebra, then Haag duality is preserved, but additivity is violated as in figure \ref{fig:tube}.\footnote{To preserve Haag duality, it is important that we only allow Wilson loop charges $q$ which respect the periodicity of the gauge group. Otherwise $W$ would not commute with 't Hooft loops (or more generally surfaces) with which it links.  See e.g. section 3.4 of \cite{Harlow:2018tng} for further explanation.}
This is because a non-contractible region can be built up topologically as a union of contractible regions, but the unbreakable operator $W$ cannot be constructed from operators in these regions.  In this paper we view Haag duality as mandatory, so we from now on describe this situation by saying that the local operator algebras violate additivity.

The same additivity violation occurs in any QFT with a higher-form global symmetry. 
For simplicity, consider a continuum quantum field theory in $D$ spacetime dimensions with a   one-form  global symmetry.  
This means that there are one-dimensional charged objects $W$ that are acted on by a $(D-2)$-dimensional topological operator $\mathcal{L}$ by linking.
For example, in free Maxwell theory, the $U(1)$ one-form symmetry operator is the Gauss law operator ${\cal L} = \exp({\frac{i \theta}{e^2}}\oint_\Sigma\star F)$, where $F$ is the two-form field strength, $\Sigma$ is a codimension-two manifold in spacetime, and $\theta$ takes values in $[0,2\pi)$.\footnote{There is also a magnetic $U(1)$ one-form global symmetry which leads to a similar violation of additivity from the unbreakable 't Hooft line.}  The symmetry operator $\mathcal{L}$ may be passed through the charged object $W$ at the cost of introducing an eigenvalue, which is $e^{iq\theta}$ in the case of the Maxwell theory  \cite{Gaiotto:2014kfa}.
Since the linking rule is  purely local, the action of $\mathcal{L}$ on $W$ is independent of whether or not $W$ ``ends'' on any point  operators. 

The operator  $\cal L$ generating a one-form global  symmetry is topological in the sense that it is invariant under small deformations of its support in spacetime. 
In the Maxwell theory example, this is a consequence of the  equation of motion $d\star F=0$. 
It follows that a line operator $W$ carrying nontrivial one-form symmetry charge \textit{cannot} end on point operators, because then $\mathcal{L}$ could either pass through $W$ producing a nontrivial eigenvalue, or be ``pulled out'' through the ends of $W$ without producing any eigenvalue; this would give the inconsistency sketched in figure \ref{fig:endable} \cite{Rudelius:2020orz,Heidenreich:2021xpr}. 
To conclude, a one-form global symmetry implies the presence of an unbreakable charged line operator, which in turn violates additivity \cite{Casini:2020rgj, Casini:completeness,Haag:book}. 

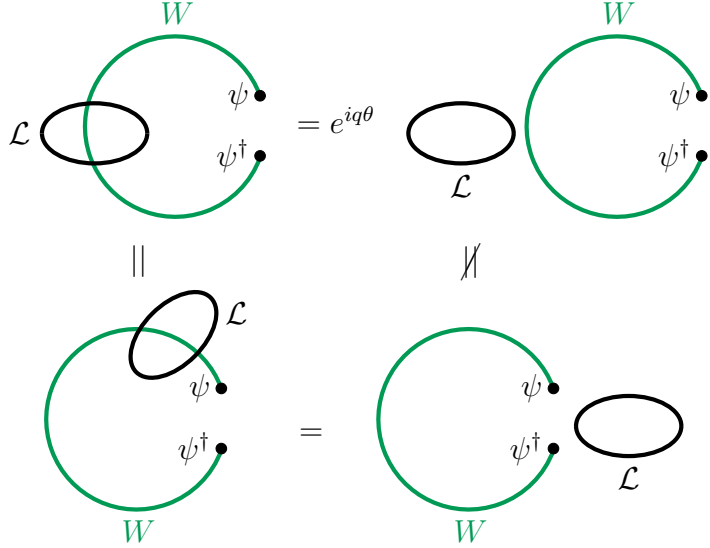
\begin{figure}
\centering

\begin{tikzpicture}
    \draw[ultra thick,ForestGreen] (0,0) arc[start angle=20, end angle=340, radius=1.2];

    \draw[ ultra thick] (-1.5,-0.5) arc[start angle=0, end angle=180, x radius=.7, y radius=0.4];
 
    \draw[ ultra thick] (-2.9,-0.5) arc[start angle=180, end angle=360, x radius=.7, y radius=0.4];

    \draw node at (-3.2,-0.5) {$\cal L$};

    \draw node at (-1.1,1.1) {\color{ForestGreen}$W$};

    \filldraw[black] (0,0) circle (2pt) node[anchor=east]{$\psi$};
    \filldraw[black] (0,-.8) circle (2pt) node[anchor=east]{$\psi^\dagger$};
\end{tikzpicture}
\raisebox{1.3cm}{~~$=e^{iq\theta}$~~}
\begin{tikzpicture}
    \draw[ForestGreen,ultra thick] (0,0) arc[start angle=20, end angle=340, radius=1.2];

    \draw[ ultra thick] (-2.5,-0.5) arc[start angle=0, end angle=360, x radius=.7, y radius=0.4];

    \draw node at (-3.2,-1.2) {$\cal L$};

    \draw node at (-1.1,1.1) {\color{ForestGreen}$W$};

    \filldraw[black] (0,0) circle (2pt) node[anchor=east]{$\psi$};
    \filldraw[black] (0,-.8) circle (2pt) node[anchor=east]{$\psi^\dagger$};
\end{tikzpicture}
~\\
\begin{tikzpicture}
    \draw[ultra thick, ForestGreen] (0,0) arc[start angle=20, end angle=340, radius=1.2];

    \begin{scope}
        \draw[ ultra thick, rotate around={45:(0,0.5)}] (-.3,1.1) ellipse (.7 and 0.4);
    \end{scope}

     \draw node at (0.2,1) {$\cal L$};

    \draw node at (-1.1,1.7) {$||$};
    \draw node at (-1.1,-1.9) {\color{ForestGreen}$W$};

    \filldraw[black] (0,0) circle (2pt) node[anchor=east]{$\psi$};
    \filldraw[black] (0,-.8) circle (2pt) node[anchor=east]{$\psi^\dagger$};
\end{tikzpicture}
\raisebox{1.5cm}{~~~$=$~~~}
\begin{tikzpicture}
    \draw[ultra thick,ForestGreen] (0,0) arc[start angle=20, end angle=340, radius=1.2];

     \draw node at (-1.1,1.7) {$\cancel{||}$};

    \draw[ultra thick] (1.7,-0.5) arc[start angle=0, end angle=180, x radius=.7, y radius=0.4];
 
    \draw[ultra thick] (.3,-0.5) arc[start angle=180, end angle=360, x radius=.7, y radius=0.4];

    \draw node at (1,-1.2) {$\cal L$};

    \draw node at (-1.1,-1.9) {\color{ForestGreen}$W$};

    \filldraw[black] (0,0) circle (2pt) node[anchor=east]{$\psi$};
    \filldraw[black] (0,-.8) circle (2pt) node[anchor=east]{$\psi^\dagger$};
\end{tikzpicture}
\caption{A line operator carrying a topological one-form  symmetry charge must be unbreakable. Here we demonstrate this in three spacetime dimensions for a $U(1)$ one-form symmetry generated by a topological line operator  $\cal L$. Let $W$  be a line operator that  carries charge $q$ under ${\cal L}$. This means that if we locally unlink $W$ and ${\cal L}$, this process gives a phase $e^{iq\theta}$. If $W$ can end on a pair of operators $\psi,\psi^\dagger$, then we reach a contradiction by deforming the topological line $\cal L$ in two different ways.}\label{fig:endable}
\end{figure}

\subsection{Disjoint additivity: definitions and examples}
\label{sec:definitions}

The violation of additivity from the previous subsection rested crucially on expressing a non-contractible region as a union of \textit{overlapping} contractible regions.
If we had not allowed $\R_1$ and $\R_2$ to overlap, then additivity would not have been violated.
We thus are motivated to introduce a notion of \textit{disjoint additivity}, which we first naively define as requiring $\A(\R_1 \cup \R_2) = \A(\R_1) \vee \A(\R_2)$ when $R_1$ and $R_2$ are in the same time slice and $\R_1 \cap \R_2 = \varnothing.$ 
This serves as a preliminary definition, but we will soon see that we need to be a bit more careful about how close the regions $R_1$ and $R_2$ can be; we will also introduce a covariant version for relativistic theories.

Clearly the violation of additivity shown in figure \ref{fig:tube} does not constitute a violation of our preliminary notion of  disjoint additivity.
However, this is not enough to convince us that disjoint additivity is satisfied in local quantum systems.
We will provide more systematic discussions on this point in sections \ref{sec:lattice} and \ref{sec:SymTFT}.
For the moment, let us consider two concrete lattice examples in which we can show explicitly that (a refined version of) disjoint additivity is satisfied.

Our first example, which will lead us to the more refined definition of disjoint additivity, will be $\mathbb{Z}_2$ gauge theory on a periodic chain   with a qubit on every edge $e$. 
Starting with a tensor product Hilbert space, we restrict to a two-dimensional subspace by strictly enforcing the local constraints
\ie\label{Z2constraint}
X_e X_{e+1} =1 \,,~~~\forall ~e\,.
\fe
This is a topological theory with only two states: (i) $|+\,\rangle$, where all the $X_e$ operators have eigenvalue $1$; and (ii) $|-\,\rangle$ where all the $X_e$ operaotrs have eigenvalue $(-1)$.  As shown in figure \ref{fig:Z2-chain}, the theory has two nontrivial operators: a line operator $Z=\prod_{e} Z_e$ acting as
\begin{equation}
    Z |+\,\rangle = |-\, \rangle, \quad Z |-\,\rangle = |+\,\rangle,
\end{equation}
and a local operator $X$ acting as
\begin{equation}
    X |+\,\rangle = |+\,\rangle, \quad X |-\,\rangle = (-1) |-\,\rangle.
\end{equation}
The operator $X$ can be chosen to be $X_e$ for any edge $e$. 
Using the constraint \eqref{Z2constraint}, it can be equated to the Pauli-$X$ operator at any other edge. Therefore, $X$ is a topological local operator that  belongs to every nontrivial region, while the operator $Z$ only belongs to the algebra of the full circle $S^1.$\footnote{The continuum field theory description of the 1+1D $\mathbb{Z}_2$ gauge theory is given by ${\cal L} = \frac{2i}{2\pi} ad\phi$, where $a$ is a $U(1)$ one-form gauge field and $\phi\sim \phi+2\pi$ is a compact scalar field.  The local operator $X$ is represented by $e^{i \phi}$, which generates a $\mathbb{Z}_2$ one-form global symmetry.  The line operator $Z$ is represented as a Wilson line $e^{i \oint_{S^1}a}$, which is unbreakable and charged under the said one-form symmetry. Additivity is therefore violated via the argument in Section \ref{sec:symmetry-violation}. \label{fn:BF}}
The topological local operator $X$ generates a $\mathbb{Z}_2$ one-form global symmetry. 
The algebra $XZ = -ZX$ implies that the line operator $Z$ is charged under this one-form symmetry, and is thus unbreakable. 
Additivity is violated because the circle $S^1$ can be written as a union of overlapping regions that contain only $X$ operators, and the $Z$ operator can never be produced from these.

In fact, on the lattice, due to the underlying discreteness of the system, it is possible to build the full circle using two disjoint sets of edges.
This appears to be a violation of disjoint additivity as we formulated it above.  On the other hand, the violation only happens because the regions are directly adjacent at the lattice scale, and it is not so clear that we should view such regions as really being disjoint in the continuum limit.  We therefore propose that in lattice systems based on an underlying tensor product Hilbert space, perhaps with some gauge constraints imposed, the system satisfies \textit{disjoint additivity} if there is a definition of adjacency such that we have
    \begin{equation}\label{def:precise-lattice}
        \A(\R_1 \cup \R_2) = \A(\R_1) \vee \A(\R_2)
    \end{equation}
    whenever we have $\R_1 \cap \R_2 = \varnothing$ and $\R_1, \R_2$ non-adjacent.

\begin{figure}
\centering
\begin{tikzpicture}[scale=0.7]  
    \draw[ultra thick] (45:2) arc[start angle=45, end angle=45+360, radius=2];

    \foreach \angle in {0, 45, 90, 135, 180, 225, 270, 315} {
      
        \pgfmathsetmacro\x{2*cos(\angle)}
        \pgfmathsetmacro\y{2*sin(\angle)}
        
        \pgfmathsetmacro\dx{0.2*cos(\angle)}
        \pgfmathsetmacro\dy{0.2*sin(\angle)}
        \draw[ultra thick] (\x-\dx, \y-\dy) -- (\x+\dx, \y+\dy);
    }

    \pgfmathsetmacro\x{1.5*cos(22.5)}
    \pgfmathsetmacro\y{1.5*sin(22.5)}
    \node at (\x, \y) {\footnotesize \textbf{$X$}};

    \begin{scope}[xshift=8cm]
        \draw[ultra thick] (45:2) arc[start angle=45, end angle=45+360, radius=2];

    \foreach \angle in {0, 45, 90, 135, 180, 225, 270, 315} {
      
        \pgfmathsetmacro\x{2*cos(\angle)}
        \pgfmathsetmacro\y{2*sin(\angle)}
        
        \pgfmathsetmacro\dx{0.2*cos(\angle)}
        \pgfmathsetmacro\dy{0.2*sin(\angle)}
        \draw[ultra thick] (\x-\dx, \y-\dy) -- (\x+\dx, \y+\dy);

        \pgfmathsetmacro\X{2.5*cos(\angle+22.5)}
        \pgfmathsetmacro\Y{2.5*sin(\angle+22.5)}
        \node at (\X, \Y) {\footnotesize \textbf{$Z$}};
    }

    \end{scope}

\end{tikzpicture}
\caption{In one spatial dimension, $\mathbb{Z}_2$ gauge theory contains an ``$X$'' operator that acts as the Pauli $X$ on local edges, and a ``$Z$'' operator that acts as a Pauli-$Z$ chain on all edges simultaneously.}\label{fig:Z2-chain}
\end{figure}
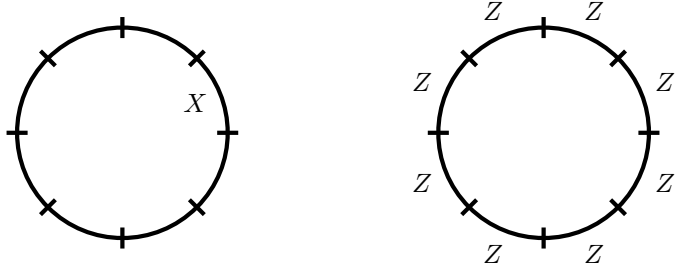

The precise notion of ``non-adjacent" depends on the details of the lattice model. 
In section \ref{sec:lattice}, we will give a precise definition for a large class of lattice models, including general lattice gauge theories and stabilizer Hamiltonians. 
For now, we note that the 1+1D $\mathbb{Z}_2$ gauge theory satisfies disjoint additivity with adjacency defined by the ordinary lattice neighbor relation.
This is because for any two regions with a buffer of at least one site between them, the combined algebra and the individual algebras are both generated by the single operator $X$.

A more involved example is provided by lattice $\mathbb{Z}_2$ gauge theory in $2+1$ dimensions quantized on a spatial torus, a four-state topological system that is sometimes called the ground space of the toric code \cite{Kitaev:1997wr}. 
The space is a square lattice on a torus, with a qubit on every edge $e$. The ground space is found by requiring the operators
\begin{align}
&U_v = \prod_{e\ni v}X_e \,, \label{TCv}\\
&U_f =\prod_{e\in f}Z_e\,,\label{TCf}
\end{align}
to be $+1$ for every vertex $v$ and face $f$. 
There are no local degrees of freedom, but there are nontrivial topological lines of ``$X$-type'' and ``$Z$-type,'' shown in figure \ref{fig:toric-lines}, generating two $\mathbb{Z}_2$ one-form global symmetries. 
A topological $Z$-line that wraps the torus horizontally is distinct, as an operator, from one that wraps the torus vertically, and the four ground states are specified by choosing eigenvalues $+1$ or $-1$ for each of these two operators.
Acting with a horizontal $X$-line flips the eigenvalue of the vertical $Z$-line, and acting with a vertical $X$-line flips the eigenvalue of the horizontal $Z$-line. 
In this sense, the $X$-line is charged under the one-form symmetry generated by the $Z$-line, and vice versa. It follows that neither of the $X$ and $Z$-lines are breakable in the ground space of toric code.

\begin{figure}
\centering
\begin{tikzpicture}
    \foreach \y in {0, 0.75, 1.5, 2.25, 3} {
        \draw [thin, dashed] (0, \y) to (3, \y);
    }
    \foreach \x in {0, 0.75, 1.5, 2.25, 3} {
        \draw [thin, dashed] (\x, 0) to (\x, 3);
    }

    \draw [ultra thick, BrickRed] (0, 1.5) -- ++ (0, 0.75);
    
    \draw [ultra thick, BrickRed] (0.75, 1.5) -- ++ (0, 0.75) -- ++(0.75, 0) -- ++(0, 0.75);

    \draw [ultra thick, BrickRed] (1.5, 2.25) -- ++ (0.75, 0) -- ++(0, -0.75);

    \draw [ultra thick, BrickRed] (3, 1.5) -- ++ (0, 0.75);

    \node at (4, 1.5) {$=$};

    \node at (-0.5, 1.8) {\color{BrickRed} $X$};

    \begin{scope}[xshift=5cm]
    \foreach \y in {0, 0.75, 1.5, 2.25, 3} {
        \draw [thin, dashed] (0, \y) to (3, \y);
    }
    \foreach \x in {0, 0.75, 1.5, 2.25, 3} {
        \draw [thin, dashed] (\x, 0) to (\x, 3);
    }
    \draw [ultra thick, BrickRed] (0, 1.5) -- ++ (0, 0.75);

    \draw [ultra thick, BrickRed] (0.75, 1.5) -- ++ (0, 0.75);
    
    \draw [ultra thick, BrickRed] (1.5, 1.5) -- ++(0, 0.75);
    
    \draw [ultra thick, BrickRed] (2.25, 1.5) -- ++ (0, 0.75);

    \draw [ultra thick, BrickRed] (3, 1.5) -- ++ (0, 0.75);

    \node at (3.5, 1.8) {\color{BrickRed} $X$};
    
    \end{scope}


    \begin{scope}[yshift=-4cm]
    \foreach \y in {0, 0.75, 1.5, 2.25, 3} {
        \draw [thin, dashed] (0, \y) to (3, \y);
    }
    \foreach \x in {0, 0.75, 1.5, 2.25, 3} {
        \draw [thin, dashed] (\x, 0) to (\x, 3);
    }

    \draw [ultra thick, blue] (0, 1.5) -- ++(0.75, 0) -- ++(0, 0.75) -- ++(0.75,0) -- ++(0.75,0) -- ++(0, -0.75) -- ++(0.75, 0);

    \node at (-0.5, 1.5) {\color{blue} $Z$};

    \node at (4, 1.5) {$=$}; 

    \begin{scope}[xshift=5cm]
    \foreach \y in {0, 0.75, 1.5, 2.25, 3} {
        \draw [thin, dashed] (0, \y) to (3, \y);
    }
    \foreach \x in {0, 0.75, 1.5, 2.25, 3} {
        \draw [thin, dashed] (\x, 0) to (\x, 3);
    }

    \draw [ultra thick, blue] (0, 1.5) -- ++(0.75, 0) -- ++(0.75,0) -- ++(0.75,0) -- ++(0.75, 0);

    \node at (3.5, 1.5) {\color{blue} $Z$};
    \end{scope}
    \end{scope}
\end{tikzpicture}
\caption{In the ground space of the toric code, there are  two kinds of nontrivial operators, which are invariant under topological deformations.
The $X$-type line operators must pass through a face of the lattice when moving from one segment of the path to the next segment.
By contrast, the $Z$-type line operators must pass through a vertex.
}\label{fig:toric-lines}
\end{figure}
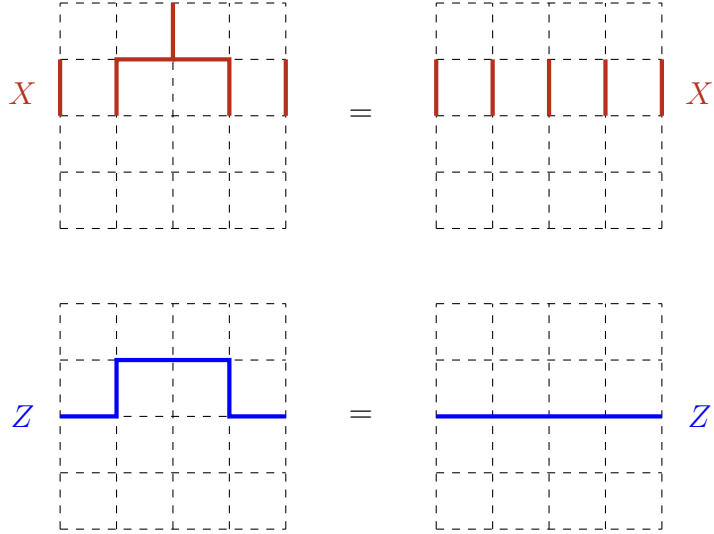

To any set of edges for the ground space of the toric code, we assign all multiples of the identity together with any line operator that can be represented on those sites edges.
This means that contractible regions contain only multiples of the identity, while strips that wrap a cycle on the torus contain the corresponding line operators.

It is not hard to show that the toric code algebras satisfy disjoint additivity in the sense of \eqref{def:precise-lattice}, provided that we say two edges are adjacent if they share a vertex or a face.
Note that as shown in figure \ref{fig:toric-lines}, ``$Z$-type'' topological lines pass through vertices, while ``$X$-type'' topological lines jump across faces.
To stop ourselves from building a strip containing a $Z$-type topological line out of a pair of disjoint, contractible regions, it is necessary to avoid pairs of regions that are adjacent across vertices; similarly, for the $X$-type topological line, it is necessary to avoid pairs of regions that are adjacent across faces.
However, for any non-adjacent pair of regions, disjoint additivity is clearly satisfied.
The only way that combining two regions can produce new operators is if the combined region contains topologically nontrivial loops that are not present in the individual regions; when we consider only non-adjacent regions, this is forbidden.

\bfig
\includegraphics[height=6cm]{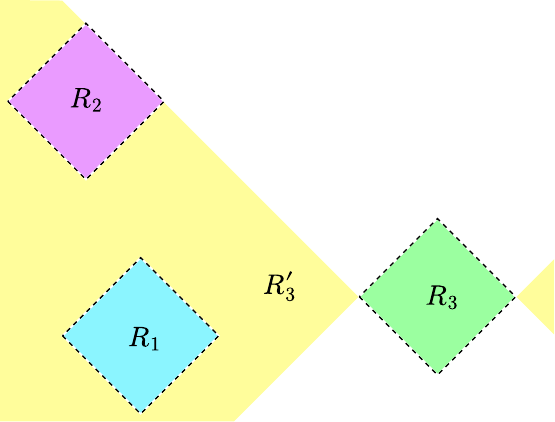}
\caption{Illustrating spatial disjointness.  The regions $R_1$ and $R_2$ are disjoint as sets, but they are not spatially disjoint.  Moreover $R_1\cup R_2$ is not a region at all, as it is not equal to $(R_1\cup R_2)''$.  The regions $R_1$ and $R_3$ are spatially disjoint.  The regions $R_2$ and $R_3$ are \textit{not} spatially disjoint, as the closure of $R_2$ is not contained in the spatial complement of $R_3$ (which is shaded yellow).}\label{dafig}
\efig
In the continuum, it is similarly useful to be slightly more restrictive on the regions $\R_1$ and $\R_2$ for which disjoint additivity is supposed to be satisfied.
For example in the nonrelativistic continuum case, suppose that $\R_1$ and $\R_2$ are two disjoint spatial regions with overlapping closures.
Since $\R_1 \cup \R_2$ and the interior of $\bar{\R_1} \cup \bar{\R_2}$ only differ by a set of measure zero, it might be natural in some settings to assign to the region $\R_1 \cup \R_2$ all of the operators contained in the interior of $\bar{\R_1} \cup \bar{\R_2}.$  But we can easily find examples of contractible $R_1$ and $R_2$ where the interior of $\bar{\R_1} \cup \bar{\R_2}$ is not contractible, for example we could shrink $R_1$ and $R_2$ in figure \ref{fig:tube} a bit, and the latter region can then contain nonlocal degrees of freedom such as Wilson lines which are not generated by the algebras for $R_1$ and $R_2$.  We do not want to rule out this choice, so we say that a nonrelativistic continuum field theory obeys \textit{disjoint additivity} if 
\begin{equation}\label{dadef}
\A\left(\R_1 \cup \R_2\right) = \A(\R_1) \vee \A(\R_2)
\end{equation}
for all open spatial regions $R_1$ and $R_2$ obeying
\be
\ol{R_1}\cap \ol{R_2}=\varnothing.
\ee
We think of the closures here as the continuum version of non-adjacency.  In the relativistic setting we instead require that the regions are \textit{spatially disjoint}, meaning that\footnote{In Minkowski spacetime $\overline{R_1}\subseteq R_2'$ implies that $\overline{R_2}\subseteq R_1'$, so we need only mention one of these.  In general however this implication does not hold, so it is better (and also more symmetric) to require both.}
\begin{align}\nonumber
\overline{R_1}&\subseteq R_2'\\
\overline{R_2}&\subseteq R_1'.\label{spdis}
\end{align}
In general however the set $R_1\cup R_2$ is not a region even when \eqref{spdis} holds, so in a relativistic theory we say that disjoint additivity holds if \eqref{dadef} is true for all spatially disjoint regions $R_1$, $R_2$ such that $(R_1\cup R_2)''=R_1\cup R_2$.\footnote{Examples of spatially disjoint regions where $R_1\cup R_2$ is not a region are not so simple to construct.  In appendix \ref{app:causality} we give an example in a globally hyperbolic spacetime, but also show that when $R_1$ and $R_2$ are each the domain of dependence of a regular open subset of a Cauchy surface (the Cauchy surface can be different for the two regions) then indeed we always have $(R_1\cup R_2)''=R_1\cup R_2$.}  We illustrate spatial disjointness in figure \ref{dafig}.  We emphasize that unlike ordinary additivity, the definition of disjoint additivity does not require us to assume that $M$ is globally hyperbolic or make any arbitrary choice of slice.

\subsection{Non-topological  higher-form global symmetries}

Thus far, we have focused on higher-form global symmetries that are topological, which is always the case in relativistic QFT \cite{Gaiotto:2014kfa}. 
On the other hand, in non-relativistic QFT or lattice systems, higher-form global symmetries can be non-topological \cite{Seiberg:2019vrp,Qi:2020jrf,Oh:2023bnk,Choi:2024rjm,Gorantla:2024ocs}.
Here we point out that non-topological higher-form symmetries are compatible with additivity. 
We demonstrate this   in  toric code, whose Hamiltonian is
\ie
H = - \alpha \sum_v U_v - \beta \sum_f U_f\,.
\fe
In contrast to the discussion in Section \ref{sec:definitions}, here we do not enforce the constraints $U_v=U_f=1$ strictly. 
Instead, the Hilbert space $\H$ 
is a tensor product of $N$ qubits and dim $\H=2^N$. States violating these local constraints are retained, but penalized energetically. 

In this bigger Hilbert space, the $X$-type and $Z$-type line operators are no longer topological.  It means that even for curves $\gamma$ and $\gamma'$ on the lattice that are homologically equivalent, their corresponding $Z$-type operators act differently on the entire Hilbert space, i.e., $Z(\gamma)\neq Z(\gamma')$ on $\H$. 
Specifically,  the equality in Figure \ref{fig:toric-lines} no longer holds because there are states in $\H$ violating the relation $U_f=+1$. 
A similar statement holds for the $X$-type operators.
Nonetheless, the $X$- and $Z$-type line operators still commute with the Hamiltonian and lead to conserved quantities. 
In this sense they generate non-topological one-form global symmetries. 

Since these one-form symmetry operators are not topological, the argument in Section \ref{sec:symmetry-violation} (in particular, Figure \ref{fig:endable})  no longer applies. 
On the entire Hilbert space $\H$, both  the $X$- and $Z$-lines are breakable, and there is no violation of additivity. 
Indeed, the operator algebra on $\H$ is just a matrix algebra Mat($2^{N},\mathbb{C})$, which satisfies additivity and Haag duality. 
We conclude that non-topological higher-form global symmetries do not necessarily violate additivity.

\begin{table}[h!]
\begin{center}
\begin{tabular}{|c|c|c|}
\hline 
 &toric code  & ground space \\
 && of toric code\\
\hline
additivity&$\checkmark$ & $\times$\\
\hline 
disjoint additivity & $\checkmark$ & $\checkmark$\\
\hline
~one-form global symmetry ~& non-topological&topological \\
\hline
\end{tabular}
\end{center}
\caption{While additivity is violated in the presence of topological higher-form global symmetries, it is compatible with non-topological ones.}
\end{table}

\section{Haag duality and disjoint additivity on the lattice}
\label{sec:lattice}

As reviewed in section \ref{sec:symmetry-violation}, ordinary additivity is violated in continuum quantum field theories in the presence of a higher-form symmetry.
Our proposal however is that all local quantum theories should respect Haag duality as defined in section \ref{sec:algebraic-notions} and disjoint additivity as defined in section \ref{sec:definitions}, even if higher-form symmetries are present.
To motivate this, in this section we study general lattice theories obtained by implementing a Lie-group constraint on a tensor product Hilbert space, and show that disjoint additivity and Haag duality are both respected.  Such theories are interesting in this context because they can have exact higher-form symmetries without requiring the subtlety of the continuum limit.  In addition to being a model for continuum behavior, this class of theories also contains many physically interesting condensed matter systems, such as Hamiltonian lattice gauge theories and the ground spaces of stabilizer Hamiltonians.\footnote{Interesting work in \cite{Naaijkens:1, Naaijkens:2, Naaijkens:3, Ogata:2021vom,Jones:2023ptg,Jones:2023xew} has studied Haag duality in lattice systems in infinite space where gauge constraints are not imposed exactly, and instead are imposed via energetic suppression. In these theories the algebras on finite lattices automatically satisfy Haag duality and additivity, while the algebras associated with infinite lattices only satisfy Haag duality due to the structure of the ground state wavefunction.
}

\subsection{Lattice gauge theories and stabilizer Hamiltonians}
\label{sec:Lie-constraint-definitions}

We consider a lattice system that, for simplicity, we will take to have a finite number $N$ of local degrees of freedom.
These could include  matter degrees of freedom on the sites and ``gauge field'' degrees of freedom on edges or higher-dimensional simplices.
While we restrict to a finite number of sites, we allow each on-site dimension to be infinite.
We assume that the Hilbert space carries the faithful, unitary action of a compact Lie group $\S$.
In the case of lattice gauge theory with gauge group $G$ and $N_v$ lattice vertices (note that $N\neq N_v$), this group is $\S = G^{N_v}$.  In the case of a stabilizer Hamiltonian, $\S$ is the stabilizer group. 
We note that a finite group is a special example of a compact Lie group, and we do not assume that $\S$ is abelian.
The assumption that the representation is ``faithful'' means that the map $s \mapsto U(s)$ with $s\in \S$ is injective.
As is usually assumed for representations of compact Lie groups, we also assume that the map is continuous with respect to the strong topology on Hilbert space.

The main assumption we will make about $\S$ is that its action factorizes between complementary lattice regions.
Namely, for any set of degrees of freedom $R$ and the complementary set $R',$ and for any $s \in \S,$ we have
\begin{equation} \label{eq:factorizing-unitary}
    U(s) = U_{R}(s) \otimes U_{R'}(s).
\end{equation}
While $U(s)$ defines a linear representation of $\S$, i.e., 
\ie
U(s_1 s_2) = U(s_1) U(s_2),
\fe
we do not make this assumption for its restrictions $U_R(s)$.  On the other hand we must have
\be
U_R(s_1)U_R(s_2)\otimes U_{R'}(s_1)U_{R'}(s_2)=U_R(s_1s_2)\otimes U_{R'}(s_1s_2),
\ee
so $U_R$ and $U_{R'}$ still furnish projective representations of $\S$ \cite{Else:2014vma}, since $A\otimes A'=B\otimes B'$ implies $A\propto B$ and $A'\propto B'$. To prove disjoint additivity, we will also need to make an assumption about $\S$ being generated by elements that only act on a few sites at a time; we will state this more precisely in section \ref{sec:lattice-DA}. 
Both assumptions hold for general stabilizer Hamiltonians and for lattice gauge theories.  Equation \eqref{eq:factorizing-unitary} can be thought of as imposing a form of what condensed matter theorists call an ``on-site'' symmetry action; this ensures that the symmetry does not have anomalies and can safely be gauged. 

If $\H$ is the``parent'' Hilbert space on which $\S$ acts, then there is an unambiguous assignment of operators $\A(R)$ to each subset $R$ of the $N$ degrees of freedom. These algebras trivially satisfy Haag duality and ordinary additivity due to the tensor product nature of $\H$.
Now consider the invariant subspace
\begin{equation}
    \H_{\S} \equiv \{|\psi\rangle \in \H\,|\, U(s)|\psi\rangle = |\psi\rangle \text{ for all } s \in \S\}.
\end{equation}
We let $P_{\S}$ be the projector onto this subspace.
To each region $R$, we assign the algebra
\begin{equation} \label{eq:gauge-invariant-algebra}
    \A_{\S}(R) \equiv \{P_\S a P_\S \, |\, a \in \A(R) \text{ and } [a, U(s)] = 0 \text{ for all } s \in \S\}.
\end{equation}
Moreover, we interpret this as an algebra acting on $\H_{\S}$.
In particular, Haag duality is the statement
\begin{equation} \label{eq:gauge-duality}
    \A_{\S}(R)' = \A_{\S}(R'),
\end{equation}
where the commutant on the left-hand side is taken within the space of operators acting on $\H_{\S},$ not within the space of operators acting on the parent Hilbert space $\H$.

We will show in the next subsection that Haag duality is satisfied in the form of equation \eqref{eq:gauge-duality}.
In the subsection after that, we will show that disjoint additivity is satisfied for regions separated by an adjacency relation that is set by details of the constraint group $\S.$

Before we proceed to the general proof, it is useful to have the following examples in mind.

\paragraph{Example 1: $\mathbb{Z}_2$ gauge theory} There is a qubit on every edge of a 2D spatial lattice with $N_v$ vertices. Let $X_e, Z_e$ be the Pauli operators acting on the qubit on edge $e$. 
At every vertex, the operator $U_v = \prod_{e \ni v} X_e$ in \eqref{TCv} 
implements a local $\mathbb{Z}_2$ gauge transformation. 
In the case of a $\mathbb{Z}_2$ gauge theory, the group is $\S =\mathbb{Z}_2^{N_v}$, and is generated by the operators $U_v$.
$\H_\S$ is the gauge-invariant subspace where $U_v=1$.\footnote{It is important to distinguish between the following two cases: (1) imposing the local constraints $U_v=1$ at every vertex, and (2) setting every $X$-loop to be 1. 
Case (2) is equivalent to restricting to the invariant sector of a one-form global symmetry. 
Case (2) enforces extra constraints coming from $X$-loops around non-contractible cycles, while case (1) only enforces $X$-loops around contractible cycles to be 1. As we will prove in this section, the operator algebra in case (1) satisfies both Haag duality and disjoint additivity, whereas it is straightforward to find a violation of disjoint additivity on a torus in case (2).}

\paragraph{Example 2: Toric code} 
The lattice setup is the same as the previous example, but in the case of the toric code, we impose both $U_v=1$ and $U_f=1$ as in \eqref{TCv} and \eqref{TCf}. 
In this case we have $\S= \mathbb{Z}_2^{N_v} \times \mathbb{Z}_2^{N_f}$ generated by $U_v$ and $U_f$, where $N_v$ is the number of vertices and $N_f$ is the number of faces. The subspace $\H_{\S}$ is the ground space of toric code where $U_v=1$ and $U_f=1$. 
It is clear that $U_v$ and $U_f$ satisfy \eqref{eq:factorizing-unitary}, but the restrictions to a subregion generally form a projective representation of $\S$.  

\paragraph{Example 3: Lattice $G$ gauge theory} On every (oriented) edge $e$, there is a Hilbert space $L^2(G),$ where $G$ is a compact Lie group. 
For finite $G$, the local Hilbert space is finite-dimensional and is  the group algebra $\mathbb{C}[G]$, i.e., the vector space of formal linear combinations of  group elements with complex coefficients. 
For continuous $G$, the local Hilbert space is infinite-dimensional.
At each vertex $v$ and for each group element $g \in G$, one defines the transformation $U_v(g)$ which left-multiplies each outward-pointing edge by $g$ and right-multiplies each inward-pointing edge by $g^{-1}.$
The constraint group $\S$ is generated by the set of $U_v(g)$ transformations, and is equal to $G^{N_v}$ where $N_v$ is the number of vertices. Charged matter degrees of freedom may also be included on the vertices, in which case $U_v(g)$ also implements a gauge transformation on the matter fields at vertex $v$.

For concreteness one may consider lattice $U(1)$ gauge theory \cite{Kogut:lattice, Banks:1975gq}, in which the (oriented) edge Hilbert space $L^2(S^1)$ is written in terms of a Fourier basis $|n\rangle_e$, and one defines an electric field operator $E_e$ and a Wilson line $W_{e}$ with action 
\ie
E_e \ket{n}_e = n \ket{n}_e \,,~~~~W_e \ket{n}_e = \ket{n-1}_e\,,~~~~W_e^\dagger \ket{n}_e = \ket{n+1}_e\,.
\fe
and commutation relations
\be
[E_e , W_{e'} ]= -\delta_{ee'} W_e\,,~~~~[E_e, W^\dagger_{e'}] =  \delta_{ee'}W_{e}^\dagger\,.
\ee
The unitary $U_v(\theta)$ that implements gauge transformations is given by
\ie
U_v(\theta) = \exp\left( -i \theta \sum_{e\ni v} \epsilon(e,v) E_e\right)\,,
\fe
where $\epsilon(e,v)$ is $+1$ if $e$ points away from $v$ and $-1$ if $e$ points toward $v.$
In this case, $\S=U(1)^{N_v}$, and the subspace $\H_\S$ is the gauge-invariant subspace satisfying the Gauss law $ \sum_{e\ni s} \epsilon(e,s) E_e=0$ at every vertex.

\paragraph{Example 4: Stabilizer code}
One begins with a collection of qubits, and picks an abelian subgroup $\S$ of the full Pauli group on all qubits, with the requirement that every element of $\S$ has an eigenvalue $+1.$ 
One then projects onto the simultaneous $+1$ eigenspace of every element of $\S$.
This is the same as summing over the action of the finite group $\S$.  The first two examples are special cases of this one.

\subsection{Haag duality}
\label{sec:lattice-HD}

Our first task will be to show that the commutation requirement in equation \eqref{eq:gauge-invariant-algebra} is redundant, and one can actually use the simpler identity
\begin{equation} \label{eq:projection-algebra}
    \A_{\S}(R) = \{P_\S a P_\S \, |\, a \in \A(R)\}.
\end{equation}

To see this, we first note that because $\S$ is a compact Lie group, it has a Haar measure $ds.$
The projection $P_\S$ can be implemented via the formula
\begin{equation} \label{eq:projector-integration}
    P_\S = \int ds\, U(s).
\end{equation}
To verify this, one uses the bi-invariance of the Haar measure and checks
\begin{equation}
    \left(\int ds\, U(s)\right)^2 = \int ds_1\, d s_2\, U(s_1) U(s_2) = \int ds_1\, ds_2\, U(s_1 s_2) = \int ds_1\, ds_2\, U(s_1) = \int ds_1\, U(s_1),
\end{equation}
together with the inversion-invariance of the Haar measure to obtain
\begin{equation}
    \left(\int ds\, U(s) \right)^{\dagger} = \int ds\, U(s^{-1}) = \int ds\, U(s).
\end{equation}
This verifies that $\int ds\, U(s)$ is an orthogonal projector.
It is easily seen that every vector $P_\S |\psi\rangle$ is invariant under the action of $\int ds\, U(s)$, and that every vector $\int ds\, U(s) |\psi\rangle$ is invariant under the action of $P_\S$; it follows that these two projections are equal.

Similarly one can show that for any operator $a$ acting on $\H$, the operator
\begin{equation} \label{eq:aS}
    a_\S \equiv \int ds\, U(s) a U^{\dagger}(s)
\end{equation}
commutes with every unitary $U(s)$ and thus also with $P_\S$. Moreover, by the factorization assumption \eqref{eq:factorizing-unitary}, if $a$ is in $\A(R)$, then so is $a_\S.$
If $a$ is already an operator commuting with each $U(s)$, then we clearly have $a = a_\S.$
Consequently, the set of operators in $\A(R)$ commuting with $\S$ is exactly the set of operators
\begin{equation}
    \{ a_{\S} \,|\, a \in \A(R)\}.
\end{equation}
We may therefore replace the definition in equation \eqref{eq:gauge-invariant-algebra} by
\begin{equation}
    \A_\S(R)
        = \{ P_\S a_\S P_\S | a \in \A(R)\}.
\end{equation}
But using equation \eqref{eq:projector-integration}, one obtains
\begin{align}
    \begin{split}\label{asa}
    P_\S a_\S P_\S
        & = \int ds_1\, ds_2\, ds_3\, U(s_1) U(s_2) a U(s_2^{-1}) U(s_3) \\
        & = \int ds_1\, ds_2\, ds_3\, U(s_1 s_2) a U(s_2^{-1} s_3) \\
        & = \int ds_1\, ds_3\, U(s_1) a U(s_3)\\
        & = P_\S a P_\S,
    \end{split}
\end{align}
where in the last line we have used the invariance of the Haar measure to substitute $s_3 \mapsto s_2 s_3$ and $s_1 \mapsto s_1 s_2^{-1}.$
This verifies equation \eqref{eq:projection-algebra}.

We now know that each $\A_\S(R)$ is of the form $P \A(R) P$ for a particular projection $P$.
This is reminiscent of a fundamental theorem in the study of von Neumann algebras, called ``compression duality.''
Given a von Neumann algebra $\A$ on a Hilbert space $\H$, and given a projection $P$, one defines $\A_P$ to be the algebra $P \A P$ acting on the image of the projection $P$.
In the case where $P$ is an element of $\A$, it is known --- see for example \cite[chapter 2.1]{dixmier2011neumann}--- that one has $(\A_P)' = (\A')_P.$
This is the statement we want for Haag duality (eq.\eqref{eq:gauge-duality}), but we cannot apply compression duality as it is presented in textbooks, because the  projection $P_\S$ is not contained in any of the local algebras $\A(R).$
Nevertheless, we will be able to follow the proof technique of \cite{dixmier2011neumann} to prove the direction $\A_\S(R)' \subseteq \A_\S(R')$ --- which does not actually require that the projection live in $\A(R)$ --- and we will be able to show the opposite inclusion by an independent argument.

To show $\A_\S(R)' \subseteq \A_\S(R')$, one wishes to show that for any operator $U' \in \A_\S(R)',$   there exists some $\hat{U}' \in \A(R')$ with $P_\S \hat{U}' P_\S = U'.$
Our first simplification will be to use the fact --- see e.g. \cite[chapter 1.3]{dixmier2011neumann} --- that every von Neumann algebra is generated by its unitary elements, so it suffices to take $ U'$ to be unitary.
This is a unitary operator acting within the invariant subspace $\H_\S$.
The operator $\hat{U}'$, which acts on the full space $\H$, is supposed to commute with every operator in $\A(R)$.
So for any $a \in \A(R)$ and any vector $|\psi\rangle \in \H,$ we should have
\begin{equation}
    \hat{U}' a P_\S |\psi\rangle = a \hat{U}' P_\S |\psi\rangle.
\end{equation}
We also want $P_{\S} \hat{U}' P_\S = U',$ and an easy way to guarantee this is to demand that $\hat{U}'$ acts on the support of $P_\S$ as $U'$, or in other words that $\hat{U}'P_\S=U'P_\S$.  This motivates the definition
\begin{equation} \label{eq:U-extension-definition}
    \hat{U}' a P_\S |\psi\rangle = a U' P_{\S} |\psi\rangle.
\end{equation}
This defines the action of $\hat{U}'$ on any vector of the form $a P_\S |\psi\rangle,$ and linearity and continuity tell us how it should act on the full subspace
\begin{equation} \label{eq:U-extension-subspace}
    V
        = \bar{\operatorname{span}\{a P_\S |\psi\rangle\,|\, a \in \A(R)\, ,|\psi\rangle \in \H\}}.
\end{equation}
To verify that this gives a well defined operator $\hat{U}',$ one computes
\begin{align}
    \begin{split}
    \left\lVert \hat{U}' \left( a P_\S |\psi\rangle - \tilde{a} P_\S |\tilde{\psi}\rangle \right) \right\rVert^2
        & = \left\lVert  a U' P_\S |\psi\rangle - \tilde{a} U' P_\S |\tilde{\psi}\rangle \right\rVert^2 \\
        & = \langle P_\S \psi | (U')^{\dagger} P_\S a^{\dagger} a P_\S U' |P_{\S} \psi\rangle
            + \text{terms with $a \mapsto \tilde{a}, |\psi\rangle \mapsto |\tilde{\psi}\rangle$}.
    \end{split}
\end{align}
In each of these terms, $U'$ may be commuted through the operator of the form $P_\S (\dots) P_\S,$ and by unitarity, it may be annihilated against $(U')^{\dagger}.$
This gives
\begin{align}
    \begin{split}
    \left \lVert \hat{U}' \left( a P_\S |\psi\rangle - \tilde{a} P_\S |\tilde{\psi}\rangle \right) \right \rVert^2
        & = \left \lVert \left( a P_\S |\psi\rangle - \tilde{a} P_\S |\tilde{\psi}\rangle \right) \right \rVert^2,
    \end{split}
\end{align}
from which one sees that $\hat{U}'$ is well defined on the subspace $V$ from equation \eqref{eq:U-extension-subspace}.
To complete the definition of $\hat{U}'$, we define it to vanish on the orthocomplementary subspace $V^{\perp}$.

By construction, we clearly have $P_\S \hat{U}' P_\S = U'.$
All that remains to demonstrate $\A_\S(R)' \subseteq \A_\S(R')$ is to show that $\hat{U}'$ is in $\A(R').$
Since Haag duality holds in the parent Hilbert space $\H$, we need only show that $\hat{U}'$ commutes with every $a \in \A(R).$
The definition given in equation \eqref{eq:U-extension-definition} clearly shows that $[\hat{U}', a]$ vanishes on the subspace $V$.
We must show that it also vanishes on the $V^{\perp}$, and since we have defined $\hat{U}'$ to vanish on $V^{\perp}$, this entails showing
\begin{equation}
    \hat{U}' a  |\psi\rangle = 0, \qquad |\psi\rangle \in V^{\perp}.
\end{equation}
If we let $Q$ be the projector onto $V$, then we may equivalently show the operator equation
\begin{equation}
    \hat{U}' a (1-Q) = 0.
\end{equation}
But $V$ is a subspace left invariant under the action of $\A(R)$, and it is an elementary calculation to show that any projection onto an invariant subspace for $\A(R)$ lives in the commutant algebra $\A(R)'.$\footnote{For $\A$ a von Neumann algebra with invariant subspace $V$ and projection $P$, one clearly has $P a P = a P$ and $P a^{\dagger} P = a^{\dagger} P$ for all $a \in \A,$ from which one may compute
\begin{equation}
    a P = P a P = (P a^{\dagger} P)^{\dagger} = (a^{\dagger} P)^{\dagger} = P a.
\end{equation}}
So $Q$ commutes with $a,$ hence $(1-Q)$ commutes with $a,$ and we have
\begin{equation}
    \hat{U}' a (1-Q) = \hat{U}' (1-Q) a = 0,
\end{equation}
as desired.
This completes the proof that $\hat{U}'$ is in $\A(R'),$ and hence completes the proof of the inclusion $\A_\S(R)' \subseteq \A_\S(R').$

The opposite inclusion is much easier.
For any operator $a' \in \A(R')$ and any $a \in \A(R)$, the projection $P_\S$ commutes with $a_\S$ and $a'_\S,$ so we have
\begin{align} \label{eq:gauge-projective-locality}
    \begin{split}
        (P_\S a' P_\S) (P_\S a P_\S)
            & = (P_\S a'_\S P_\S) (P_\S a_\S P_\S) \\
            & = P_\S a'_\S a_\S P_\S \\
            & = P_\S a_\S a'_\S P_\S \\
            & = (P_\S a_\S P_\S) (P_\S a'_\S P_\S) \\
            & = (P_\S a P_\S) (P_\S a' P_\S),
    \end{split}
\end{align}
where we have used $a_\S\in \A(R), a'_\S\in \A(R')$ and microcausality in the parent algebra. 
From equation \eqref{eq:gauge-projective-locality} we conclude that every operator in $\A_\S(R')$ commutes with every operator in $\A_\S(R)$; this gives the $\supseteq$ inclusion in equation \eqref{eq:gauge-duality}, and we have established Haag duality for the invariant algebras.

\subsection{Disjoint additivity}
\label{sec:lattice-DA}
\begin{figure}
\centering
\begin{tikzpicture}
    \draw [ultra thick] (-2, -1) circle (1);
    \draw [ultra thick] (2, -1) circle (1);
    \fill[color=RoyalBlue, opacity=0.4] (-2, -1) circle (1.2);
    \draw [dashed] (-2, -1) circle (1.2);
    \draw [dashed] (2, -1) circle (1.2);
    \fill[color=RoyalBlue, opacity=0.4] (2, -1) circle (1.2);

    \draw (-3.5, -2.5) rectangle ++(7, 3.5);
    
    \begin{scope}
    \path[clip] (-2, -1) circle (1)[insert path={(-3.5, -2.5) rectangle ++(7, 3.5)}];
    \path[clip] (2, -1) circle (1)[insert path={(-3.5, -2.5) rectangle ++(7, 3.5)}];
    \fill[color=BrickRed, opacity=0.2] (-3.5, -2.5) rectangle ++(7, 3.5);
    \end{scope}

    \node [RoyalBlue] at (-2, 0.5) {$\S_1$};
    \node [RoyalBlue] at (2, 0.5) {$\S_2$};
    \node [BrickRed] at (0, -1) {$\N$};
\end{tikzpicture}

\caption{The black circles denote two non-adjacent regions on a lattice system acted on by a group $\S$. The subgroup $\N$ acts only on the complementary region, shaded in red. The groups $\S_1$ and $\S_2$ act on the regions shaded in blue --- these are slight thickenings of $R_1$ and $R_2,$ but they do not overlap with one another. It is assumed that $\S_1, \S_2,$ and $\N$ collectively generate the full group $\S$.}

\label{fig:G1-G2}
\end{figure}
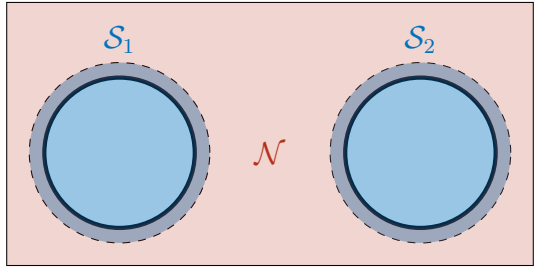
As emphasized in section \ref{sec:definitions}, we only expect disjoint additivity to hold on the lattice for regions that are ``non-adjacent.''
We will define $R_1$ and $R_2$ to be non-adjacent if $\S$ can be generated by a set of elements for which no single generator acts nontrivially on both $R_1$ and $R_2.$
Concretely, we fix non-overlapping regions $R_1$ and $R_2,$ and let $\N$ be the subgroup of $\S$ that acts trivially on both regions:
\begin{equation}
    \N \equiv \{s \in \S\,|\, U(s) \in \A(R_1 \cup R_2)'\}.
\end{equation}
We say $R_1$ and $R_2$ are \textit{non-adjacent} if there exist Lie subgroups $\S_1, \S_2 \subseteq \S$ such that (see figure \ref{fig:G1-G2})
\begin{enumerate}[(i)]
    \item $U(\S_1) \subseteq \A(R_2)'$ and $U(\S_2) \subseteq \A(R_1)'$;
    \item $\S_1$ and $\S_2$ commute, i.e., $s_1 s_2 = s_2 s_1$ for $s_{(1,2)} \in \S_{(1,2)}$;
    \item $\S_1, \S_2,$ and $\N$ collectively generate $\S$.
\end{enumerate}
These are the most general conditions under which we are able to prove disjoint additivity of $\A_\S(R_1)$ and $\A_\S(R_2)$.  In the most familiar lattice models with Lie group constraints however, such as lattice gauge theories and stabilizer codes, a stronger condition is true: the groups $\S_1, \S_2,$ and $\N$ do not just generate $\S$ --- we actually have the decomposition $\S = \S_1 \times \S_2 \times \N.$  In the main text we will therefore prove disjoint additivity under this additional assumption, with the substantially more technical proof of the general case relegated to appendix \ref{app:general-DA}.

The idea of the proof will be to show that the constraint projection $P_\S$ factorizes between operators in $R_1$ and $R_2$ --- that is, we wish to show
\begin{equation} \label{eq:projection-factorization}
    P_\S (O_1 O_2) P_\S = (P_\S O_1 P_\S) (P_\S O_2 P_\S),
\end{equation}
for $O_{(1,2)}\in \A(R_{(1,2)})$.
If we can prove this, it straightforwardly implies disjoint additivity.  Indeed by equation \eqref{eq:projection-algebra}, every operator in $\A_\S(R_1 \cup R_2)$ can be written as $P_{\S} T P_{\S}$ for an operator $T \in \A(R_1 \cup R_2).$
But by ordinary additivity $\A(R_1 \cup R_2) = \A(R_1) \vee \A(R_2)$, together with the commutativity of $\A(R_1)$ with $\A(R_2)$, every such $T$ can be written as a strongly convergent series of factorizing terms:
\begin{equation}
    T = \sum_{j} T_j^{(1)} T_j^{(2)}.
\end{equation}
Moving $P_{\S}$ into the sum gives
\begin{equation}
    P_{\S} T P_{\S}
        = \sum_{j} P_{\S} T_{j}^{(1)} T_{j}^{(2)} P_\S.
\end{equation}
So if equation \eqref{eq:projection-factorization} holds, then $P_\S T P_\S$ can be generated by elements of $\A_{\S}(R_1)$ and $\A_{\S}(R_2)$.
The inclusion $\A_\S(R_1 \cup R_2) \subseteq \A_\S(R_1) \vee \A_\S(R_2)$ follows, and the opposite inclusion is obvious.
We will therefore spend the rest of this section proving equation \eqref{eq:projection-factorization}.

To prove equation \eqref{eq:projection-factorization}, we use equation \eqref{asa} on both sides, and the fact that $a_\S$ and $P_\S$ commute, to rewrite it as
\begin{equation}
    P_\S (O_1 O_2)_\S P_\S
        = (P_\S (O_1)_\S P_\S) (P_\S (O_2)_\S P_\S) 
        = P_\S (O_1)_\S (O_2)_\S P_\S
\end{equation}
So it suffices to show $(O_1 O_2)_\S = (O_1)_\S (O_2)_\S$.
The left-hand side is defined as
\begin{equation} \label{eq:prefactorized-integral}
    (O_1 O_2)_\S
        = \int ds\, U(s) O_1 O_2 U(s)^{\dagger}.
\end{equation}
But because $\S$ factorizes as $\S_1 \times \S_2 \times \N,$ we can rewrite this integral as
\begin{equation}
    (O_1 O_2)_\S
        = \int_{\S_1} ds_1\, \int_{\S_2} ds_2\, \int_{\N} dx\, U(s_1 s_2 x) O_1 O_2 U(s_1 s_2 x)^{\dagger}.
\end{equation}
The $U(x)$ conjugation acts trivially on $O_1$ and $O_2,$ the $U(s_1)$ conjugation acts trivially on $O_2,$ and the $U(s_2)$ conjugation acts trivially on $O_1.$
This gives
\begin{equation}
    (O_1 O_2)_\S
        = \left[ \int_{\S_1} ds_1\, U(s_1) O_1 U(s_1)^{\dagger} \right] \left[  \int_{\S_2} ds_2\, U(s_2) O_2 U(s_2)^{\dagger} \right].
\end{equation}
Each integral on the right-hand side can be lifted to the full group $\S$ to obtain
\begin{align}
    \begin{split}
    (O_1 O_2)_\S
        & = (O_1)_\S (O_2)_\S,
    \end{split}\label{Osplit}
\end{align}
establishing equation \eqref{eq:projection-factorization} and completing the proof of disjoint additivity.  The general proof in appendix \ref{app:general-DA} still proceeds by establishing \eqref{Osplit}, but without the factorization of $\S$ we need to use more Lie theory to show it.

It is important to understand what this theorem does  and does not say.  Namely it promises that disjoint additivity holds for non-adjacent regions, but it does \textit{not} promise that continuum regions which are spatially disjoint in the non-relavistic ($\ol{R_1}\cap \ol{R_2}=\varnothing$) or relativistic ($\ol{R_1}\subseteq R_2'$ and $\ol{R_2}\subseteq R_1'$) senses defined in section \ref{sec:definitions} arise as the continuum limits of non-adjacent regions on the lattice.  For example nothing stops us from taking $\S$ to be a global symmetry, in which case all sites are adjacent to each other in the sense defined in this section.  In this case the proof we just gave does not imply that the continuum limit obeys disjoint additivity in the continuum sense, and indeed we will see in the next section that it does not.  For a Hamiltonian lattice theory to give rise to a continuum theory that obeys disjoint additivity we therefore need to introduce an additional requirement: we say that a lattice adjacency rule as defined by (i-iii) above is \textit{short-range} if regions whose minimal separation in lattice units goes to infinity with the system size are non-adjacent for all but finitely many system sizes.  This condition holds for all lattice gauge theories, as well for the standard stabilizer constructions such as the toric code, but it does not hold if we take $\S$ to be a global symmetry.  

This discussion is illustrative of a broader issue: our lattice theorems on Haag duality and disjoint additivity are exact results about lattice systems with a finite number of degrees of freedom, but spatial locality cannot really be defined unambiguously for a lattice system with a finite number of sites since we can always just think of it as living in $0+1$ spacetime dimensions.  The dimension of spacetime only becomes clear in a limit where the number of lattice sites goes to infinity, due to taking a continuum and/or thermodynamic limit.

\section{Some examples of violation}
\label{sec:violation-examples}

So far we have shown that Haag duality and disjoint additivity hold for a large class of theories that are usually considered to be local.  This is true even when additivity is violated due to higher-form symmetries. Another way to add credibility to our proposal of Haag duality and disjoint additivity as locality principles is to show that they are violated in theories which are expected to have some non-local character.
In particular, one or both of these principles should be violated in examples (1)-(4) from the introduction.  We will now argue that this is indeed the case.

\bfig
\includegraphics[height=7cm]{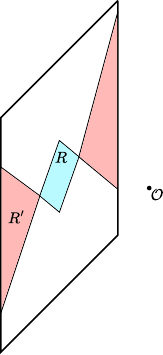}
\caption{Violating Haag duality in the hypersurface theory: if a local operator $\mathcal{O}$ is placed at a point off of the hypersurface which is spacelike to $R$, then $\mathcal{O}$ is in $\A(R)'$. If we also choose $\mathcal{O}$ to be spacelike separated from $R'$, however, then it cannot be in $\A(R')$. If it were, then $\mathcal{O}$ would commute with everything in a small neighborhood of itself, and in a field theory with nontrivial local operators this cannot be true. Thus we have $\A(R)' \neq \A(R')$.
The same figure also shows a violation of Haag duality for any generalized free field theory obtained as the boundary limit of a free field in $AdS_{d+1}$.}\label{hypersurfacefig}
\efig

\subsection{Restriction to a hyperplane}

We first consider example (1) from the introduction --- the ``quantum field theory'' obtained by taking a standard quantum field theory, then assigning local algebras only to regions obeying $R=R''$ within the hypersurface $x=0$ in $d$-dimensional Minkowski space (see figure \ref{hypersurfacefig}).
To be precise, we consider the theory whose Hilbert space is the Hilbert space of the full $d$-dimensional quantum field theory, and where a region $R$ is assigned the algebra of all operators that can be expressed in an arbitrarily small thickening of $R$.
In other words, $\A(R)$ is the set of all operators that can be expressed by smearing fields in $R \times (-\epsilon, \epsilon)$ for $\epsilon$ arbitrarily small.
The timelike tube theorem \cite{Borchers:tube, Strohmaier:2023opz} tells us that the algebra of the full hypersurface is equal to the algebra of all operators on Hilbert space.
The local algebras $\A(R)$ clearly obey nesting and spacelike commutativity, and disjoint additivity is inherited from the higher-dimensional theory.  However, they do not obey Haag duality, as illustrated in figure \ref{hypersurfacefig}.\footnote{The violation requires the existence of a point spacelike to $R$ and $R'$.  Such points exist for generic choices of $R$.  For example in $2+1$ dimensions we can parameterize the half-space using coordinates $(t,x,z)$ with $z \geq 0$.  Choosing $R$ to be the domain of dependence within the hypersurface $z=0$ of the interval $-1<x<1$ at $t=0$, any point $(0,0,z)$ with $z>1$ is spacelike to both $R$ and $R'$.  If we interpret the half-space as the Poincare patch of AdS, as we should for the generalized free field of the following subsection, then the statement that such points exist is equivalent to saying that either $R$ or $R'$ has the property that its causal wedge is smaller than its entanglement wedge, which is the generic situation.}

\subsection{Generalized free fields}

Our next example, example (2) from the introduction, is a generalized free field.  This is a field with Gaussian correlators that obeys microcausality, but that does not obey any equation of motion.  For example in $d$-dimensional Minkowski space, a scalar generalized free field has Heisenberg representation
\be
\Phi(x)=\int \frac{d^d k}{(2\pi)^d} e^{i k\cdot x}f(k^2)a_k+\mathrm{h.c.},
\ee
where $k$ is restricted to be timelike or null and future-pointing, but is not required to obey an on-shell condition. The annihilation operator $a_k$ obeys the usual algebra
\be
[a_k,a_{k'}^\dagger]=(2\pi)^d \delta^d(k-k'),
\ee 
and $f$ is an arbitrary function.  $\A(R)$ is defined as the algebra generated by bounded functions of the operators obtained by smearing $\Phi$ against smooth functions compactly supported in $R$.  It is not difficult to check that this field obeys the microcausality condition \eqref{mccond}.  

A nice way to obtain such a field is as the boundary limit 
\be
\Phi(x)=\lim_{z\to 0}z^{-(d/2+\nu)}\Phi_{\text{bulk}}(z,x)
\ee
of an ordinary free scalar field in $AdS_{d+1}$ spacetime with $\nu=\frac{1}{2}\sqrt{d^2+4m^2}$.  This gives a generalized free field with
\be
f(k^2)=\frac{\sqrt{\pi}}{2^{\nu}\Gamma(1+\nu)}\left(\sqrt{-k^2}\right)^\nu,
\ee
so by varying $\nu$ we can get a large class of examples.  In all of these examples, Haag duality is violated in just the same way as for the hypersurface theory: a local operator in the AdS field theory which is located away from the AdS boundary and spacelike to the boundary spacetime regions $R$ and $R'$, as in figure \ref{hypersurfacefig}, is in $\A(R)'$ but not in $\A(R')$.  

This violation of Haag duality was recently discussed in the context of large-$N$ gauge theories in \cite{Leutheusser:2024yvf}; our $\mathcal{A}(R)$ is the same as their ``single-trace algebra'' $\mathcal{Y}_R$. 
It was also noted in \cite{Leutheusser:2024yvf} that additivity as we have defined it is violated for these algebras, although a weaker ``set-theoretic'' version still holds. 
Disjoint additivity however is not violated, as $\A(R_1\cup R_2)$ is generated by field operators in $R_1$ and $R_2$ and thus by elements of $\A(R_2)$ and $\A(R_2)$.  It is interesting to consider what happens if we instead use the bulk entanglement wedge algebras, denoted  $\mathcal{X}_R$ in \cite{Leutheusser:2024yvf}, as our boundary algebras instead of the $\mathcal{Y}_R$: Haag duality is then restored, since the entanglement wedges of complementary regions are complementary, but disjoint additivity is now violated since the entanglement wedge of the union of two boundary regions that are spatially disjoint can contain points which are not in the entanglement wedge of either region separately.  For more on the emergence of bulk algebras from large-$N$ gauge theories see \cite{Kelly:2016edc,Faulkner:2020hzi,Leutheusser:2021frk,Leutheusser:additivity}.

\subsection{Invariant sector under a global symmetry}\label{sec:inv}

Example (3) from the introduction is the theory constructed from a standard quantum field theory with a compact internal global symmetry group $G$ by restricting to states that are invariant under $G$.
In this theory, the local algebras contain only the operators which are invariant under the symmetry.  In the notation of section \ref{sec:lattice-HD}, we take the local algebras to be $\A_G(R)$.  In a lattice setting this theory obeys Haag duality by the proof of section \ref{sec:lattice-HD}, and we expect the same to be true in the continuum.\footnote{In \cite{Shao:2025mfj} however it was shown that the symmetric sector with respect to a non-invertible global symmetry can violate Haag duality. }  It does not however obey disjoint additivity, as we will now explain.\footnote{This theory was previously discussed in \cite{Casini:2019kex,Casini:2020rgj,Shao:2025mfj,Jia:2025bui} (see also section III.4.2 of \cite{Haag:book}) as an example where additivity is violated; the new point here is that disjoint additivity is also violated, unlike for the examples of additivity violation that arise from higher-form global symmetries. See also \cite{Ji:2019jhk,Kong:2020cie,Ji:2021esj,Chatterjee:2022kxb,Liu:2022cxc,Chatterjee:2022jll,Jones:2023ptg,Jones:2023xew,Inamura:2023ldn,Jones:2024lws}  for recent discussions of such invariant sectors on the lattice  in the context of topological order.  }  

Before giving the explanation however, to avoid confusion we first emphasize that this construction does \textbf{not} give the theory we would get by \textit{gauging} the global symmetry $G$.  In the gauged theory there is a dynamical Wilson line that can connect pairs of charged operators, and also twisted sector operators (and states) that do not exist in the ungauged theory.  Moreover the Wilson line must be included in the support of any operator which contains it.  In particular for the situation shown in figure \ref{fig:symsector} there would need to be a Wilson line connecting $\mathcal{O}_a$ and $\mathcal{O}^\dagger_a$, and then the resulting operator would not lie in $\A(R_1\cup R_2)$, so disjoint additivity would not be violated.  In general we expect that gauging any non-anomalous global symmetry in a theory obeying Haag duality and disjoint additivity results in a theory that also obeys these conditions; section \ref{sec:lattice-HD} basically showed that this is the case in a lattice context.

 \begin{figure}
\centering
\begin{tikzpicture}[scale=0.7]  
    \draw[ultra thick] (45:2) arc[start angle=45, end angle=135, radius=2];

      \draw[ultra thick] (225:2) arc[start angle=225, end angle=315, radius=2];

    \draw[ultra thick,blue] (315:2) arc[start angle=-45, end angle=45, radius=2];

    \draw[ultra thick, blue] (135:2) arc[start angle=135, end angle=225, radius=2];

    \foreach \angle in { 45, 135,  225, 315} {
   \pgfmathsetmacro\x{2*cos(\angle)}
    \pgfmathsetmacro\y{2*sin(\angle)}
    \pgfmathsetmacro\dx{0.2*cos(\angle)} 
    \pgfmathsetmacro\dy{0.2*sin(\angle)}
        
    \draw[ultra thick] (\x-\dx, \y-\dy) -- (\x+\dx, \y+\dy);
    }

\filldraw[black] (-2,0) circle (3pt) node[anchor=east]{$\mathcal{O}_a(x_1)$};

\filldraw[black] (2,0) circle (3pt) node[anchor=west]{$\mathcal{O}_a^\dagger(x_2)$};

    \node at (1.6, 0) {\footnotesize \textbf{\color{blue}$\R_2$}};
    \node at (-1.6, 0) {\footnotesize \textbf{\color{blue}$\R_1$}};

\end{tikzpicture}
\caption{Disjoint additivity is violated in the invariant sector under a compact global symmetry group $G$.}\label{fig:symsector}
\end{figure}
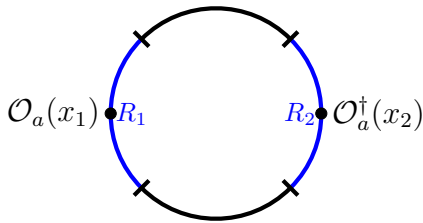
\bfig
\includegraphics[height=4cm]{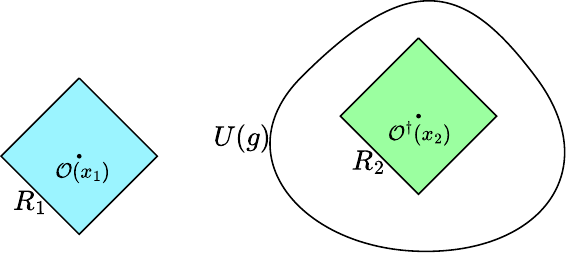}
\caption{A symmetry insertion under which the bilocal operator \eqref{OO} transforms nontrivially.}\label{pifig}
\efig
Returning now to the invariant sector theory with algebras $\A_G(R)$, let $R_1$ and $R_2$ be spatially disjoint regions either in the non-relativistic continuum sense that $\overline{R_1}\cap \overline{R_2}=\varnothing$ or in the relativistic continuum sense that $\overline{R_1}\subseteq R_2'$ and $\overline{R_2}\subseteq R_1'$, and let $x_1\in R_1$ and $x_2\in R_2$ (in the relativistic case we also assume $(R_1\cup R_2)''=R_1\cup R_2$).  Moreover, let $\mathcal{O}_a$ be a local operator transforming in a finite-dimensional representation of a compact internal symmetry group $G$.  All such representations are unitary, so the operator
\be\label{OO}
\sum_a \mathcal{O}_a(x_1) \mathcal{O}_a^\dagger(x_2)
\ee
is an invariant element of $\A_G(R_1\cup R_2)$. On the other hand it seems clear that it cannot be an element of $\A_G(R_1)\vee \A_G(R_2)$ since it is built out of operators in $R_1, R_2$ that are charged under the symmetry.  See figure \ref{fig:symsector} for an illustration.  This claim is most easily formalized using the path integral approach: the global symmetry $G$ is implemented by a codimension-one surface operator $U(g)$ that we can insert on a submanifold that links with $x_2$ but not $x_1$.  See figure \ref{pifig}.  The symmetry insertion causes the bilocal operator to acquire a nontrivial transformation, which would not be the case if it was built only out of neutral operators.  To convert this into an algebraic argument however we need the idea of \textit{splittability}: this is the statement that for any quantum field theory in  Minkowski space every global symmetry has the property that for each region $R$ we can introduce a localized symmetry operator $U(g,R)$ that acts like $U(g)$ on operators in $R$ and as the identity on operators in $R'$.\footnote{Strictly speaking if we want $U(g,R)$ to be a well-defined operator we should smooth out the transition around the boundary of $R$.  One way to do this is to only require $U(g,R)$ to act as the identity in a region $\hat{R}\subset R'$ such that there is a positive minimum distance between points in $R$ and points in $\hat{R}$.}  Splittability is automatic for lattice symmetries obeying our condition \eqref{eq:factorizing-unitary}, and for relativistic field theory in Minkowski space it has been proven under rather weak assumptions on the spectrum of the energy density \cite{Buchholz:1986dy,Buchholz:1986bg,Buchholz:1985ii}.\footnote{See sections 2.1-2.2 of \cite{Harlow:2018tng} for more discussion of splittability, where in particular it was pointed out that splittability can fail on spacetime manifolds other than $\mathbb{R}^d$ when higher-form global symmetries are present.}  We can then conjugate the bilocal operator by $U(g,R_1)$ or $U(g,R_2)$ to get the same nontrivial transformation we had in the path integral approach, which would not be possible if the operator lived in $\A_G(R_1)\vee \A_G(R_2)$.

\subsection{Virasoro identity multiplet}

Example (4) from the introduction is the ``Virasoro identity multiplet'' sector constructed from the energy-momentum tensor correlation functions of a full-fledged $1+1$ dimensional unitary CFT.  More precisely, the algebra $\mathcal{A}(R)$ is defined to be the von Neumann algebra generated by bounded functions of the smeared energy-momentum tensor. 
While this sector of operators is closed under OPE, they cannot satisfy disjoint additivity and Haag duality simultaneously. 
This follows from the result of \cite{Benedetti:2024dku}, which shows that any theory obeying both of these algebraic principles must be modular invariant.\footnote{Although the authors of \cite{Benedetti:2024dku} assume  additivity in their proof, it is straightforward to see that only disjoint additivity is required. It is also worth noting that not every modular-invariant theory satisfies additivity and Haag duality. One such example is the 1+1d $\mathbb{Z}_2$ gauge theory, which is a topological field theory (and thus a trivial example of a CFT). It is modular invariant (i.e., its torus partition function equals 2), but it violates additivity due to the presence of an unbreakable Wilson loop.  See footnote \ref{fn:BF}.} 
For $c\ge 1$, the Virasoro identity character (i.e., the torus zero-point conformal block) is $|\chi_0(q)|^2 ={|q|^{-\frac{c-1}{12}} \over|\eta(q)|^2} |1-q|^2$, 
which is not modular-invariant.
(Here $\eta(q)$ is the Dedekind eta function.)  
Therefore, the Virasoro identity multiplet has to violate either disjoint additivity or Haag duality, or both. 

There are special cases where we can be more explicit about the violation of Haag duality and/or disjoint additivity.  Indeed for the $c<1$ minimal models, as well as the more general (diagonal) rational CFTs, it was shown in \cite{Shao:2025mfj} that the identity multiplet for the chiral algebra cannot satisfy both disjoint additivity and Haag duality.\footnote{The only exceptions are tensor products of holomorphic and antiholomorphic CFTs,  such as the non-chiral $(\mathfrak{e}_8)_1$ WZW model or the Monster$\times \overline{\text{Monster}}$ CFT, where the identity multiplet with respect to the extended chiral algebra is the entire modular-invariant CFT.}
More specifically, it was shown that  disjoint additivity is violated if the RCFT fusion algebra contains an invertible subgroup, and Haag duality is violated if the total fusion algebra is not a group. 
For instance, the fusion algebra  of the Ising CFT is
\ie
\epsilon\times \epsilon=1\,,\quad\sigma\times\epsilon=\epsilon\times\sigma = \sigma\,,\quad \sigma\times \sigma =1+\epsilon.
\fe
This is not a group, but it contains a $\mathbb{Z}_2$ subgroup generated by $\epsilon$. It follows that  the Virasoro identity multiplet violates both disjoint additivity and Haag duality.\footnote{In fact, the identity multiplet of a diagonal RCFT is the invariant sector under the non-invertible global symmetry generated by the Verlinde lines \cite{Verlinde:1988sn,Petkova:2000ip,Chang:2018iay}. Therefore, this also counts as an example of the kind in Section \ref{sec:inv}. For instance, the Virasoro identity multiplet of the Ising CFT is the invariant sector of the non-invertible Kramers-Wannier symmetry \cite{Oshikawa:1996dj,Petkova:2000ip,Frohlich:2004ef,Chang:2018iay} (which includes a $\mathbb{Z}_2$ subgroup).} 

Finally we can discuss a similar nonlocal theory constructed from an irrational CFT: the free compact scalar field theory of $\phi$ with $c=1$ at a generic radius. 
The global symmetry contains two $U(1)$ symmetries, known as the momentum and winding symmetries. 
(See, for example,  \cite{Lin:2019kpn,Thorngren:2021yso,Cheng:2022sgb,Pace:2024oys} for recent reviews of $c=1$ CFTs and their symmetries.)
The $\mathfrak{u}(1)$ current algebra multiplets are labeled by their charges under these two $U(1)$ symmetries, and in particular the identity and its descendants are neutral.  On the other hand, $e^{in\phi}$ carries charge $n$ under the momentum $U(1)$, while the charged operators under the winding $U(1)$ involve the dual scalar field.  Our claim is that the theory obtained by defining the local algebras to be generated only by the energy-momentum tensor and the two $U(1)$ currents violates disjoint additivity. Although the bi-local operator $e^{i\phi(x_1)}e^{-i\phi(x_2)}$ is neutral under both $U(1)$ symmetries, and hence belongs to this identity multiplet, it cannot be expressed as a product of two operators from the identity multiplet supported in disjoint regions. Consequently, it violates disjoint additivity, analogous to the discussion around \eqref{OO}.

Thus we see that the combination of Haag duality and disjoint additivity is a powerful criterion for enforcing locality.

\section{Fermions}
\label{sec:fermions}
In this section we discuss how to generalize our locality rules to incorporate system with fermions.  In fermionic theories, the algebra of operators is graded by a $\mathbb{Z}_2$ symmetry operator $(-1)^F$, usually called fermion parity. 
An operator $\cal O$ is said to be bosonic if ${\cal O}(-1)^F = (-1)^F {\cal O}$, and fermionic  if ${\cal O} (-1)^F = - (-1)^F {\cal O}$.  In particular, $(-1)^F$ itself is bosonic.  In relativistic theories, $(-1)^F$ can be interpreted as a $2\pi$ spatial rotation in any plane.  In this section we will refer to the quantity appearing in the microcausality condition \eqref{mccond} as the \textit{supercommutator}
\be
[A,B]_\pm = AB - (-1)^{|A||B|}BA\,,
\ee
where $|A|=0$ ($=1$) if $A$ is bosonic (fermionic).  Given a set $\mathcal{X}$ of operators in a $\mathbb{Z}_2$-graded algebra on a Hilbert space $\mathcal{H}$, the \textit{supercommutant} $\mathcal{X}'$ is defined in the obvious way:
\be
\mathcal{X}'=\Big\{x'\in \mathcal{B}(\mathcal{H})\Big| \,[x,x']_{\pm}=0, \forall x\in \mathcal{X}\Big\}.
\ee
With commutants replaced by supercommutants, we advocate that the algebras of operators ${\cal A}({\cal R})$ in a local fermionic system should obey all the properties discussed in earlier sections, including disjoint additivity and Haag duality.

We now illustrate this definition using one of the simplest fermionic quantum systems, the Majorana chain in 1+1D \cite{Kitaev:2000nmw}. 
The space is a discrete set of lattice points labeled by $j=1,2,\cdots, L$, with a single Majorana operator $\chi_j$ at each site. 
These Majorana operators are self-adjoint and obey the standard Clifford algebra
\ie
\{ \chi_j , \chi_{j'} \} = 2\delta_{jj'}\,.
\fe
Importantly, we do not associate a local Hilbert space with a Majorana site; a Hilbert space appears after choosing a representation of the algebra, but the local operator algebras are generated by the Majorana degrees of freedom rather than by tensor factors of the representation.

When $L$ is even, the fermion parity operator can be defined as
\ie
(-1)^F = i^{L/2} \chi_1\chi_2\cdots \chi_L\,~~~~(\text{even}~L)\,,
\fe
where the phase $i^{L/2}$ is inserted to ensure $[(-1)^F]^2=1$.
The fermion parity operator obeys $(-1)^F \chi_j (-1)^F = - \chi_j$. 
There is a natural assignment of an algebra to each spatial region $\cal R$ (which is just a discrete set of sites), where $\A(\R)$ is defined to be the von Neumann algebra generated by $\chi_j$ with $j\in \R$. 
This algebra assignment automatically obeys microcausality and additivity (and hence disjoint additivity), and in appendix \ref{sec:Even_Majo_chain_Haag_duality} we show that it also obeys Haag duality.

The situation is more subtle when $L$ is odd.   
(See \cite{Lee:2013wya,Stanford:2019vob,Delmastro:2021xox,Witten:2023snr,Seiberg:2023cdc,Freed:2024apc,Seiberg:2025zqx} for various other subtleties  with an odd number of fermions.)
Let us start with the case of $L=3$ Majorana operators.  
Following \cite{Seiberg:2023cdc}, we consider two representations of the Clifford algebra. 
The first is an irreducible two-dimensional representation in terms of the Pauli matrices:
\ie
&\chi_1 = X ,~~~\chi_2 = Y ,~~~\chi_3 = Z .
\fe
However,  in this representation, there is no fermion parity operator $(-1)^F$ implementing the automorphism $\chi_j \to -\chi_j$. In other words, fermion parity is an outer automorphism. 
As a result, this representation is not $\mathbb{Z}_2$-graded, and it is not possible to consistently assign a fermion parity to every operator.\footnote{For instance, assigning all three Majorana operators $\chi_j$ to be fermionic is inconsistent with the relation $\chi_1 \chi_2 = i \chi_3$ in this irreducible representation.}
Consequently the supercommutant is not defined, so Haag duality cannot be formulated for this representation.\footnote{It is also inconsistent to use the ordinary commutator in an ungraded Hilbert space to discuss microcausality and Haag duality;  this would incorrectly require two different fermions $\chi_{j_1}$ and $\chi_{j_2}$ to commute, rather than anticommute. }

We thus consider a second representation, which is four-dimensional and reducible:
\ie
\label{eqn:3_Majo_Rep}
&\chi_1 = X \otimes \mathbf{1} ,~~~\chi_2 =Y\otimes \mathbf{1} ,~~~\chi_3 = Z\otimes X.
\fe
The automorphism $\chi_j\to -\chi_j$ is inner in this representation; it is implemented by 
\ie
(-1)^F=  Z\otimes Z \,.
\fe
Hence, this representation is  $\mathbb{Z}_2$-graded, allowing us to define the supercommutant and examine Haag duality. 
We define ${\cal A}( R)$ in the same way as in the even $L$ case so that additivity is obeyed.  
 For example, let $\R$ be the subregion with sites 
  $j\in\{1,2\}$, then $\A(\R)$ is generated by $\{\chi_1,\chi_2\}$ and $\A(\R^\prime)$ is generated by $\{\chi_3\}$.
 Now, consider the operator $\mathcal{O}=Z\otimes Y$. 
 It is a fermion, and 
 is in $\mathcal{A}(R)^\prime$ but not in $\mathcal{A}(R^\prime)$. 
 Therefore, Haag duality is violated.
 Another way to see the violation is to note that $\cal O$ supercommutes  with every operator generated by $\chi_1,\chi_2,\chi_3$, and is therefore in $\A(\{1,2,3\})'$.
However, it is not the identity operator. 
This causes a violation of Haag duality for the entire space, since we required the algebra of the empty set to be trivial (see equation \eqref{Aempty}).  Either way this violation has a straightforward interpretation: this representation really has four fermions, with $\chi_4=Z\otimes Y$, and ``forgetting'' this fermion got us in trouble just as ``forgetting'' the rest of spacetime in the hypersurface/generalized free field theories or the charged operators under a global symmetry did.  

 This discussion can be trivially generalized to $L=2\ell-1$ fermions. 
 If we choose to work with the $2^{\ell-1}$-dimensional irreducible representation of the Clifford algebra, we cannot formulate the notion of Haag duality because of the lack of a $\mathbb{Z}_2$ grading. 
 Instead, we work with the $2^{\ell}$-dimensional, reducible, $\mathbb{Z}_2$-graded representation, given explicitly in \eqref{eqn:Majo_Rep}. 
 Then the non-identity operator $\mathcal{O}=(Z)^{\otimes \ell-1}\otimes Y$ is in $\A(\{1,2,\cdots, L\})'$, violating Haag duality for the entire space. 
We can again interpret the violation operator as a   $2\ell$-th fermion $\chi_{2\ell}$ that we irresponsibly ``forgot''. 

We conclude that both (disjoint) additivity and Haag duality hold for the Majorana chain with an even number of Majorana fermions, but they cannot be simultaneously satisfied when the number of fermions is odd.
Therefore the odd Majorana chain is not a local quantum theory according to our rules, although it makes perfect sense as a quantum mechanical system.

Finally, let us comment on continuum field theories with fermions. 
While the number of fermion modes in field theory is always infinite, it is meaningful to count the number of fermion zero modes. 
For instance, consider the 1+1d Lagrangian for a free, massless, Majorana fermion
\ie 
{\cal L} ={ i \over2}\chi_\text{L}(\partial_t -\partial _x )\chi_\text{L}
+{i\over2}\chi_\text{R} (\partial_t +\partial _x )\chi_\text{R}
\,.
\fe
Here $\chi_\text{L}$ and $\chi_\text{R}$ are the one-component, left- and right-moving Majorana-Weyl fermions, respectively. They pair up to be a non-chiral Majorana fermion.

Let space be a circle. 
This field theory arises from the continuum limit of the Majorana chain with the following gapless Hamiltonian 
\ie
H=  i \sum_{j=1}^L \chi_j \chi_{j+1}\,.
\fe
More specifically, the even-$L$ lattice model corresponds to imposing periodic boundary conditions for both $\chi_\text{L}$ and $\chi_\text{R}$.  
This is known as the Ramond-Ramond (RR) boundary condition in string theory. 
In this case, there are two fermion zero modes, one from each of $\chi_\text{L}$ and $\chi_\text{R}$. 
If, on the other hand, $L$ is odd, the lattice model corresponds to imposing anti-periodic and periodic 
boundary conditions for $\chi_\text{L}$ and $\chi_\text{R}$, respectively \cite{Seiberg:2023cdc}. 
This is known as the Neveu-Schwarz-Ramond (NSR) boundary condition in string theory. 
This theory has a single fermion zero mode from $\chi_\text{R}$, while all the non-zero modes come in pairs.\footnote{A similar field theory is the (chiral) Majorana-Weyl fermion of $\chi_\text{R}$ in 1+1d on a spatial circle with periodic (Ramond) boundary condition. 
It also has an unpaired fermion zero mode. 
However, this field theory has a gravitational anomaly and cannot arise from simple lattice models.} 

Since the theory with the NSR boundary condition arises from the odd-$L$ Majorana chain, we expect a similar violation of Haag duality. 
Indeed, this follows from quantizing the unpaired fermion zero mode, denoted as $\tilde \chi$, on a spatial circle.
Similar to our lattice discussion, we have two options: 
(1) We  choose   an irreducible representation for $\tilde \chi$ acting on the one-dimensional ground space. Then $\tilde\chi$ acts as the identity operator, and we cannot define a fermion parity $(-1)^F$ operator. It follows that the supercommutator cannot be defined and we cannot formulate Haag duality in this case. 
(2) We choose a reducible representation for $\tilde\chi$ acting on the two-dimensional ground space. For instance, we can choose $\tilde\chi = X$ and $(-1)^F=Z$ on this ground space. But then there is an operator ${\cal O}=Y$ which supercommutes with every other operator. This again causes a violation ${\cal A}(S^1)'\neq{\cal A}(\varnothing)$ of Haag duality under the axiom \eqref{Aempty} that the empty set algebra is trivial.  This  shows that  a single free Majorana fermion with the NSR boundary condition is  nonlocal as a 1+1d quantum theory according to our rules. 
This nonlocality is also reflected in the thermal partition function of this theory computed from the path integral presentation. This calculation produces an overall factor of $\sqrt{2}$, which defies a standard Hilbert space interpretation. See \cite{Stanford:2019vob,Delmastro:2021xox,Witten:2023snr,Seiberg:2023cdc,Freed:2024apc} for recent discussions on this point.

\section{Disjoint additivity restoration and SymTFT}
\label{sec:SymTFT}

In sections \ref{sec:violation-examples} and \ref{sec:fermions} we constructed some examples of algebra assignments to regions which obey microcausality but not Haag duality and/or disjoint additivity.  From our point of view the quantum theories equipped with these algebra assignments are not local.  Each of these examples was constructed by ``forgetting'' some of the operators in a genuinely local theory, so the obvious way to restore locality is to ``remember'' the operators that we forgot.  In this section however we show that there is sometimes another possibility: instead of changing the algebras, we can change the regions to which they are assigned.  More concretely, we will show that a nonlocal algebra assignment in $d$ spacetime dimensions can sometimes be given a local interpretation in $d+1$ dimensions.  

The example we consider is closely related to an idea in the quantum field literature called Symmetry Topological Field Theory (SymTFT).  
For conventional symmetries, the symmetry algebra and possible operator transformations are determined by the symmetry group $G$.  For more general notions of symmetry, the symmetry algebra, as well any generalized 't Hooft anomalies, are encoded in certain categorical data.  
The idea of SymTFT is to  specify a topological field theory in one dimension higher whose topological defects capture all of this symmetry data on the boundary.    The simplest example of this idea is that when we have a standard zero-form symmetry with a finite symmetry group $G$, the SymTFT is just the $G$-gauge theory in one dimension higher. 
See \cite{Gaiotto:2014kfa,Kong:2015flk,Freed:2018cec,Pulmann:2019vrw,Ji:2019jhk,Kong:2020cie,Gaiotto:2020iye,Apruzzi:2021nmk,Lin:2022dhv,Freed:2022qnc} for an incomplete list of references on SymTFTs, and also \cite{Elitzur:1989nr} for earlier, related setups.

In this section, we will consider again the algebras $\A_G(R)$ of operators that are invariant under a compact internal global symmetry group $G$ in a local field theory on a $d$-dimensional spacetime $M$.
In section \ref{sec:violation-examples}, we saw that these algebras violate disjoint additivity.   What we will argue is that when $G$ is a finite group these algebras can be re-interpreted as local algebras for a $G$ gauge theory on a contractible $(d+1)$-dimensional spacetime $N$ obeying $M=\partial N$, with  the theory we started with at the boundary of $N$.  With this re-interpretation, disjoint additivity is now satisfied, as we should expect since this theory manifestly respects locality.  

One way to interpret this result, and indeed also our hyperplane and generalized free field examples from section \ref{sec:violation-examples}, is that ``a local theory should have the decency to know in which spacetime dimension it lives.''

\subsection{A continuum example}
As a warmup we will first consider the theory of a charged scalar field in $d$ dimensions on a spacetime $M$ coupled to a $d+1$ dimensional $U(1)$ gauge field on a spacetime $N$ such that $M=\partial N$.  The bulk Lagrangian density is
\be
L=-\frac{1}{2e^2}F\wedge \star F
\ee
and the boundary Lagrangian density is
\be
\ell=-D\phi^*\wedge \star D\phi,
\ee
with
\begin{align}\nonumber
D\phi&=d\phi-iq A \phi\\
D\phi^*&=d\phi^*+iq A \phi^*.
\end{align}
In our conventions $q$ is an integer.  The variation of the action is\footnote{It is easier to check this if you know that if $\omega$ and $\sigma$ are $p$-forms then $\omega\wedge \star \sigma=\sigma\wedge \star \omega$.}
\be\label{varS}
\delta S=-\frac{1}{e^2}\int_N \delta A\wedge d\star F+\int_M\left( \delta \phi^*D\hspace{-.1cm}\star \hspace{-.1cm}D\phi+\delta\phi D\hspace{-.1cm}\star \hspace{-.1cm}D\phi^*+\delta A\wedge\left(\star\hspace{0cm} J-\frac{1}{e^2}\star F\right)\right),
\ee
with 
\be
\star J=iq\big(\phi \star \hspace{-.1cm}D\phi^*-\phi^*\star\hspace{-.1cm} D\phi\big)
\ee
being the boundary electric current, and as usual we have dropped boundary terms in the future/past (see e.g. \cite{Harlow:2019yfa}).  The first three terms in \eqref{varS} vanish by the bulk and boundary equations of motion, so to get a good variational theory we need to choose boundary conditions to make the last term vanish.  One way to do this is to adopt Dirichlet boundary conditions, where we fix the pullback of $A$ to $M$ so that $\delta A|_M=0$.  With this choice however the bulk and boundary are decoupled, which is not interesting for our purposes.  A better choice is to instead set
\be\label{U1bc}
\frac{1}{e^2}\star F|_M=\star J,
\ee
which are the boundary conditions for the surface of a perfect conductor (which indeed has mobile surface charges).  These boundary conditions are gauge-invariant, so to get a well-defined theory we need to quotient by all gauge transformations, including those that are nontrivial on $M$.\footnote{More formally, this follows because all infinitesimal gauge transformations are zero modes of the presymplectic form. In the notation of \cite{Harlow:2019yfa}, we have 
\be
X_\epsilon\cdot \wt{\Omega}=\int_{\partial\Sigma}\epsilon \,\delta\left(\frac{1}{e^2}\star F-\star J\right),
\ee
which vanishes identically by the boundary condition \eqref{U1bc} for all infinitesimal gauge transformations $\epsilon$.
}We thus need to restrict to boundary operators which are invariant under the $U(1)$ gauge symmetry.  

In this example however the local algebras do not coincide with the ones we would construct by treating $U(1)$ as a global symmetry on the boundary and going to the invariant sector.  For one thing there are nontrivial bulk operators such as the field strength $F$, and for another there is a dynamical boundary Wilson line. This is related to the fact that the standard SymTFT formalism so far only seems to work for finite symmetries.  For example there is no standard notion of a topological $U(1)$ gauge theory that could be used as a SymTFT.\footnote{When the boundary dimension is two, the $U(1)_k$ Chern-Simons theory, for example, does not fit the bill. This is  because the Wilson line charges have periodicity $k$ while in the boundary theory all integer charges are distinct.} See \cite{Brennan:2024fgj,Antinucci:2024zjp,Apruzzi:2024htg,Bonetti:2024cjk,Gagliano:2024off,Jia:2025jmn} for recent ideas on how to formulate SymTFTs for continuous global symmetries.

To get an example that does work, we will take the same boundary theory but instead couple to a $\mathbb{Z}_p$ gauge field in the bulk.  This has bulk Lagrangian density
\be\label{ZpL}
L=-\frac{p}{2\pi}dA\wedge B,
\ee
where $A$ a $U(1)$ 1-form gauge field and $B$ is a $U(1)$ $(d-1)$-form gauge field \cite{Maldacena:2001ss,Banks:2010zn,Kapustin:2014gua}.  The variation of the action is now
\be\label{varS2}
\delta S=-\frac{p}{2\pi}\int_N \left(\delta A\wedge dB+dA\wedge \delta B\right)+\int_M\Big( \delta \phi^*D\hspace{-.1cm}\star \hspace{-.1cm}D\phi+\delta\phi D\hspace{-.1cm}\star \hspace{-.1cm}D\phi^*+\delta A\wedge\big(\star\hspace{-.1cm} J-\frac{p}{2\pi}B\big)\Big),
\ee
so to get a good variational principle we need either 
\be
\delta A=0
\ee
or 
\be\label{Zpbc}
\frac{p}{2\pi}B|_M=\star J.
\ee
The former again would decouple the bulk and boundary, so we will adopt the latter.  We therefore must quotient by all $A$ gauge transformations $A'=A+d\Lambda_A$, but only by $B$ gauge transformations $B'=B+d\Lambda_B$ where $\Lambda_B|_M=0$ (transformations where $\Lambda_B|_M$ is a nonzero closed form are also allowed but they are not gauged).\footnote{More formally, for infinitesimal gauge transformations $\epsilon_A$ and $\epsilon_B$ from the presymplectic form we have
\begin{align}\nonumber
X_{\epsilon_A}\cdot \wt{\Omega}&=\int_{\partial \Sigma}\epsilon_A\,\delta\left(\frac{p}{2\pi}B-\star J\right)\\
X_{\epsilon_B}\cdot \wt{\Omega}&=\frac{p}{2\pi}\int_{\partial \Sigma}\delta A \wedge \epsilon_B,
\end{align}
so we have a zero mode for any $\epsilon_A$ due to the boundary condition \eqref{Zpbc} but only for those $\epsilon_B$ which vanish at the boundary.
}
The $A$ quotient again projects onto neutral operators in the boundary theory, but now the Wilson line which connects them is topological so it can be deformed into the bulk.  Moreover the Wilson line is undetectable if the charge is a multiple of $p$, so local operators whose charges are multiples of $p$ are still allowed.\footnote{More explicitly we can define a local operator which removes an infinitesimal ball in spacetime and then puts $2\pi$ units of $dB$ flux on boundary of this ball.  This operator is normally trivial due to the equation of motion $dB=0$, but it can fuse with the endpoint of a Wilson line of charge $p$ to produce a gauge-invariant endpoint.  It can therefore also fuse with a matter operator of charge $p$ to get a gauge-invariant local operator.  If we view the $\mathbb{Z}_p$ gauge theory as arising from Higgsing a $U(1)$ gauge field, then this local operator is the phase of the Higgs field.}  

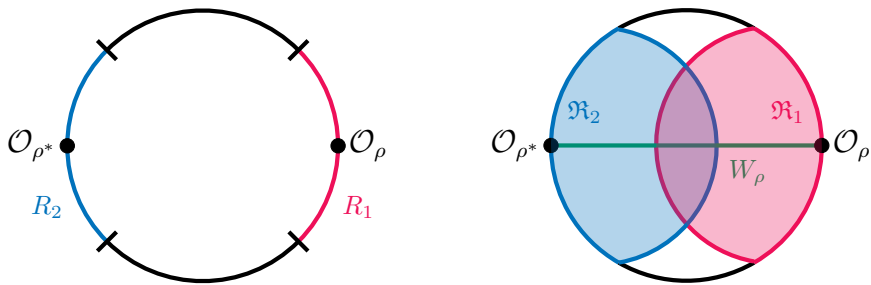
\begin{figure}[h!]
\centering
\raisebox{.5cm}{\begin{tikzpicture}[scale=.9]  
    \draw[ultra thick] (45:2) arc[start angle=45, end angle=135, radius=2];

      \draw[ultra thick] (225:2) arc[start angle=225, end angle=315, radius=2];

    \draw[ultra thick, OrangeRed] (315:2) arc[start angle=-45, end angle=45, radius=2];

    \draw[ultra thick, RoyalBlue] (135:2) arc[start angle=135, end angle=225, radius=2];

    \foreach \angle in { 45, 135,  225, 315} {
        
        \pgfmathsetmacro\x{2*cos(\angle)}
        \pgfmathsetmacro\y{2*sin(\angle)}

        \pgfmathsetmacro\dx{0.2*cos(\angle)}  
        \pgfmathsetmacro\dy{0.2*sin(\angle)}

        \draw[ultra thick] (\x-\dx, \y-\dy) -- (\x+\dx, \y+\dy);
    }

\filldraw[black] (-2,0) circle (3pt) node[anchor=east]{${\cal O}_{\rho^*}$};

\filldraw[black] (2,0) circle (3pt) node[anchor=west]{${\cal O}_{\rho}$};

    \node at (2.3, -.9) {\footnotesize \textbf{\color{OrangeRed}$\R_1$}};
    \node at (-2.3, -0.9) {\footnotesize \textbf{\color{RoyalBlue}$\R_2$}};

\end{tikzpicture}}
~~~~~
\raisebox{.6cm}{
\begin{tikzpicture}[scale=0.9]  
    \draw[ultra thick] (60:2) arc[start angle=60, end angle=120, radius=2];

      \draw[ultra thick] (240:2) arc[start angle=240, end angle=300, radius=2];

    \draw[ultra thick, OrangeRed] (300:2) arc[start angle=300, end angle=420, radius=2];

    \draw[ultra thick, RoyalBlue] (120:2) arc[start angle=120, end angle=240, radius=2];

    \draw[ultra thick,OrangeRed] (60:2) arc[start angle=100, end angle=260, radius=1.75];
    
    \draw[ultra thick,RoyalBlue] (240:2) arc[start angle=-80, end angle=80, radius=1.75];

 \draw [ultra thick, ForestGreen] (-2,0) -- (2,0);

  \filldraw[black] (-2,0) circle (3pt) node[anchor=east]{${\cal O}_{\rho^*}$};

\filldraw[black] (2,0) circle (3pt) node[anchor=west]{${\cal O}_{\rho}$};

\node at (0.9, -.4) {\footnotesize \textbf{\color{ForestGreen}$W_\rho$}};

\node at (1.5, 0.5) {\footnotesize \textbf{\color{OrangeRed}${\Rbulk}_1$}};

\node at (-1.5, 0.5) {\footnotesize \textbf{\color{RoyalBlue}${\Rbulk}_2$}};

  \fill[RoyalBlue, opacity=0.3]
    (120:2) arc[start angle=120, end angle=240, radius=2] 
    arc[start angle=-80, end angle=80, radius=1.75]
    -- cycle;

  \fill[OrangeRed, opacity=0.3]
    (300:2) arc[start angle=300, end angle=420, radius=2]
    arc[start angle=100, end angle=260, radius=1.75]
    -- cycle;
    
\end{tikzpicture}
}
\caption{Restoration of disjoint additivity by coupling to a bulk gauge field. Left: A pair of charged local operators violates disjoint additivity in the $G$-symmetric sector ${\cal A}_G$ of a boundary system because it belongs to ${\cal A}_G(\R_1\cup \R_2)$ but not in ${\cal A}_G(\R_1)\vee{\cal A}_G(\R_2)$ with $\ol{\R_1}\cap \ol{\R_2}=\emptyset$. Right: A 2+1D $G$ gauge theory on a disk coupled to the 1+1d  system by gauging $G$ on the boundary. The pair of charged operators is now connected by a Wilson line $W_\rho$ in the bulk.  Disjoint additivity is restored since the two regions ${\Rbulk}_1,{\Rbulk}_2$ in the bulk overlap, i.e., ${\Rbulk}_1\cap{\Rbulk}_2\neq \emptyset$.}\label{fig:restoration}
\end{figure} 
Specializing now to the case where $N$ is contractible, there are no gauge-invariant bulk degrees of freedom so the full Hilbert space of this theory is just the set of states obtained by acting on the vacuum with boundary operators that are invariant under the $\mathbb{Z}_p$ subgroup of $U(1)$.  This is precisely the same Hilbert space we would get by going to the invariant sector under the $\mathbb{Z}_p$ \textit{global} symmetry of the free complex scalar field, and for any $d+1$ dimensional region $R$ which does not intersect the boundary the local operator algebra $\A(R)$ is trivial.  So the nontrivial operator algebras are those where $R$ intersects the boundary, and it is natural to require this intersection to be equal to its double spacelike complement since the only local propagation in this theory is within the boundary.  On the other hand the bulk description allows for a new feature: a boundary region which is disconnected can be connected through the bulk.  This allows for bilocal operators in two boundary regions built out of operators which are charged under $\mathbb{Z}_p$ to be connected through a Wilson line in the bulk,  leading to a restoration of disjoint additivity as shown in figure \ref{fig:restoration}. 

\subsection{A lattice example}

We'll now demonstrate this restoration of disjoint additivity quite explicitly in the case of a 1+1D lattice model which was discussed in \cite{Ji:2019jhk}. 
In fact, the choice of the Hamiltonian will not be important for the following discussion of operator algebra. 
Consider a 1D spatial lattice on a circle where the qubits live on the edges $e=1,2,\cdots, L$. 
The Hilbert space is  a tensor product of local qubit Hilbert space, ${\cal H} = \bigotimes_{e=1}^L {\cal H}_e$, where ${\cal H}_e\simeq \mathbb{C}^2$, and is $2^L$-dimensional.
The full algebra of operator ${\cal A} (S^1) = \text{Mat}(2^L,\mathbb{C})$ is the algebra of $2^L\times 2^L$ matrices, which obeys (disjoint) additivity and Haag duality. 

Next, we restrict to the $\mathbb{Z}_2$-even sector with respect to the spin-flip operator  $U= \prod_{e=1}^L 
Z_e$. 
The $\mathbb{Z}_2$-even subalgebra (also known as the bond algebra) is 
\ie
{\cal A}_{\mathbb{Z}_2}(S^1)
&\equiv
\Big\{ 
{\cal O}\in \text{Mat}(2^L ,\mathbb{C}) ~\Big|  {\cal O} U =U{\cal O}
\Big\}
=  \Biggl\langle
~Z_e~,~ X_{e}X_{e+1}~
 \Biggr\rangle_{e=1,2,\cdots ,L} \,.
\fe
This subalgebra acts on the $\mathbb{Z}_2$-even subspace of ${\cal H}$, defined as ${\cal H}_{\mathbb{Z}_2} = \left\{ \ket{\phi} \in {\cal H} ~| ~U\ket{\phi}=\ket{\phi}\right\}$, which is $2^{L-1}$-dimensional and is  no longer a tensor product Hilbert space.  Defining regions $R_1=e_1$ and $R_2=e_2$, $X_{e_1} X_{e_2}$ belongs to ${\cal A}_{\mathbb{Z}_2}(\R_1\cup \R_2)$ but is not in ${\cal A}_{\mathbb{Z}_2}(\R_1)\vee{\cal A}_{\mathbb{Z}_2}(\R_2)$.  Disjoint additivity is therefore violated for any adjacency relation that does not say all links are adjacent; in particular it is violated for any adjacency relation which is ``short-range'' in the sense discussed at the end of section \ref{sec:lattice-DA}. 

 We now couple this boundary theory to a 2+1D lattice $\mathbb{Z}_2$ gauge theory in the bulk.  The spatial lattice has the topology of a two-disk $B^2$ and the boundary is a circle of $L$ edges. 
For simplicity, we assume the lattice takes the form of a ``wheel" with $2L$ edges in total. 
 While this does not look like a 2+1D system in the large $L$ limit, every statement we will make below holds true even if one adds more edges in the bulk. 
 In this simplifying 2D spatial lattice, the total Hilbert space is a tensor product of $2L$ qubits, and is $2^{2L}$-dimensional and the  entire operator algebra is the matrix algebra Mat$(2^{2L},\mathbb{C})$ of $2^{2L}\times 2^{2L}$ matrices.

\begin{figure}
\centering
\begin{tikzpicture}[scale=0.7]  
    \draw[ultra thick] (45:2) arc[start angle=45, end angle=135, radius=2];

      \draw[ultra thick] (225:2) arc[start angle=225, end angle=315, radius=2];

    \draw[ultra thick] (315:2) arc[start angle=-45, end angle=45, radius=2];

    \draw[ultra thick] (135:2) arc[start angle=135, end angle=225, radius=2];

    \draw [ultra thick] (45:2) -- (225:2);
    \draw [ultra thick] (90:2) -- (270:2);
    \draw [ultra thick] (135:2) -- (315:2);
    \draw [ultra thick] (180:2) -- (360:2);
    \draw [ultra thick] (225:2) -- (45:2);

    \pgfmathsetmacro\x{1.3*cos(0)}
    \pgfmathsetmacro\y{1.3*sin(0)}
    \node at (\x, \y) {\textbf{$X$}};

 \pgfmathsetmacro\x{1.3*cos(40)}
    \pgfmathsetmacro\y{1.3*sin(40)}
    \node at (\x, \y) {\textbf{$X$}};
 
 \pgfmathsetmacro\x{1.3*cos(90)}
    \pgfmathsetmacro\y{1.3*sin(90)}
    \node at (\x, \y) {\textbf{$X$}};

 \pgfmathsetmacro\x{1.3*cos(130)}
    \pgfmathsetmacro\y{1.3*sin(130)}
    \node at (\x, \y) {\textbf{$X$}};

 \pgfmathsetmacro\x{1.3*cos(180)}
    \pgfmathsetmacro\y{1.3*sin(180)}
    \node at (\x, \y) {\textbf{$X$}};

  \pgfmathsetmacro\x{1.3*cos(220)}
    \pgfmathsetmacro\y{1.3*sin(220)}
    \node at (\x, \y) {\textbf{$X$}};
    
 \pgfmathsetmacro\x{1.3*cos(270)}
    \pgfmathsetmacro\y{1.3*sin(270)}
    \node at (\x, \y) {\textbf{$X$}};

 \pgfmathsetmacro\x{1.3*cos(310)}
    \pgfmathsetmacro\y{1.3*sin(310)}
    \node at (\x, \y) {\textbf{$X$}};

\node at (3,0) {$=1$};
\end{tikzpicture}
~~~~
\begin{tikzpicture}[scale=0.7]  
    \draw[ultra thick] (45:2) arc[start angle=45, end angle=135, radius=2];

      \draw[ultra thick] (225:2) arc[start angle=225, end angle=315, radius=2];

    \draw[ultra thick] (315:2) arc[start angle=-45, end angle=45, radius=2];

    \draw[ultra thick] (135:2) arc[start angle=135, end angle=225, radius=2];

    \draw [ultra thick] (45:2) -- (225:2);
    \draw [ultra thick] (90:2) -- (270:2);
    \draw [ultra thick] (135:2) -- (315:2);
    \draw [ultra thick] (180:2) -- (360:2);
    \draw [ultra thick] (225:2) -- (45:2);

    \pgfmathsetmacro\x{1.3*cos(0)}
    \pgfmathsetmacro\y{1.3*sin(0)}
    \node at (\x, \y) {\textbf{$Z$}};

 \pgfmathsetmacro\x{1.3*cos(50)}
    \pgfmathsetmacro\y{1.3*sin(50)}
    \node at (\x, \y) {\textbf{$Z$}};
 
 \pgfmathsetmacro\x{2.1*cos(22.5)}
    \pgfmathsetmacro\y{2.1*sin(22.5)}
    \node at (\x, \y) {\textbf{$Z$}};

\node at (3,0) {$=1$};
\end{tikzpicture}
~~~~~~
\begin{tikzpicture}[scale=0.7]  
    \draw[ultra thick] (45:2) arc[start angle=45, end angle=135, radius=2];

      \draw[ultra thick] (225:2) arc[start angle=225, end angle=315, radius=2];

    \draw[ultra thick] (315:2) arc[start angle=-45, end angle=45, radius=2];

    \draw[ultra thick] (135:2) arc[start angle=135, end angle=225, radius=2];

    \draw [ultra thick] (45:2) -- (225:2);
    \draw [ultra thick] (90:2) -- (270:2);
    \draw [ultra thick] (135:2) -- (315:2);
    \draw [ultra thick] (180:2) -- (360:2);
    \draw [ultra thick] (225:2) -- (45:2);

     \pgfmathsetmacro\x{2.3*cos(22.5)}
    \pgfmathsetmacro\y{2.3*sin(22.5)}
    \node at (\x, \y) {\textbf{$Z$}};

    \pgfmathsetmacro\x{2.3*cos(200)}
    \pgfmathsetmacro\y{2.3*sin(200)}
    \node at (\x, \y) {\textbf{$X$}};

 \pgfmathsetmacro\x{1.3*cos(177)}
    \pgfmathsetmacro\y{1.3*sin(177)}
    \node at (\x, \y) {\textbf{$X$}};
 
 \pgfmathsetmacro\x{2.3*cos(180-22.5)}
    \pgfmathsetmacro\y{2.3*sin(180-22.5)}
    \node at (\x, \y) {\textbf{$X$}};
 
\end{tikzpicture}
\caption{Left and middle: Local constraints of a 2+1D topological $\mathbb{Z}_2$ gauge theory on a disk. Right: Generators of the algebra of operators ${\cal A}_\text{TFT}(B^2)$ that commute with these constraints.}\label{fig:wheel}
\end{figure}
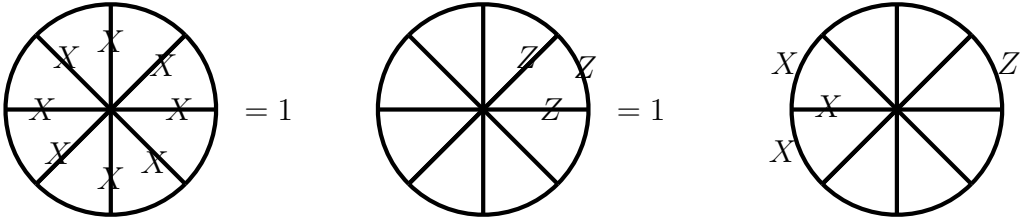

We impose the following constraint for every vertex in the bulk of the 2d spatial lattice:
 \ie\label{A}
 U_v=&\prod_{e\ni v} X_e =1 \,,~~~\forall ~e\notin \partial B^2\,.
 \fe
We do not however impose this constraint along the boundary circle, which is analogous to fact that in the continuum we only quotient by $B$ gauge transformations which vanish at the boundary (in this model the boundary charged matter lives on links, so the $B$ gauge field is analogous to the one that lives on bulk links).  We further impose another constraint for every face:
\ie\label{B}
U_f= &\prod_{e \in f} Z_e=1\,,~~~\forall ~f\,.
\fe
This can be interpreted in the continuum as the constraint for $A$ gauge transformations, and indeed we see that it can act nontrivially on boundary $X$ operators.  $U_v, U_f$ are precisely the local terms in the toric code Hamiltonian \cite{Kitaev:1997wr} (see Figure \ref{fig:wheel}).  Here we impose both of them strictly so that  the bulk theory is a topological $\mathbb{Z}_2$ gauge theory with no anyon excitations. Since we have $L+1$ constraints on this ``wheel" lattice, the  subspace in the total Hilbert space that respects these constraints is $2^{L-1}$-dimensional.

What is the algebra of operators  of this 2+1D system? 
It is easy to see that all the operators that obey both \eqref{A} and \eqref{B} are generated by $Z_e$ and triplets of $X_e$ on the boundary:
\ie\label{ATC}
{\cal A}_\text{TFT} (B^{2})&\equiv 
\Big\{
{\cal O}\in \text{Mat}(2^{2L},\mathbb{C})~\Big|~
{\cal O}U_v =U_v {\cal O} \,,~~\forall ~e\notin \partial B^2,~\text{and}~{\cal O}U_f =U_f{\cal O}\,,~~\forall ~f
\Big\}\\
&=
\Biggl\langle
~Z_e~,~ X_{e}X_{e+\frac12}X_{e+1}~
 \Biggr\rangle_{e\in \partial B^{2}}\,,
\fe 
where $e+\frac12$ stands for the vertical edge that intersects with $e$ and $e+1$. 
It is clear that ${\cal A}_\text{TFT} (B^2)$  is isomorphic to ${\cal A}_{\mathbb{Z}_2}(S^1)$ as abstract algebras:
\ie
{\cal A}_\text{TFT} (B^2) \cong {\cal A}_{\mathbb{Z}_2}(S^1)
\fe
For instance, a pair of $X_{e_1}X_{e_2}$ in ${\cal A}_{\mathbb{Z}_2}$ is mapped to a string of $X$'s trespassing the bulk in ${\cal A}_\text{TFT}$ as in Figure \ref{fig:restoration2}, which can be written as a product of the generators in \eqref{ATC}.  The analogue of this string in continuum language is a Wilson line for the $A$ gauge field.

While a pair of $X$'s violates disjoint additivity with respect to 1D regions, a string of $X$'s satisfies disjoint additivity with respect to 2D regions in the bulk. 
Indeed, as proven in section \ref{sec:lattice}, $\A_\text{TFT}(\Rbulk)$ obeys both Haag duality and disjoint additivity.  In particular the $2+1$ dimensional adjacency relation from the $U_v$ and $U_f$ constraints says that regions, i.e. collections of bulk/boundary links, which are not connected by a link are not adjacent, which is a ``short-range'' adjacency relation in the sense of section \ref{sec:lattice-DA}. We conclude that the natural spacetime dimension for this abstract algebra is 2+1D, where both Haag duality and disjoint additivity are obeyed.

\begin{figure}
\centering
\begin{tikzpicture}[scale=0.9]  
    \draw[ultra thick] (45:2) arc[start angle=45, end angle=135, radius=2];

      \draw[ultra thick] (225:2) arc[start angle=225, end angle=315, radius=2];

    \draw[ultra thick, OrangeRed] (315:2) arc[start angle=-45, end angle=45, radius=2];

    \draw[ultra thick, RoyalBlue] (135:2) arc[start angle=135, end angle=225, radius=2];

    \foreach \angle in {0, 45, 90, 135, 180, 225, 270, 315} {
    \pgfmathsetmacro\x{2*cos(\angle)}
        \pgfmathsetmacro\y{2*sin(\angle)}

        \pgfmathsetmacro\dx{0.2*cos(\angle)}  \pgfmathsetmacro\dy{0.2*sin(\angle)}
        \draw[ultra thick] (\x-\dx, \y-\dy) -- (\x+\dx, \y+\dy);
    }

    \pgfmathsetmacro\x{2.5*cos(22.5)}
    \pgfmathsetmacro\y{2.5*sin(22.5)}
    \node at (\x, \y) {\color{ForestGreen}\textbf{$X_{e_1}$}};

    \pgfmathsetmacro\x{2.5*cos(202.5)}
    \pgfmathsetmacro\y{2.5*sin(202.5)}
    \node at (\x, \y) {\color{ForestGreen}\textbf{$X_{e_2}$}};

    \node at (2.6, 0) {\footnotesize \textbf{\color{OrangeRed}$\R_1$}};
    \node at (-2.6, 0) {\footnotesize \textbf{\color{RoyalBlue}$\R_2$}};

\end{tikzpicture}~~~~~~
\begin{tikzpicture}[scale=0.9]  
    \draw[ultra thick] (45:2) arc[start angle=45, end angle=135, radius=2];

      \draw[ultra thick] (225:2) arc[start angle=225, end angle=315, radius=2];

    \draw[ultra thick] (315:2) arc[start angle=-45, end angle=45, radius=2];

    \draw[ultra thick] (135:2) arc[start angle=135, end angle=225, radius=2];

    \draw [ultra thick] (45:2) -- (225:2);
    \draw [ultra thick] (90:2) -- (270:2);
    \draw [ultra thick] (135:2) -- (315:2);
    \draw [ultra thick] (180:2) -- (360:2);
    \draw [ultra thick] (225:2) -- (45:2);

    \pgfmathsetmacro\x{2.4*cos(22.5)}
    \pgfmathsetmacro\y{2.4*sin(22.5)}
    \node at (\x, \y) {\color{ForestGreen}\textbf{$X_{e_1}$}};

    \pgfmathsetmacro\x{2.3*cos(202.5)}
    \pgfmathsetmacro\y{2.3*sin(202.5)}
    \node at (\x, \y) {\color{ForestGreen}\textbf{$X_{e_2}$}};
    
 \pgfmathsetmacro\x{1.3*cos(35)}
    \pgfmathsetmacro\y{1.3*sin(35)}
    \node at (\x, \y) {\color{ForestGreen}\textbf{$X$}};
 
 \pgfmathsetmacro\x{1.3*cos(80)}
    \pgfmathsetmacro\y{1.3*sin(80)}
    \node at (\x, \y) {\color{ForestGreen}\textbf{$X$}};

 \pgfmathsetmacro\x{1.3*cos(125)}
    \pgfmathsetmacro\y{1.3*sin(135)}
    \node at (\x, \y) {\color{ForestGreen}\textbf{$X$}};

 \pgfmathsetmacro\x{1.3*cos(170)}
    \pgfmathsetmacro\y{1.3*sin(170)}
    \node at (\x, \y) {\color{ForestGreen}\textbf{$X$}};
 
  \draw[ultra thick, ForestGreen,dashed] (45:.7) arc[start angle=45, end angle=180, radius=.7];
\draw [ultra thick, ForestGreen,dashed] (45:.7) -- (1.9,.5);
\draw [ultra thick, ForestGreen,dashed] (180:.7) -- (-1.8,-.8);

\end{tikzpicture}
\caption{Restoration of disjoint additivity in toric code. Left: Disjoint additivity is violated by a pair of Pauli $X_e$ operators in the $\mathbb{Z}_2$-even subalgebra ${\cal A}_{\mathbb{Z}_2}$ on a 1+1D lattice. Right: A string of $X$'s in the algebra ${\cal A}_\text{TFT}$ for the  2+1D topological $\mathbb{Z}_2$ gauge theory (i.e., low-energy limit of toric code) on a disk does not violate disjoint additivity.}\label{fig:restoration2}
\end{figure}
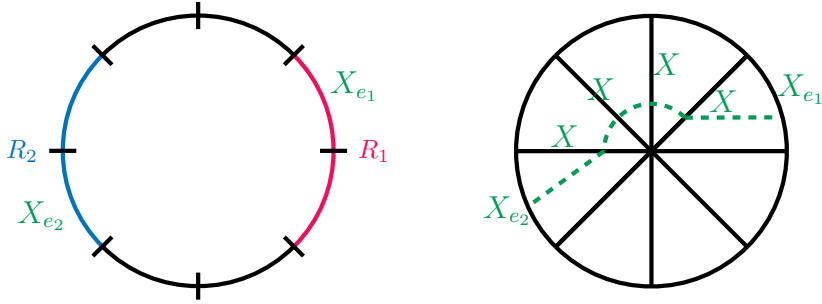

So far in this section we have discussed the invariant boundary algebras $\mathcal{A}_G(R)$ in the special case $G=\mathbb{Z}_p$.  In the continuum this allowed us to use the simple Lagrangian \eqref{ZpL}.  On the lattice however it is not hard to generalize our discussion to arbitrary finite group $G$.  To set this up however it is necessary to use a formulation which is dual to the lattice presentation we just gave in the $\mathbb{Z}_2$ case, so that the boundary degrees of freedom are on vertices rather than edges and the constraints which are not imposed at the boundary are the face constraints rather than the vertex constraints.  This is necessary because in the non-abelian case the face constraints do not form a group so we need to have the vertex constraints be the ones we impose at the boundary to get $\mathcal{A}_G(R)$.  Disjoint additivity is then restored in the same as as in the $\mathbb{Z}_2$ case: neutral pairs of charged operators on the boundary are connected through the bulk by a Wilson line, now with the Wilson line located on a connected path of edges as opposed to a path in the dual graph.

\section*{Acknowledgements}

We thank Netta Engelhardt, Nick Holfester, Corey Jones, Michael Levin, Dave Penneys, Salvatore Pace, Pedro Ribeiro, Nathan Seiberg, Sahand Seifnashri, and Xiao-Gang Wen for stimulating discussions.  
We also thank Hong Liu and Javier Magan for comments on the draft. 
The work of DH is supported by the Packard Foundation, the US Department of Energy under grants DE-SC0012567 and DE-SC0020360, and the MIT Department of Physics.  SHS is supported by the Simons Collaboration on Ultra-Quantum Matter, which is a grant from the Simons Foundation (651444, SHS). 
JS is supported by the DOE Early Career Award DE-SC0021886,  the Packard Foundation Award in Quantum Black Holes and Quantum Computation, the DOE QuantISED DE-SC0020360 contract 578218, and by the Heising-Simons Foundation grant 2023-443. 
MS is supported by the U.S. Department of Energy, Office of Science, Office of High Energy Physics of U.S. Department of Energy under grant Contract Number  DE-SC0012567 (High Energy Theory research) and by  MIT. Part of this work was completed during the Kavli Institute for Theoretical Physics (KITP) program ``Generalized Symmetries in Quantum Field Theory: High Energy Physics, Condensed Matter, and Quantum Gravity", which is supported in part by grant NSF PHY-2309135 to the  KITP. 
The authors of this paper were ordered alphabetically.

\appendix

\section{Covariant regions for algebraic field theory}
\label{app:causality}

In the main text we claimed that for a relativistic quantum field theory in a Lorentzian spacetime $M$, it is natural to assign algebras to regions $R\subseteq M$ obeying the requirement
\be
R''=R,
\ee
where $R'$ is the interior of the causal complement of $R$.  In this appendix we present some causal structure results which support this definition.

\subsection{General proposal}

We will first state some motivating principles:
\bi
\item [(1)] In relativistic field theory, fields are operator-valued distributions that must be smeared against smooth test functions of compact support.  It is natural to characterize such operators by the spacetime regions in which their supports lie, so the natural regions to consider are open spacetime regions. In particular we do not want closed regions, as then we would have to worry about what happens to the support of the function at the edge of the region.
We also do not want codimension-one spatial regions (for example closed achronal sets), because except for free fields these sets are too small for smearing to give a good operator (see e.g. \cite[section 2.2]{Witten:2023qsv}).
\item[(2)] We want the complement operation to send regions to other regions of the same type.  This is not true for the set-theoretic complement, which maps open sets to closed sets, or for the standard causal complement that we define presently which does the same, so we need a better complement that respects principle (1). 
\item[(3)] We would like our definitions to make sense in general spacetimes $M$ not obeying any particular causality requirements such as time orientability or global hyperbolicity.  For example in \cite{Harlow:2023hjb} it was recently argued that time-unorientable spacetimes must be included in quantum gravity due to the gauging of $\mathcal{CRT}$ symmetry, and global hyperbolicity can fail in portions of spacetime where quantum gravity becomes important, such as the final stage of black hole evaporation.
\ei

Now some standard definitions (see e.g. \cite{Wald:1984rg}, although we will not assume time orientability as is done there).  Let $M$ be a Lorentzian spacetime.  A continuous, piecewise-smooth curve $\gamma:(0,1)\to M$ is said to be \textit{chronal} if its tangent vector is timelike wherever it exists, and it is said to be \textit{causal} if its tangent vector is timelike or null wherever it exists.\footnote{In both cases, the tangent vector is not allowed to change its local time orientation across a non-smooth point of the curve.  Also strictly speaking we should include continuous curves obtained as limits of these piecewise-smooth ones, see e.g. the discussion on pages 192-193 of \cite{Wald:1984rg}.  This does not matter in the definitions of $I$, $D$, and $J$, but it matters whenever we invoke lemma 8.1.5 from \cite{Wald:1984rg} below.}  Given a subset $S\subseteq M$, we define $I(S)\subseteq M$ to be the set of points in $M$ that are connected to $S$ by a chronal curve, and we define $J(S)$ to be the set of points in $M$ that are connected to $S$ by a causal curve.  A set $S$ is said to be \textit{achronal} if no two distinct points in $S$ are connected by a chronal curve, and it is said to be \textit{acausal} if no two distinct points are connected by a causal curve.  A point $p\in M$ is said to be an \textit{endpoint} of a causal curve $\gamma$ if every neighborhood $U_p$ of $p$ contains either all points $\gamma(t)$ with $t<\epsilon$ or all points $\gamma(t)$ with $1-t<\epsilon$ for some $\epsilon>0$.  A causal curve $\gamma$ is said to be \textit{inextendible} if it has no endpoint.  The \textit{domain of dependence} of $S$, denoted $D[S]$, is the set of points with the property that every inextendible causal curve through that point also intersects $S$.  A closed achronal set $\Sigma$ is called a \textit{Cauchy surface} if $D(\Sigma)=M$, and $M$ is called \textit{globally hyperbolic} if it has a Cauchy surface.  If $M$ is time-oriented then we can further define $I^{\pm}(S)$ as the set of points connected to $S$ by a future/past-directed chronal curve, and $J^\pm(S)$ as the set of points connected to $S$ by a future/past-directed causal curve.

To implement principle (2) we need to define a complement operation.  The first such notion we can consider is the \textit{causal complement} (see section III.4 of \cite{Haag:book}), which for any subset $S\subseteq M$ is defined by
\be
S^c=M-J(S).
\ee
It is simple to show the inclusion 
\be\label{c1}
S\subseteq (S^c)^c,
\ee
and if this inclusion is saturated, then the subset $S$ is said to be \textit{causally complete}. It is also simple to show that $S^c$ is causally complete for any $S\subseteq M$:
\be\label{c2}
((S^c)^c)^c=S^c
\ee
Equations \eqref{c1} and \eqref{c2} are reminiscent of two properties of von Neumman algebras: the double commutant property $\A''=\A$, and the fact that $\mathcal{X}'$ is a von Neumann algebra (and thus obeys $\mathcal{X}'''=\mathcal{X}'$) for any adjoint-closed collection $\mathcal{X}$ of bounded operators on a Hilbert space.
These similarities suggest that we might have a Haag duality relation of the type $\A(S)'=\A(S^c)$ for causally complete sets $S$.  Unfortunately, however, the causal complement of an open set is always closed  (this is because if $S$ is open then $J(S)$ is open since a causal curve from $p$ to $S$ can always be deformed to a chronal curve from $p$ to $S$).  A theory of regions based on the causal complement is thus not consistent with principle (1).  

To fix this, we introduce a modified complement operation (see \cite{Bousso:2022hlz,Akers:2023fqr} for previous uses of this definition), the \textit{spatial complement}
\be
S'=\mathrm{Int}\left(S^c\right).
\ee
Restricting now to open $S$, we again have
\be\label{p1}
S\subseteq S''
\ee
and 
\be\label{p2}
S'=S'''.
\ee
Indeed to show \eqref{p1}, given $p\in S$ we want a neighborhood $U_p$ of $p$ which has no causal curves to $S'$.  But $S$ itself is a neighborhood of $p$ with this property, as it has no causal curves to $S^c$ and $S'\subseteq S^c$.  To show \eqref{p2}, from \eqref{p1} we immediately have $S'\subseteq S'''$ so we need only show that $S'''\subseteq S'$.  Given $p\in S'''$, we have a neighborhood $U_p$ such that there are no causal curves from $U_p$ to $S''$. Since $S\subseteq S''$, this means there are also no causal curves from $U_p$ to $S$, and therefore that $p\in S'$. 

The spatial complement therefore respects the von Neumann algebra properties \eqref{p1},\eqref{p2}, and by construction if $S=S''$ then $S$ must be open as required by principle (1).  Moreover a similar argument shows that if $S_1''=S_1$ and $S_2''=S_2$ then we also have $(S_1\cap S_2)''=S_1\cap S_2$, which is another true statement about von Neumann algebras.  We therefore propose taking regions to be subsets $R\subseteq M$ obeying $R''=R$.  As these arguments did not use any special features of $M$, this definition is consistent with principle (3).\footnote{In the literature, more general open sets are sometimes discussed \cite{Haag:book}, but the algebras in these sets do not always satisfy Haag duality. For example we can take $S$ to be the union of $R_1$ and $R_2$ in figure \ref{dafig}.  In a massless theory this region has $\A(S)\subsetneq \A(S'')$, since there can be right-moving null operators between the two regions that do not leave any imprint in $R_1$ or $R_2$ but are localized in $(R_1\cup R_2)''$.  Taking the commutant of both sides and using Haag duality for $S''$ (which is a valid region), we get $\A(S')\subsetneq \A(S)'$ so Haag duality is violated for $S$.  The question of when an algebra in a region with $S = S''$ can be generated by fields in a smaller open set is an interesting and subtle one. For recent discussion see \cite{Strohmaier:2023opz}.}

In what follows it will be useful to know that if $S$ is open then $S'$ has the additional nice feature of being \textit{regular}, which means that it is equal to the interior of its closure.  This is true for $S'$ because it is the interior of the closed set $S^c$.  The complement of a regular open set is always a regular closed set, which means a set which is equal to the closure of its interior.

\subsection{Relation to spatial regions in globally hyperbolic spacetimes}
We have found a definition of region that fulfills our requirements, but it is important to understand how it relates to the possibly more conventional definition in globally hyperbolic spacetimes where one takes a region $R$ to be the domain of dependence of a regular open subset $T$ of an acausal Cauchy surface $\Sigma$ for $M$.  In this subsection we will show that when $M$ is globally hyperbolic all regions of the latter type are included by our definition, but that our definition also includes more regions which should indeed be included.  

First some lemmas:\footnote{We are grateful to Pedro Ribeiro for these two lemmas, as well as a version of theorem \ref{thm1} which replaces $R''$ by $(R^c)^c$ and requires $R$ to be open, which he kindly explained on mathoverflow (\href{https://mathoverflow.net/questions/497556/is-a-causally-complete-region-always-a-domain-of-dependence-in-a-lorentzian-spac}{link here}).}
\begin{lem}\label{lem1}
Let $M$ be a time-oriented spacetime, and $S\subseteq M$ be an open set in $M$.  Then for all $p,q\in S'$ we have $J^+(p)\cap J^-(q)\subseteq S'$.  This property of $S'$ is called \textit{causal convexity}. 
\end{lem}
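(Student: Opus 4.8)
The plan is to argue by contradiction. Suppose $p, q \in S'$ but there exists a point $r \in J^+(p) \cap J^-(q)$ with $r \notin S'$. Since $S'$ is the interior of $S^c = M - J(S)$, the condition $r \notin S'$ means that every neighborhood of $r$ meets $J(S)$; by continuity/limiting arguments (and closedness of $J(S)$ when $S$ is open, which was noted in the text), this forces $r \in J(S)$, i.e. there is a causal curve $\lambda$ connecting $r$ to some point $s \in S$. The idea is then to concatenate: there is a causal curve from $p$ to $r$ (since $r \in J^+(p)$), and a causal curve from $r$ to $q$ (since $r \in J^-(q)$), and now also the curve $\lambda$ between $r$ and $S$. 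Splicing the $p \to r$ piece with $\lambda$ (or the $\lambda$ piece with the $r \to q$ piece, depending on the time-orientation of $\lambda$ at $r$) produces a single causal curve joining $S$ to either $p$ or $q$. But $p, q \in S' \subseteq S^c = M - J(S)$, so neither $p$ nor $q$ can be reached from $S$ by a causal curve — contradiction.

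The one subtlety to handle carefully is the direction of $\lambda$: the causal curve from $r$ to $s\in S$ could be future-directed or past-directed at $r$. If it is past-directed at $r$ (so $s \in J^-(r)$), concatenate with the future-directed causal curve from $p$ to $r$ to get $s \in J^-(p)$, hence $p \in J^+(s) \subseteq J(S)$, contradicting $p \in S^c$. If instead $\lambda$ is future-directed at $r$ (so $s \in J^+(r)$), concatenate with the future-directed causal curve from $r$ to $q$ to get $s \in J^+(q)$, hence $q \in J^-(s) \subseteq J(S)$, contradicting $q \in S^c$. Either way we reach a contradiction, so no such $r$ exists and $J^+(p) \cap J^-(q) \subseteq S'$. (Strictly, to conclude $r\in J^+(p)\cap J^-(q)\subseteq S^c$ rather than just $S^c$, note $J^+(p)\cap J^-(q)$ is contained in $S^c$ and, being squeezed between points of the open set $S'$, one checks it actually lies in the interior $S'$; this last point is where causal convexity of $S^c$ as a closed set gets upgraded.)

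The main obstacle I expect is the step translating "$r \notin S'$" into "$r \in J(S)$" — i.e. making sure a failure to be in the interior of $S^c$ really does put $r$ on $J(S)$ rather than merely in its closure. This is where we use that $S$ is open, which makes $J(S)$ open (as the text already observed: a causal curve from a point to an open set can be perturbed to a chronal one), so that $S^c = M - J(S)$ is closed and its complement-in-$M$-minus-interior is exactly $\partial J(S) \subseteq J(S)$. With $J(S)$ open the argument is clean; the potential trap is forgetting the openness hypothesis. Everything else is elementary concatenation of causal curves, using that the class of (piecewise-smooth, time-orientation-preserving) causal curves is closed under concatenation at a shared endpoint with compatible orientation.
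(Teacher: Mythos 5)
Your concatenation argument correctly establishes the easy half of the statement: if $r\in J^+(p)\cap J^-(q)$ and $r\in J(S)$, then splicing causal curves puts $p$ or $q$ in $J(S)$, contradicting $p,q\in S'\subseteq S^c$. So $J^+(p)\cap J^-(q)\subseteq S^c$. But the lemma asserts containment in $S'=\mathrm{Int}(S^c)$, and your reduction of the remaining case is wrong. From $r\notin S'$ you conclude $r\in J(S)$, justified by the claim $\partial J(S)\subseteq J(S)$; this is backwards. You correctly recall that $J(S)$ is \emph{open} when $S$ is open, and the boundary of an open set is disjoint from the set. So the case $r\in\partial J(S)$ is genuinely possible a priori: such an $r$ lies in $S^c$ but not in $S'$, and there is \emph{no} causal curve from $r$ to $S$, so your contradiction never gets started. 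Your closing parenthetical ("being squeezed between points of the open set $S'$, one checks it actually lies in the interior") gestures at the right idea but is not an argument — that upgrade from $S^c$ to its interior is the entire content of the lemma.

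The missing step is to exhibit an explicit open neighborhood of $r$ contained in $S^c$. The paper does this as follows: since $S'$ is open and $p,q\in S'$, choose $p'\in I^-(p)\cap S'$ and $q'\in I^+(q)\cap S'$. By the standard push-up lemma (a causal curve followed by a chronal curve can be replaced by a chronal curve), $r\in I^+(p')\cap I^-(q')\equiv U_r$, which is open. Now run your concatenation argument not on $r$ but on an \emph{arbitrary} point $x\in U_r$: if $x\in J(S)$, then extending the causal curve between $x$ and $S$ by the chronal curve from $p'$ to $x$ or from $x$ to $q'$ places $p'$ or $q'$ in $J(S)$, contradicting $p',q'\in S'$. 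Hence $U_r\subseteq S^c$ and $r\in\mathrm{Int}(S^c)=S'$. This handles $r\in J(S)$ and $r\in\partial J(S)$ uniformly, and it is precisely where the openness of $S'$ (not just of $J(S)$) enters.
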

\begin{proof}
The steps of the following proof are shown pictorially in figure \ref{fig:app-lemma-1}.

Fix a point $r \in J^+(p) \cap J^-(q)$.
Our goal is to show that $r$ is in $S'$, i.e., we must show that there is an open neighborhood $U_r$ containing $r$ such that no point in $U_r$ is connected to $S$ by a causal curve.
To accomplish this, we use the fact that $S'$ is open to choose a point $p' \in S'$ that is in the chronological past of $p,$ and a point $q' \in S'$ that is in the chronological future of $q$.
Because $r$ and $p'$ are connected by a curve that has a causal segment and a chronal segment, they can also be connected by a curve that is purely chronal (see for example \cite[proposition 2.18]{Penrose:causality}).
Similar considerations hold for $r$ and $q'.$
So we have $r \in I^+(p') \cap I^-(q').$
This is an open set, which we will call $U_r$.
No point in $U_r$ can be connected by a causal curve to any point in $S$, since any causal curve from $S$ to $U_r$ could be extended to a causal curve from $S$ to $p'$ or $q',$ and these points are in $S'$.
Therefore we have
\begin{equation}
    r \in U_r \subseteq S^c,
\end{equation}
hence $r \in \text{Int}(S^c) = S',$ as desired.
\end{proof}

\begin{figure}
    \centering
    \begin{tikzpicture}
        \draw [dashed] (0, -1) to[out=0, in=270] ++(1, 0.5) to[out=90, in=0] ++(-1, 0.5);
        \node at (0.5, -0.5) {$S$};
        \draw [dashed] (1,-0.5) to ++(2, 2);
        \draw [dashed] (1,-0.5) to ++(2, -2);
        \node at (1.5, -0.5) {$S'$};

        \draw [RoyalBlue, fill opacity=0.3, fill=RoyalBlue] (3, -1.5) to ++(1, 1) to ++(-1, 1) to ++(-1, -1) to ++(1, -1);
        \node [circle, black, fill=RoyalBlue, minimum size=5pt, inner sep=0] at (3, -1.5) {};
        \node [circle, black, fill=RoyalBlue, minimum size=5pt, inner sep=0] at (3, 0.5) {};
        \node at (3.3, -1.5) {$p$};
        \node at (3.3, 0.5) {$q$};

        \begin{scope}[xshift=6cm]
        \draw [dashed] (0, -1) to[out=0, in=270] ++(1, 0.5) to[out=90, in=0] ++(-1, 0.5);
        \draw [dashed] (1,-0.5) to ++(2, 2);
        \draw [dashed] (1,-0.5) to ++(2, -2);

        \draw [BrickRed, fill opacity=0.3, fill=BrickRed] (3, -2) to ++(1.5, 1.5) to ++(-1.5, 1.5) to ++(-1.5, -1.5) to ++(1.5, -1.5);
           \node [circle, black, fill=BrickRed, minimum size=5pt, inner sep=0] at (3, -2) {};
            \node [circle, black, fill=BrickRed, minimum size=5pt, inner sep=0] at (3, 1) {};
            \node at (3.3, -2) {$p'$};
            \node at (3.3, 1) {$q'$};
        
        \draw [RoyalBlue, fill opacity=0.3, fill=RoyalBlue] (3, -1.5) to ++(1, 1) to ++(-1, 1) to ++(-1, -1) to ++(1, -1);
        \node [circle, black, fill=RoyalBlue, minimum size=5pt, inner sep=0] at (3, -1.5) {};
        \node [circle, black, fill=RoyalBlue, minimum size=5pt, inner sep=0] at (3, 0.5) {};
        \node at (3.3, -1.5) {$p$};
        \node at (3.3, 0.5) {$q$};
        \end{scope}
    \end{tikzpicture}
    \caption{Left: An open set $S$, its causal complement $S',$ and two points $p, q$ with the set $J^+(p) \cap J^-(q)$ shaded in.
    Right: A choice of point $q'$ in the chronological future of $q$, and a point $p'$ in the chronological past of $p,$ such that $p'$ and $q'$ are still in $S'.$
    Every point in $J^+(p) \cap J^-(q)$ lies in the open set $I^+(p') \cap I^-(q'),$ which must lie in $S'.$}
    \label{fig:app-lemma-1}
\end{figure}
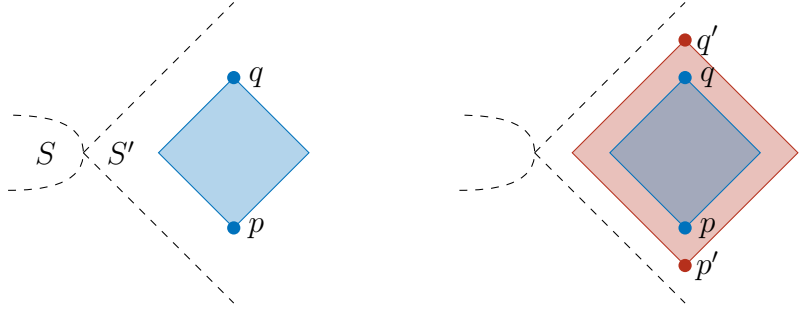
\begin{lem}\label{lem2}
Let $M$ be a globally hyperbolic spacetime, with $R\subseteq M$ a subset obeying $R''=R$.  Then $R$ is a globally hyperbolic embedded Lorentzian submanifold.  
\end{lem}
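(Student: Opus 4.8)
The plan is first to dispatch the easy part: $R=R''$ is automatically open, since $R''=(R')'$ is the interior of the closed set $(R')^{c}$, so $R$ equipped with the restricted metric is an embedded Lorentzian submanifold of $M$, and (a globally hyperbolic spacetime being time-oriented, $R$ inheriting that orientation) it is a spacetime in its own right. The content of the lemma is thus that this spacetime is globally hyperbolic, and the key structural input is Lemma \ref{lem1}. Applying that lemma with the open set $S=R'$, and using $R=R''=(R')'=S'$, we learn that $R$ is \emph{causally convex} in $M$: for all $p,q\in R$ one has $J^{+}_{M}(p)\cap J^{-}_{M}(q)\subseteq R$.

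To conclude global hyperbolicity I would invoke Geroch's characterization (see e.g. \cite{Wald:1984rg}): a time-oriented spacetime is globally hyperbolic if and only if it is strongly causal and all of its causal diamonds are compact. Strong causality of $R$ is inherited from $M$ for free, without even using causal convexity: every causal curve in $R$ is a causal curve in $M$, so a failure of strong causality at a point of $R$, computed within $R$, would be a failure within $M$. For the compactness of diamonds, causal convexity is exactly what makes the intrinsic and ambient causal structures agree. Indeed, given $p,q\in R$ and $r\in J^{+}_{M}(p)\cap J^{-}_{M}(q)$, concatenating a causal curve from $p$ to $r$ with one from $r$ to $q$ produces a causal curve from $p$ to $q$ through $r$, and every point of this curve lies in $J^{+}_{M}(p)\cap J^{-}_{M}(q)\subseteq R$; hence the whole curve lies in $R$, which shows $r\in J^{+}_{R}(p)\cap J^{-}_{R}(q)$. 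Therefore $J^{+}_{R}(p)\cap J^{-}_{R}(q)=J^{+}_{M}(p)\cap J^{-}_{M}(q)$, and the right-hand side is compact because $M$ is globally hyperbolic; compactness being absolute, the intrinsic diamond is compact too. Geroch's criterion then gives that $R$ is globally hyperbolic.

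The step I expect to be the crux — and the place where the hypothesis $R''=R$ really earns its keep — is the compactness of diamonds. For a generic open subset $U$ of a globally hyperbolic spacetime, the intrinsic diamond $J^{+}_{U}(p)\cap J^{-}_{U}(q)$ need not even be closed, since causal curves can ``escape'' through $\partial U$, so $U$ need not be globally hyperbolic; causal convexity, which here follows from $R=R''$ via Lemma \ref{lem1}, is precisely what rules this out. It is worth emphasizing that one cannot in general prove the lemma by exhibiting a Cauchy surface of $R$ as the trace on $R$ of a Cauchy surface of $M$: the Rindler wedge in Minkowski space satisfies $R=R''$ but its Cauchy surfaces are half-hyperplanes, never contained in a Cauchy surface of the full spacetime. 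One could instead build a Cauchy surface for $R$ directly, but routing through Geroch's diamond-compactness characterization sidesteps this and is cleaner. (If $R$ is disconnected one simply applies the argument component by component, or notes that the diamond criterion is insensitive to connectedness.)
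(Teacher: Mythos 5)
Your proof is correct and follows essentially the same route as the paper's: both use Lemma \ref{lem1} (applied to $S=R'$) to get causal convexity, inherit strong causality from $M$, and identify the intrinsic causal diamonds with the ambient, compact ones so that the Geroch/Wald criterion applies. (One non-load-bearing aside is off: the Rindler wedge's Cauchy surface $\{t=0,\ x>0\}$ \emph{is} contained in the Minkowski Cauchy surface $\{t=0\}$, so it does not illustrate your point; a genuine example of a region whose Cauchy surfaces do not sit inside any Cauchy surface of $M$ is the paper's Schwarzschild example.)
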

\begin{proof}
Since $R=R''$ is an interior, it is open, so it is an embedded submanifold in $M$ with a smooth Lorentzian metric.
We wish to show that this embedded submanifold admits a Cauchy surface.
Equivalently, by the remark on page 209 of \cite{Wald:1984rg}, we may show that $J^+(p) \cap J^-(q)$ is compact in the subspace topology of $R$ for every $p, q \in R$, and that $R$ is ``strongly causal'' in the sense that no causal curve comes arbitrarily close to intersecting itself.
(See \cite[page 196]{Wald:1984rg} for a precise definition.)

Since $M$ is globally hyperbolic, it is strongly causal, and this property is inherited by $R$.
Since $M$ is globally hyperbolic, $J^+(p) \cap J^-(q)$ is compact for any $p, q \in M$.
But since the preceding lemma shows that this set is contained in $R$, it is also compact for the subspace topology of $R$.
It follows that $R$ is globally hyperbolic, as desired.
\end{proof}
\noindent We then have the following theorem:
\begin{thm}\label{thm1}
Let $M$ be a globally hyperbolic spacetime and $R\subseteq M$ a subset obeying $R''=R$.  Then there exists an acausal set $T\subseteq M$ such that $D(T)=R$.
\end{thm}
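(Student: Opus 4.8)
The plan is to take $T$ to be a smooth spacelike Cauchy surface for the globally hyperbolic spacetime $R$ itself, and to prove the identity $D(T)=R$, with $D(\cdot)$ computed in the ambient $M$. Two preliminary observations set this up. Since $R=(R')'$, Lemma~\ref{lem1} applied to $S=R'$ says that $R$ is causally convex: any causal curve of $M$ joining two points of $R$ lies entirely in $R$. By Lemma~\ref{lem2}, $R$ is a globally hyperbolic embedded submanifold, so it admits a smooth spacelike Cauchy surface $T$; being spacelike, $T$ is acausal within $R$, and causal convexity then promotes this to acausality of $T$ as a subset of $M$. Thus $T$ is the acausal set required, and it remains to show $D(T)=R$.

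The inclusion $R\subseteq D(T)$ is the easy direction. Given $q\in R$ and an inextendible causal curve $\gamma$ of $M$ through $q$, let $\gamma_0$ be the restriction of $\gamma$ to the connected component of $\gamma^{-1}(R)$ that contains the parameter of $q$. Because $R$ is open, $\gamma_0$ can have no endpoint inside $R$: if it did, $\gamma$ would continue past that point (it is inextendible in $M$, so the putative endpoint is interior to $\gamma$'s domain) and, by openness of $R$, remain in $R$ a little longer, contradicting maximality of the component. Hence $\gamma_0$ is an inextendible causal curve of $R$, so it meets the Cauchy surface $T$; therefore $\gamma$ meets $T$, and $q\in D(T)$.

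For the reverse inclusion $D(T)\subseteq R$, I would first show $D(T)\cap J(R')=\varnothing$. The inputs are the two elementary disjointness facts $R\cap J(R')=\varnothing$ and $R'\cap J(R)=\varnothing$, immediate from $R=\mathrm{Int}((R')^c)$ and $R'=\mathrm{Int}(R^c)$, together with the fact that any $p\in D(T)$ has a point of $T$ in its chronological past (if $p\in D^+(T)$) or future (if $p\in D^-(T)$): a past- or future-directed timelike curve from $p$ is causally inextendible on one end and hence meets $T$. Given a hypothetical $r\in R'$ causally related to such a $p$, one extends a causal curve through $p$ and $r$ until it meets $T\subseteq R$; this produces a causal curve joining a point of $R$ to $r\in R'$, contradicting one of the disjointness facts. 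This yields $D(T)\subseteq M\setminus J(R')$. The \textbf{main obstacle} is then to upgrade this from the closed set $M\setminus J(R')$ to its interior $R=\mathrm{Int}(M\setminus J(R'))$, i.e.\ to show that $D(T)$ does not accumulate on $\partial J(R')$; equivalently, that $D(T)$ is open. This is exactly where it matters that $T$ is a Cauchy surface of the \emph{open} submanifold $R$ rather than a closed subset of a slice of $M$: $T$ contains none of its own edge points, $\mathrm{edge}(T)$ lying in $\partial R$. I expect to dispatch openness either by the standard structure theory of Cauchy horizons ($H(T)$ is ruled by inextendible null geodesics issuing from $\mathrm{edge}(T)\subseteq\partial R$, so $H(T)$ stays out of $R$ and in fact out of $D(T)$), or by a direct limit-curve argument in the globally hyperbolic $M$: a sequence in $J(R')$ converging to some $p\in D(T)$ would, after intersecting an acausal Cauchy surface of $M$ through $p$ and invoking compactness of causal diamonds, force a point of $\overline{R'}$ to be causally related to $p$ by a degenerately short curve, and then acausality of $T$ (and of a Cauchy surface for $R'$) together with openness of $I^{\pm}$ delivers the contradiction.

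Combining the two inclusions gives $D(T)=R$, which proves the theorem. I would close with the remark motivating the surrounding discussion: the $T$ produced here is in general \emph{not} contained in any single acausal Cauchy surface of $M$ (as one sees for a region $R$ that is one of two ``opposite'' causally complete wedges whose closures meet), so this statement genuinely enlarges the class of regions beyond those realized as the domain of dependence of an open subset of a fixed Cauchy slice.
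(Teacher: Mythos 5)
Your overall strategy coincides with the paper's: take $T$ to be a Cauchy surface of the globally hyperbolic submanifold $R$ (Lemma \ref{lem2}), use causal convexity (Lemma \ref{lem1}) to promote its acausality from $R$ to $M$, prove $R\subseteq D(T)$ by restricting inextendible causal curves of $M$ to inextendible causal curves of $R$, and then fight for the reverse inclusion. Your handling of the easy steps is correct and, in places, more carefully spelled out than the paper's (e.g.\ the argument that the component of $\gamma^{-1}(R)$ containing $q$ has no endpoint in $R$, and the observation $D(T)\cap J(R')=\varnothing$, which is a clean equivalent of the paper's ``$\mathrm{Int}(D(T))\subseteq R''$'' step). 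You also correctly identify where all the difficulty lives: showing that $D(T)$ does not accumulate on $\partial J(R')$, equivalently that $D(T)$ is open.

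That step, however, is left as a sketch, and neither of the two routes you propose works as stated. The Cauchy-horizon route invokes the structure theorem ``$H(T)$ is ruled by null geodesics from $\mathrm{edge}(T)$,'' but that theorem is proved for \emph{closed} achronal sets, and your $T$ is generally not closed in $M$ (its closure acquires edge points on $\partial R$, and $\overline{T}$ need not be acausal); passing to $\overline{T}$ changes $D$ in an uncontrolled way. The limit-curve route as written tries to ``force a point of $\overline{R'}$ to be causally related to $p$,'' i.e.\ to pass from $p\in\overline{J(R')}$ to $p\in J(\overline{R'})$; this fails when $R'$ is non-compact, since the witnesses $r_n\in R'$ can escape to infinity and $\overline{J(S)}\supsetneq J(\overline{S})$ in general, so compactness of causal diamonds buys you nothing on that end. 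The correct move --- and essentially what the paper does via Wald's limit-curve lemma 8.1.5 --- is to not take a limit of the $r_n$ at all: extend each causal curve witnessing $p_n\in J(R')$ (or, in the paper's version, each inextendible causal curve through $p_n$ missing $T$) to an inextendible causal curve $\gamma_n$, extract a limiting inextendible causal curve $\gamma$ through $p$, use $p\in D(T)$ to find $t\in\gamma\cap T\subseteq R$, and then use openness of $R$ to conclude that $\gamma_{n_k}$ eventually enters $R$ while also meeting $T^c$ in a forbidden way (in your setup, $\gamma_{n_k}$ would contain a point of $R$ and a point of $R'$, contradicting $R\cap J(R')=\varnothing$). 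With that replacement your argument closes; without it, the theorem is not yet proved.
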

\begin{proof}
By lemma \ref{lem2} we know that $R$ is globally hyperbolic, so it possesses a Cauchy surface: a closed achronal set $T$ such that every inextendible causal curve in $R$ intersects $T$. Thus within $R$ we have $D_R(T)=R$.  Moreover the construction of $T$ described on page 209 of \cite{Wald:1984rg} actually ensures that $T$ is acausal and not just achronal.  $T$ is acausal in $M$ since it is acausal in $R$ and any causal curve which exited $R$ and re-entered would violate the causal convexity of $R$.
We will now show that in $M$ we also have $D(T)=R$.  Clearly we have $R = D_R(T) \subseteq D(T)$, since if a point $p$ has the property that every inextendible causal curve passing through $p$ in $R$ must intersect $T$, then this property also holds for every inextendible causal curve passing through $p$ in $M$, since all such curves restrict to inextendible curves in $R$.  

For the other direction, we must show $D(T) \subseteq R$.
First we will show that the interior of $D(T)$ is contained in $R''$.
To see this, fix a $p\in \mathrm{Int}(D(T))$.
This means $p$ has a neighborhood $U_p$ which is contained in $D(T)$.  To show that $p$ is in $R$, we will show that it is in $R'',$ for which it suffices to show that there are no causal curves from $U_p$ to $R'$.
Since $U_p$ is in $D(T)$, any such causal curve could be extended to a causal curve connecting $R'$ to $T$.
But this contradicts the assumption $T \subseteq R$, establishing the inclusion
\begin{equation}
 \text{Int}(D(T)) \subseteq R''.
\end{equation}
Since $R''=R$, we will be done if we can show that $\text{Int}(D(T)) = D(T)$, i.e. that $D(T)$ is open.

\begin{figure}
    \centering
    \begin{tikzpicture}
        \draw [thick] (0, 0) to ++(2,2);
        \draw [thick] (0,0) to ++(2, -2);
        \node at (2.3, 0) {$T$};
        \draw (0, 0) to[out=10, in=170] ++(1, 0) to[out=-10, in=190] ++(1, 0);
        \node [circle, inner sep=0, minimum size=5pt, fill=RoyalBlue] at (1, 1) {};
        \node at (1.2, 0.6) {$p$};
        \draw [dashed] (1,1) circle (0.9);
        \draw [thick, ForestGreen] (0.7, 1.3) to ++(-2.5, -2.5);
        \node [circle, inner sep=0, minimum size=5pt, fill=ForestGreen] at (0.7, 1.3) {};
        \node at (1, 1.5) {$p_0$};
        
        \begin{scope}[xshift=7cm]
        \draw [thick] (0, 0) to ++(2,2);
        \draw [thick] (0,0) to ++(2, -2);
        \draw (0, 0) to[out=10, in=170] ++(1, 0) to[out=-10, in=190] ++(1, 0);

        \draw [thick, dashed, ForestGreen] (0.7, 1.3) to ++(-2.5, -2.5);
        \node [circle, inner sep=0, minimum size=5pt, fill=ForestGreen] at (0.7, 1.3) {};
        \draw [thick,dashed, ForestGreen] (0.8, 1.2) to ++(-2.5, -2.5);
        \node [circle, inner sep=0, minimum size=5pt, fill=ForestGreen] at (0.8, 1.2) {};
        \draw [thick,dashed, ForestGreen] (0.9, 1.1) to ++(-2.5, -2.5);
        \node [circle, inner sep=0, minimum size=5pt, fill=ForestGreen] at (0.9, 1.1) {};
        \node at (0.7, 1.6) {$p_n$};
        \node at (0.7-2.8, 1.6-3.2) {$\gamma_n$};

        \draw [ultra thick, RoyalBlue] (1,1) to ++(-2.5, -2.5);
        \node [circle, inner sep=0, minimum size=5pt, fill=RoyalBlue] at (1, 1) {};
        \node at (1.2, 0.6) {$p$};
        \node at (-0.4, -1) {$\gamma$};
        \end{scope}
    \end{tikzpicture}
    \caption{Left: A point $p$ that is assumed to lie on the boundary $D(T)$, together with an open set that contains points outside of $D(T)$. Each of these points must contain an inextendible causal curve that does not intersect $T$.
    Right: By taking smaller and smaller open sets around $p$, one can construct a sequence $p_n$ of points approaching $p$ with causal curves $\gamma_n$ that approach some inextendible causal curve $\gamma$ passing through $p.$
    But any such curve would have to intersect $T$, which is in $R$, but this is a contradiction since $R$ is open and the causal curves $\gamma_n$ never enter $R$.}
    \label{fig:app-theorem-1}
\end{figure}
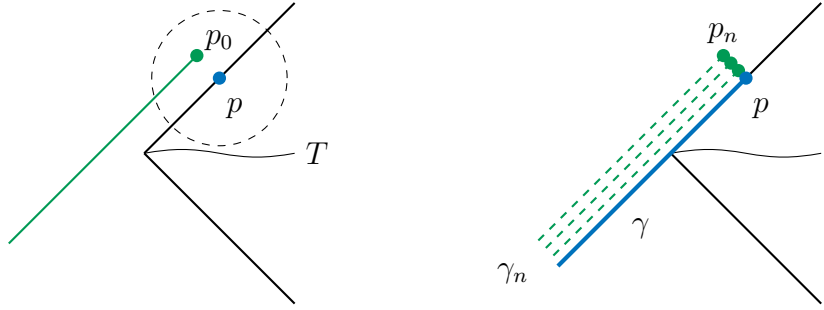

Suppose, toward contradiction, that there were a point $p\in D(T)-\mathrm{Int}(D(T))$.  For any neighborhood $U_p$ of $p$ there would be a point $p_0\in U_p$ which is not in $D(T)$; see figure \ref{fig:app-theorem-1}.
We could therefore construct a sequence $\{p_n\}$ of points converging to $p$ each of which is not in $D(T)$, and therefore a sequence $\{\gamma_n\}$ of inextendible causal curves that pass through $p_n$ but not $T$.
Since these curves do not intersect $T$, they can never enter $R$, since every inextendible causal curve in $R$ intersects $T$.
By lemma 8.1.5 of \cite{Wald:1984rg} there would then be a limiting inextendible causal curve $\gamma$ through $p$ with the property that each point $r$ on $\gamma$ is a limit of a sequence of points $r_{n_k}$ on a subsequence $\gamma_{n_k}$.  Since $p\in D(T)$, this would mean that $\gamma$ must intersect $T$, and therefore also $R$.  But this is impossible since there cannot be a sequence of points which are not in $R$ that converges to a point in $R$, since $R$ is open.  Therefore $D(T)=\mathrm{Int}(D(T))$, and $D(T)$ is open. 
\end{proof}
This theorem shows that in globally hyperbolic spacetimes, our definition of a region is closely related to the idea of the domain of dependence of a regular open subset of an acausal Cauchy surface in $M$.  We can motivate this further by giving the following converse result:\footnote{This theorem is inspired by a close cousin due to Netta Engelhardt, which shows that if $T$ is a closed achronal set that lies between two Cauchy surfaces then $D(T)$ is causally complete.  This will soon appear in \cite{EngLiu}.}

\begin{thm}\label{thm2}
Let $M$ be a globally hyperbolic spacetime with an acausal Cauchy surface $\Sigma$, and let $T\subseteq \Sigma$ be a regular open set in $\Sigma$.\footnote{We emphasize that it is essential here that $\Sigma$ is acausal; if it were merely achronal then there are counterexamples.  A simple counterexample is to take $T$ to be two half-infinite open spatial intervals in Minkowski space whose endpoints are connected by a null curve.  Then $D(T)'=\varnothing$ so $D(T)''$ is all of $M$.  It is also important for $T$ to be regular, otherwise by removing points in $T$ we can change $D(T)$ without changing $D(T)''$.}  Then $D(T)''=D(T)$.
\end{thm}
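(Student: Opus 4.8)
The plan is to show that
\begin{equation}
 D(T)'=D(T^\perp),\qquad T^\perp:=\Sigma\setminus\overline{T}=\mathrm{Int}_\Sigma(\Sigma\setminus T),
\end{equation}
i.e. that the spatial complement of $D(T)$ is the domain of dependence of the spatial complement of $T$ taken inside $\Sigma$. Granting this, the theorem follows by iteration: $T^\perp$ is again a regular open subset of $\Sigma$ (for any open $U$ the set $\mathrm{Int}_\Sigma(\Sigma\setminus U)$ is regular open, and $(T^\perp)^\perp=T$ because $T$ is regular open), so applying the identity a second time with $T$ replaced by $T^\perp$ gives $D(T)''=\big(D(T)'\big)'=D(T^\perp)'=D\big((T^\perp)^\perp\big)=D(T)$. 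As a preliminary step one records that $D(T)$ is open — a standard consequence of the limit curve lemma (Lemma 8.1.5 of \cite{Wald:1984rg}) together with the fact that, $\Sigma$ being acausal and Cauchy, every inextendible causal curve meets $\Sigma$ in exactly one point — so by \eqref{p1} the inclusion $D(T)\subseteq D(T)''$ is automatic and all of the work is in establishing $D(T)'=D(T^\perp)$.

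The main ingredient is a characterization of domains of dependence of open subsets of $\Sigma$: for $U\subseteq\Sigma$ open one has $D(U)\cap\Sigma=U$ and
\begin{equation}
 D^+(U)=\{p\in J^+(\Sigma)\,:\,J^-(p)\cap\Sigma\subseteq U\},\qquad D^-(U)=\{p\in J^-(\Sigma)\,:\,J^+(p)\cap\Sigma\subseteq U\}.
\end{equation}
Each formula follows by noting that a past-inextendible (resp. future-inextendible) causal curve through $p$ meets $\Sigma$ in a single point, lying in $J^-(p)$ (resp. $J^+(p)$), and that conversely every point of $J^-(p)\cap\Sigma$ (resp. $J^+(p)\cap\Sigma$) is the $\Sigma$-intersection of such a curve; acausality of $\Sigma$ is exactly what makes the intersection a single point, and is the hypothesis whose failure produces the counterexample mentioned in the footnote to the statement.

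For the inclusion $D(T^\perp)\subseteq D(T)'$: since $D(T^\perp)$ is open it suffices to show that no causal curve joins a point of $D(T^\perp)$ to a point of $D(T)$. Suppose $q\in D(T^\perp)$, $a\in D(T)$ and $q\in J^+(a)$ (the case $q\in J^-(a)$ follows by time reversal, which interchanges $D^+\leftrightarrow D^-$ and fixes $T,T^\perp$). Splitting into the four cases according to whether $a\in D^\pm(T)$ and $q\in D^\pm(T^\perp)$, the characterization above together with the appropriate inclusion $J^\mp(q)\subseteq J^\mp(a)$ forces either a point of $\Sigma$ to lie in $T\cap T^\perp=\varnothing$ or a point to lie in $I^+(\Sigma)\cap I^-(\Sigma)=\varnothing$; contradiction. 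For the reverse inclusion $D(T)'\subseteq D(T^\perp)$, argue contrapositively: if $p\notin D(T^\perp)$ then $p$ lies in neither $D^+(T^\perp)$ nor $D^-(T^\perp)$, so concatenating a past-inextendible and a future-inextendible causal curve through $p$ that each avoid $T^\perp$ yields an inextendible causal curve through $p$ whose single intersection $r$ with $\Sigma$ satisfies $r\notin T^\perp$, i.e. $r\in\overline{T}$. Then $p\in J(r)$ with $r=\lim r_n$, $r_n\in T$; approximating $p$ by points chronologically related to $r$ and using that the relations $I^+$ and $I^-$ are open, one deduces that $p$ is a limit of points of $J(T)\subseteq J(D(T))$, so $p\in\overline{J(D(T))}$. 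Since $D(T)'=\mathrm{Int}\big(M\setminus J(D(T))\big)$ is disjoint from $\overline{J(D(T))}$, this gives $p\notin D(T)'$, completing the argument.

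I expect the main obstacle to be the causal-structure bookkeeping rather than any single deep step: making the characterization of $D(U)$ airtight for a merely open (not edgeless) $U$, organizing the four-case check cleanly, and — the fiddliest point — handling in the reverse inclusion the ``null boundary'' points $p$ that are causally but not chronologically related to $\overline{T}$, where one must pass to limits using openness of the chronology relation and the limit curve lemma. One should also keep track of where genuine global hyperbolicity (compactness of causal diamonds, closedness of $J^\pm$) is invoked versus merely the existence of an acausal Cauchy surface; as the footnotes to the statement emphasize, both acausality of $\Sigma$ and regularity of $T$ are genuinely needed.
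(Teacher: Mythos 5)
Your proposal is correct and reaches the result by a route that is a close cousin of the paper's, but organized differently. The paper proves the unprimed identity $D(T)=D(\mathrm{Int}_\Sigma(\Sigma-T))'$ and then concludes in one step via the idempotency $S'=S'''$ of the spatial complement (equation \eqref{p2}); you instead prove the primed identity $D(T)'=D(T^\perp)$ and iterate, using $(T^\perp)^\perp=T$. The two identities are interchangeable once known for all regular open subsets of $\Sigma$, but the burden of proof lands in different places. Your forward inclusion $D(T^\perp)\subseteq D(T)'$ is the mirror image of the paper's $D(T)\subseteq D(T^\perp)'$ --- both reduce to ``no causal curve joins the two domains of dependence, by acausality of $\Sigma$'' --- though your explicit characterization $D^\pm(U)=\{p\in J^\pm(\Sigma)\,:\,J^\mp(p)\cap\Sigma\subseteq U\}$ makes the case analysis more systematic than the paper's curve-extension argument. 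The genuine divergence is in the other inclusion: the paper shows $D(T^\perp)'\subseteq D(T)$ by deforming a causal curve reaching $\Sigma-T$ into a chronal curve reaching $\mathrm{Int}_\Sigma(\Sigma-T)$, which is where regularity of $T$ enters their argument; you show $D(T)'\subseteq D(T^\perp)$ by a contrapositive limit argument placing $p$ in $\overline{J(T)}$, which pushes the use of regularity entirely into the iteration step $(T^\perp)^\perp=T$. Two caveats, neither fatal: the openness of $D$ of an open acausal subset of $\Sigma$, which you cite as standard, is exactly the paper's lemma \ref{lem3} and requires a genuine convex-normal-neighborhood argument (it is O'Neill's Lemma 14.43); and in the case $a\in D^+(T)$, $q\in D^-(T^\perp)$, $q\in J^+(a)$ of your four-case check, the contradiction actually runs through $J^+(\Sigma)\cap J^-(\Sigma)=\Sigma$ plus acausality of $\Sigma$ rather than $I^+(\Sigma)\cap I^-(\Sigma)=\varnothing$ --- a bookkeeping point of the kind you already flag.
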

The proof of this theorem relies on the following technical lemma:
\begin{lem}\label{lem3}
    Let $M$ be a globally hyperbolic spacetime with Cauchy surface $\Sigma,$ and let $T \subseteq \Sigma$ be acausal in $M$ and open in $\Sigma$.
    Then $D(T)$ is open.
\end{lem}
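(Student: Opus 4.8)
The plan is to show that $M\setminus D(T)$ is closed. So suppose $p_n\to p$ with $p_n\notin D(T)$; I must show $p\notin D(T)$, using the limit‑curve machinery already exploited in the proof of Theorem~\ref{thm1}.

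By the definition of $D(T)$, for each $n$ there is an inextendible causal curve $\gamma_n$ through $p_n$ with $\gamma_n\cap T=\varnothing$. I would then apply Lemma 8.1.5 of \cite{Wald:1984rg}, exactly as in the proof of Theorem~\ref{thm1}, to extract an inextendible causal limit curve $\gamma$ through $p$ such that every point of $\gamma$ is a limit of points on a subsequence $\gamma_{n_k}$. If I can show $\gamma\cap T=\varnothing$, then $\gamma$ is an inextendible causal curve through $p$ missing $T$, so $p\notin D(T)$, and the proof is complete.

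So it remains to prove $\gamma\cap T=\varnothing$. Suppose, toward a contradiction, that $r\in\gamma\cap T\subseteq\Sigma$. Since $T$ is open in $\Sigma$ and acausal in $M$, a $\Sigma$-neighborhood of $r$ lies in $T$ and so contains no two points joined by a causal curve; in particular $\Sigma$ carries no null segment through $r$, so near $r$ the Cauchy surface behaves like an acausal topological hypersurface, locally separating a neighborhood of $r$ into future and past parts. Pick points $a,b$ on $\gamma$ close to $r$ and on opposite sides of its crossing of $\Sigma$, say $a\in I^-(\Sigma)$ and $b\in I^+(\Sigma)$. Since $a$ and $b$ are limits of points $a_k,b_k\in\gamma_{n_k}$ and $I^\pm(\Sigma)$ are open, for large $k$ we have $a_k\in I^-(\Sigma)$ and $b_k\in I^+(\Sigma)$, so the segment of $\gamma_{n_k}$ between $a_k$ and $b_k$ must cross $\Sigma$ at some point $m_k$; and $m_k\notin T$ because $\gamma_{n_k}\cap T=\varnothing$, so $m_k\in\Sigma\setminus T$, which is closed in $\Sigma$. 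The heart of the matter is to show $m_k\to r$: since the $\gamma_{n_k}$ approach $\gamma$ and $a_k,b_k$ converge to points near $r$, the crossing $m_k$ should be trapped in any fixed neighborhood of $r$ for large $k$, and letting $a,b$ shrink toward $r$ then gives $m_k\to r$. Granting this, closedness of $\Sigma\setminus T$ forces $r\in\Sigma\setminus T$, contradicting $r\in T$.

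The step I expect to be the main obstacle is exactly $m_k\to r$: making it rigorous requires care about the precise sense in which $\gamma_{n_k}$ converges to $\gamma$ in the limit‑curve lemma (one wants convergence uniform on compact segments, so that the crossing of $\Sigma$ by $\gamma_{n_k}$ between $a_k$ and $b_k$ is genuinely pinned near $r$), and it is here that the acausality of $T$ is indispensable — if $\Sigma$ were merely achronal and carried a null segment through a point of $T$, the crossing points could slide off along that segment and $D(T)$ could fail to be open, the pathology flagged in the footnotes to Theorem~\ref{thm2}. A more streamlined but essentially equivalent route, which may be cleaner to write up, is to first establish the pointwise characterization that for $p\in I^+(\Sigma)\cup\Sigma$ one has $p\in D(T)$ iff $J^-(p)\cap\Sigma\subseteq T$ (with the time‑reversed statement for $p$ to the past of $\Sigma$, and $D(T)\cap\Sigma=T$), and then deduce openness from upper semicontinuity of $p\mapsto J^-(p)\cap\Sigma$, which follows from two standard facts about globally hyperbolic spacetimes: the causal relation is closed, and $J^-(p')\cap\Sigma$ is compact. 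The acausality hypothesis then re‑enters in proving that characterization, where one uses that near a point of $T$ an inextendible causal curve meets $\Sigma$ in at most one point.
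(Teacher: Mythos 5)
Your overall strategy --- show $M\setminus D(T)$ is closed by extracting a limit curve $\gamma$ through $p$ and proving $\gamma\cap T=\varnothing$ --- is a genuinely different route from the paper's, which instead reduces openness everywhere to the single claim $T\subseteq\mathrm{Int}(D(T))$ and proves that claim by a compactness argument inside a convex normal neighborhood contained in $I^+(T)\cup T\cup I^-(T)$, tracking where the approximating curves exit that neighborhood. Your route is salvageable, and arguably shorter once completed, but as written it has a genuine gap exactly where you flag it: the claim $m_k\to r$. Lemma 8.1.5 of \cite{Wald:1984rg}, as invoked here and in the proof of Theorem~\ref{thm1}, only guarantees that each point of $\gamma$ is a limit of points on a subsequence $\gamma_{n_k}$; it gives no control over where the rest of $\gamma_{n_k}$ goes, so nothing so far prevents the segment of $\gamma_{n_k}$ between $a_k$ and $b_k$ from wandering far from $r$, crossing $\Sigma$ somewhere distant, and returning. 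Hoping for ``uniform convergence on compact segments'' is not a fix, because that is not what the cited lemma provides.

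The missing ingredient is strong causality, which every globally hyperbolic spacetime satisfies: for any neighborhood $W$ of $r$ there is a causally convex neighborhood $U\subseteq W$ of $r$, i.e., one that no causal curve intersects in a disconnected set. Choosing $a,b\in U$ on opposite sides of $r$ along $\gamma$, we have $a_k,b_k\in U$ for large $k$, and the segment of $\gamma_{n_k}$ from $a_k$ to $b_k$ cannot leave $U$ and return, so it lies entirely in $U$; hence $m_k\in U\subseteq W$, and diagonalizing over shrinking $W$ gives $m_k\to r$, whereupon closedness of $\Sigma\setminus T$ in $M$ yields the contradiction. With that inserted (together with the small verifications that $a\in I^-(\Sigma)$ and $b\in I^+(\Sigma)$, which, as you note, use acausality of $T$ near $r$), your argument goes through. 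I would drop the proposed ``alternative route'' via the characterization $p\in D(T)\iff J^-(p)\cap\Sigma\subseteq T$: the forward implication is delicate when $\Sigma$ is merely achronal away from $T$, since an inextendible causal curve through $p$ can meet $\Sigma$ at several null-separated points of which only one need lie in $T$, and acausality of $T$ alone (which constrains pairs of points \emph{in} $T$) does not obviously repair this.
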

\begin{proof}
This proof is adapted from \cite[lemma 14.43]{ONeill:book}.

Suppose, toward contradiction, that there were a point $p$ in the boundary of $D(T)$.
Then for any open set $U_p$ containing $p,$ there would be points in $U_p$ that do not lie in $D(T)$.
As in the proof of theorem \ref{thm1} --- see again figure \ref{fig:app-theorem-1} --- we could use this to construct a sequence of inextendible causal curves $\gamma_n$ that never intersect $T$ and that converge to a causal curve $\gamma$ passing through $p.$
Since $p$ is in $D(T)$ and $T$ is acausal, the curve $\gamma$ must intersect $T$ at a unique point, which we call $q.$
Since $\gamma_n$ converges to $\gamma,$ there is a subsequence of points $q_{n_k}$ on the curves $\gamma_{n_k}$ that converge to $q.$
Each of the points $q_{n_k}$ is outside of $D(T)$.
So to find a contradiction, we need only find an open set $V$ with $q \in V \subseteq D(T)$; the sequence $q_{n_k}$ would lie outside of $V$, but would converge to a point in $V$, forming a contradiction.

We have therefore reduced our problem to proving the claim $T \subseteq \text{Int}(D(T)).$
Again we will proceed by contradiction.
Suppose there were a point $q \in T$ not in the interior of $D(T)$.
We will localize the problem as much as possible by performing the rest of our analysis in a \textit{convex normal neighborhood}.
This is an open set in which (i) any two points are connected by a unique geodesic, and (ii) two points are causally (chronologically) separated if and only if the unique geodesic connecting them is causal (timelike).
In any Lorentzian spacetime every point is contained in such a neighborhood --- see for example theorem 8.12 of \cite{Wald:1984rg} or proposition 1.13 of \cite{Penrose:causality}.
We then choose an open set $N$ containing $q \in T$ with the properties that (i) $\bar{N}$ is compact, (ii) $\bar{N}$ is contained in $I^+(T) \cup T \cup I^-(T)$, and (iii) $\bar{N}$ is contained in a convex normal neighborhood.  For example we can take $N$ to be an open ball around $q$ in some Riemannian metric on $M$  which is contained in the (open) intersection of $I^+(T) \cup T \cup I^-(T)$ and a convex normal neighborhood of $q$.

\bfig
\includegraphics[height=4cm]{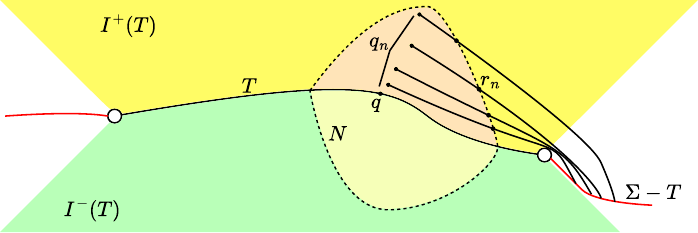}
\caption{Illustration of the proof of lemma \ref{lem3}.  The point $q\in T$ is hypothesized to be in the boundary of $D(T)$.  The points $q_n$ which are not in $D(T)$ approach $q$ from the future, and the points $r_n$ where causal curves from $q_n$ to $\Sigma-T$ exit $N$ have a limit point $r$ which cannot be in $T$ or $I^{\pm}(T)$. This is a contradiction since $r$ must be in $\overline{N}\subseteq I^+(T)\cup I^-(T)\cup T$.}\label{openfig}
\efig
Since we have assumed $q$ is not in the interior of $D(T)$, we can proceed as in the first paragraph of this proof to construct a sequence of points $q_n \in N$ converging to $q,$ lying on inextendible causal curves $\zeta_n$ that never intersect $T$, but that converges to an inextendible causal curve $\zeta$ that passes through $q.$
Without loss of generality, we will take the points $q_n$ to be in the future of $q,$ and we will take the curves $\zeta_n$ to be past-directed, see figure \ref{openfig} for an illustration.

Since $\bar{N}$ is compact, each of the curves $\zeta_n$ must eventually exit $\bar{N}$; call the point where this happens $r_n.$
Since $\bar{N}$ is compact, there is a convergent subsequence $r_{n_k} \to r$ with $r\in \bar{N}$.
Since the sequence $q_{n_k}$ converges to $q,$ and since the sequence $r_{n_k}$ converges to $r,$ and since everything is taking place within a convex normal neighborhood, proposition 1.11 of \cite{Penrose:causality} guarantees that the geodesic segment connecting $q_{n_k}$ to $r_{n_k}$ converges to the geodesic segment connecting $q$ to $r.$
Each of the geodesic segments in this sequence is causal, which means that $q$ and $r$ lie on a shared causal curve.
By our assumption that the curves $\zeta_n$ were past-directed, this gives us $r \in J^-(q).$
Since $T$ is acausal, we learn $r \notin T.$
We also learn that $r$ is not in $I^+(T),$ since then we would have a causal curve from $q$ to $r$ to a point in $T$, contradicting acausality of $T$.
Finally, $r$ is not in $I^-(T),$ since the points $r_{n}$ are all in the chronological future of $T$ --- the points $q_n$ are in $I^+(T)$, and the causal curves connecting $q_n$ to $r_n$ all lie in $I^+(T) \cup T \cup I^-(T)$, but they never intersect $T$, so they cannot leave $I^+(T)$ before reaching $r_n.$
So we have learned that $r$ is not in $I^+(T) \cup T \cup I^-(T)$, which means it should not be in $\bar{N}$, but this contradicts that we above found $r\in \bar{N}$.
\end{proof}
We now can prove theorem \ref{thm2}:
\begin{proof}
By \eqref{p2} it is enough to show that $D(T)$ is the spatial complement of an open set.  We will show that
\be
D(T)=D(\mathrm{Int}(\Sigma-T))',
\ee
where $\mathrm{Int}(\Sigma-T)$ means the interior within $\Sigma$ and $D(\mathrm{Int}(\Sigma-T))$ is open by lemma \ref{lem3}.  We will first show that $D(\mathrm{Int}(\Sigma-T))'\subseteq D(T)$.  Indeed say $p\in D(\mathrm{Int}(\Sigma-T))'$.  This means there is a neighborhood $U_p$ such that there are no causal curves from $U_p$ to $D(\mathrm{Int}(\Sigma-T))$.  Suppose that $p$ is not in $D(T)$.  Then by global hyperbolicity there would be a causal curve from $p$ to $\Sigma-T$.  Choosing a point $p'\in I(p)\cup U_p$ we could deform this to a chronal curve from $p'$ to $\Sigma-T$.  Since $T$ is a regular open set $\Sigma-T$ is a regular closed set, meaning that $\Sigma-T=\ol{\mathrm{Int}(\Sigma-T)}$.  We can therefore further deform this chronal curve to a chronal curve from $p'$ to $\mathrm{Int}(\Sigma-T)$.\footnote{To spell out the last step, assume we have a chronal curve from $p'$ to a point $q \in \Sigma - T$ that is not already in the interior of $\Sigma - T$.  Within a convex normal neighborhood of $q$, one can choose a point $\hat{p}$ on the chronal curve which necessarily has the geodesic square distance $d^2(\hat{p}, q) < 0.$ Since $\Sigma - T$ is the closure of $\mathrm{Int}(\Sigma - T)$, there are points in $\mathrm{Int}(\Sigma - T)$ in arbitrarily small neighborhoods around $q$; since geodesic squared distances are continuous, one can deform $q$ slightly into one of these neighborhoods while remaining timelike to $\hat{p},$ hence remaining timelike to $p'.$}
This contradicts no causal curves from $U_p$ to $D(\mathrm{Int}(\Sigma-T))$.  Conversely we will show $D(T)\subseteq D(\mathrm{Int}(\Sigma-T))'$.  Indeed for any $p\in D(T)$, since $D(T)$ is open there is a neighborhood $U_p$ such that every inextendible causal curve through $U_p$ intersects $T$.  In fact we can just take $U_p$ to be $D(T)$. Suppose now that $p$ were not in $D(\mathrm{Int}(\Sigma-T))'$.  Then for every neighborhood $U_p$ there would be a causal curve from $U_p$ to $D(\mathrm{Int}(\Sigma-T))$.  In particular this would be true for $U_p=D(T)$.  This curve however would be extendible to a causal curve from $T$ to  $\mathrm{Int}(\Sigma-T)$, which contradicts that $\Sigma$ is acausal.
\end{proof}

\bfig
\includegraphics[height=5cm]{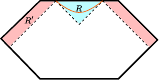}
\caption{An example of a region $R$ (shaded blue) in the Schwarzschild geometry that obeys $R=R''$ but is not the domain of dependence of an open subset of a Cauchy surface.  An achronal set $T$ of the type promised by theorem \eqref{thm1} is shaded orange, but it is not a subset of any Cauchy surface.  Nonetheless we expect the algebra of operators in $R$ to make sense for quantum field theory in this background (for example in the Hilbert space built on the Hartle-Hawking state), and for this algebra to obey Haag duality and disjoint additivity.}\label{schwarzfig}
\efig
Thus every region obtained as the domain of dependence of a regular open subset of an acausal Cauchy surface is a region also by our criterion.  On the other hand, although theorem \ref{thm1} promises that $R=D(T)$ for some acausal $T$, it does \textit{not} promise that $T$ is a regular open subset of a Cauchy surface, acausal or otherwise.  In fact there are globally hyperbolic spacetimes with regions $R$ obeying $R=R''$ where this is not true, see figure \ref{schwarzfig} for an example in the maximally extended Schwarzschild geometry.  Our attitude is that algebras of operators assigned to such regions make sense and should still be expected to obey Haag duality and disjoint additivity, so we include them in our definition.  On the other hand if we \textit{do} have $T=R\cap \Sigma$ for some Cauchy surface $\Sigma$, then $T$ is indeed a regular open subset of $\Sigma$.  That it is open is obvious since $R=R''$ is open, to see that it is regular we note that the intersection of any regular open set $R$ with a closed set $\Sigma$ is always a regular open set of $\Sigma$ in the subspace topology (this follows from writing out the definitions and using that the complement of $R$ is a regular closed set).  

\subsection{Spatially disjoint regions in globally hyperbolic spacetimes}
\bfig
\includegraphics[height=4cm]{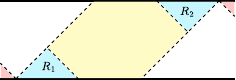}
\caption{An example of a globally hyperbolic spacetime with two spatially disjoint regions $R_1$ and $R_2$ such that $R_1\cup R_2$ is not a region.  The horizontal solid black lines are future/past spacelike singularities (for example this could be part of an elongated wormhole), $R_1$ and $R_2$ are the indicated blue regions, $(R_1\cup R_2)'$ is shaded red, and the yellow points are in $(R_1\cup R_2)''$ but not in $R_1\cup R_2$.  We could try modifying equation \eqref{dadef} to have $\A((R_1\cup R_2)'')$ on the left-hand side to account for such examples, but this would be wrong since e.g. there could be left-moving operators in the yellow region which leave no imprint in $R_1$ or $R_2$.}\label{examplefig}
\efig
In our definition of disjoint additivity for relativistic theories, in addition to the regions $R_1$ and $R_2$ being spatially disjoint (which recall we defined by \eqref{spdis}), we also required that $(R_1\cup R_2)''=R_1\cup R_2$.  This latter condition is automatic in simple examples, but in general it is necessary to impose by hand.  Indeed figure \ref{examplefig} shows an example of two spatially disjoint regions in a globally hyperbolic spacetime whose union is not a region.  On the other hand such regions are rather unusual, and we will now show that this situation cannot arise for spatially disjoint regions that are each obtained as the domain of dependence of a regular open subset of a Cauchy surface:
\begin{thm}
Let $M$ be a globally hyperbolic spacetime, and $T_{1},T_{2}$ be regular open subsets of acausal Cauchy surfaces $\Sigma_1,\Sigma_2$ respectively such that $D(T_1)$ and $D(T_2)$ are spatially disjoint.  Then 
\be\label{thm3res}
\left(D(T_1)\cup D(T_2)\right)''=D(T_1)\cup D(T_2).
\ee
\end{thm}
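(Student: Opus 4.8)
The inclusion $D(T_1)\cup D(T_2)\subseteq\bigl(D(T_1)\cup D(T_2)\bigr)''$ is automatic from \eqref{p1}, so the content is the reverse inclusion. Abbreviate $R_i:=D(T_i)$ and $W:=R_1\cup R_2$. I would first record the structural facts I expect to use. Since $J(W)=J(R_1)\cup J(R_2)$ and the interior distributes over finite intersections, $W'=R_1'\cap R_2'$. By the argument in the proof of Theorem \ref{thm2}, $R_i'=D(U_i)$ with $U_i:=\mathrm{Int}_{\Sigma_i}(\Sigma_i\setminus T_i)$ a regular open subset of $\Sigma_i$, so $R_i'$ is open (Lemma \ref{lem3}), $R_i'\cap\Sigma_i=U_i$, and $\overline{U_i}=\Sigma_i\setminus T_i$. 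Finally, spatial disjointness \eqref{spdis} gives $R_1\cap R_2=\varnothing$ and $\overline{R_1}\cap\overline{R_2}=\varnothing$ (since $R_i'$ is open and disjoint from $J(R_i)\supseteq\overline{R_i}$), as well as $\overline{T_i}\subseteq\overline{R_i}\subseteq R_j'$ for $\{i,j\}=\{1,2\}$, hence $\Sigma_i\setminus T_i\subseteq\overline{R_i'}$ and $\partial_{\Sigma_i}T_i\subseteq R_j'$.

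The plan I would pursue is to exhibit $W$ as the domain of dependence of a regular open subset of a \emph{single} acausal Cauchy surface, and then invoke Theorem \ref{thm2}. First, $T_1\sqcup T_2$ is acausal in $M$: each $T_i$ is acausal in $\Sigma_i$, and $\overline{T_1}\subseteq R_2'$ forbids any causal curve between $T_1$ and $T_2$; likewise $\overline{T_1}\sqcup\overline{T_2}$ is a disjoint, closed, mutually spacelike acausal set. I would then construct an acausal Cauchy surface $\Sigma$ of $M$ which coincides with $\Sigma_i$ on a neighborhood of $\overline{T_i}$ and interpolates by a spacelike hypersurface in between, so that $T_1\sqcup T_2$ becomes a regular open subset of $\Sigma$. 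Next I would show $D(T_1\sqcup T_2)=D(T_1)\cup D(T_2)=W$: the $\supseteq$ inclusion is immediate, and for $\subseteq$ one uses the claim that for $p\in D^{\pm}(T_1\sqcup T_2)$ the causal shadow $J^{\mp}(p)\cap\Sigma$ is a connected subset of $\Sigma$, so being contained in the disjoint open sets $T_1\sqcup T_2$ it lies entirely in one of them, whence $p\in D^{\pm}(T_1)$ or $p\in D^{\pm}(T_2)$. Applying Theorem \ref{thm2} to the regular open set $T_1\sqcup T_2\subseteq\Sigma$ then yields $W''=D(T_1\sqcup T_2)''=D(T_1\sqcup T_2)=W$.

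The main obstacle is the construction of the interpolating Cauchy surface $\Sigma$, and it is precisely here that spatial disjointness must enter: without it, as in figure \ref{examplefig}, no such $\Sigma$ exists and the statement fails. The inclusions $\overline{T_1}\subseteq R_2'=D(U_2)$ and $\overline{T_2}\subseteq R_1'=D(U_1)$, together with the absence of causal curves between $\overline{T_1}$ and $\overline{T_2}$, are exactly the input needed to slot both sets into one spacelike slice; I would do this by choosing a smooth Cauchy time function (in the sense of Bernal--S\'{a}nchez) adapted to $\Sigma_1$ near $\overline{T_1}$ and to $\Sigma_2$ near $\overline{T_2}$, gluing with a partition of unity, and checking that the mutual spacelike separation keeps the level set acausal. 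As a fallback, I would argue directly by contradiction: supposing $p\in W''\setminus W$, pick an open $U_p\subseteq W''$, so no causal curve joins $U_p$ to $R_1'\cap R_2'$; since $p\notin D(T_i)$ there is an inextendible causal curve $\gamma_i$ through $p$ missing $T_i$ and meeting $\Sigma_i$ at a unique $x_i\in\Sigma_i\setminus T_i$. If some $x_i\in\partial_{\Sigma_i}T_i$ then $x_i\in R_j'\cap\overline{R_i'}\subseteq\overline{R_1'\cap R_2'}$, and since $x_i$ is causally related to $p$ along $\gamma_i$, a perturbation of $p$ inside $U_p$ produces a causal curve from $U_p$ into the open set $R_1'\cap R_2'$, a contradiction. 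The delicate remaining case is when both $x_i$ land in the interior $U_i\subseteq R_i'$; there $x_i\notin R_j'$ (else the forbidden curve already exists), and one must chase a further inextendible causal curve out of $x_1$ that exits $U_2$ through $\overline{T_2}\subseteq R_1'$ while controlling the causal orientations along the concatenated curve, which requires the limiting-curve machinery of Lemmas \ref{lem1} and \ref{lem3} and which I expect to be of comparable difficulty to the Cauchy-surface construction above.
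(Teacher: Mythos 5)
Your overall strategy is exactly the one the paper follows: build a single acausal Cauchy surface $\Sigma$ containing $T_1\sqcup T_2$ as a regular open subset, prove $D(T_1\sqcup T_2)=D(T_1)\cup D(T_2)$, and then quote Theorem \ref{thm2}. The preliminary observations ($W'=R_1'\cap R_2'$, acausality of $T_1\sqcup T_2$, disjointness of the closures) are correct. However, the proposal leaves genuine gaps at precisely the two steps that carry the weight of the argument.

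First, the construction of the interpolating Cauchy surface is not carried out. You propose gluing Cauchy time functions adapted to $\Sigma_1$ and $\Sigma_2$ with a partition of unity and "checking" acausality, but this check is the entire difficulty: a convex combination of two time functions need not have acausal (or even achronal) level sets near where the gluing happens, and one must also verify that the resulting level set actually \emph{contains} $T_1$ and $T_2$ rather than being pushed off them by the interpolation. The paper avoids partitions of unity entirely: it sets $Y = \big(I^+(\Sigma_1)\cap I^+(\Sigma_2)\big)\cup I^+(T_1)\cup I^+(T_2)$ and takes $\Sigma=\dot{Y}$, proving achronality from the decomposition of $\dot{Y}$ into pieces of $\dot{Y_0}$ and $\dot{I}^+(T_i)$, and proving $T_i\subseteq\Sigma$ using spatial disjointness. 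Even then $\Sigma$ is only achronal, and a further lightcone-opening deformation of the metric (with a carefully chosen position-dependent parameter) is needed to upgrade it to acausal while keeping $T_1,T_2$ inside it. None of this is present in your sketch, and your own admission that this is "the main obstacle" confirms the proof is not complete at this point.

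Second, the identity $D(T_1\sqcup T_2)=D(T_1)\cup D(T_2)$ rests on the claim that $J^{\mp}(p)\cap\Sigma$ is connected for $p\in D^{\pm}(T_1\sqcup T_2)$, which you assert without proof. This is not a standard quotable fact; the paper proves it by exhibiting a homeomorphism between $J^-(p)\cap\Sigma_1$ and $\dot{J}^-(p)\cap J^+(\Sigma_1)$ via the integral curves of a timelike vector field, and then showing the latter set is connected because it is ruled by past-directed null geodesics from $p$. (Note the paper runs this argument on the original surface $\Sigma_1$, using $J^-(p)\cap\Sigma_1\subseteq T_1\cup(J(T_2)\cap\Sigma_1)$ with the two pieces disjoint by spatial disjointness, so it never needs the glued surface for this step; your version on $\Sigma$ would also work, but only once both the surface and the connectedness claim are established.) Your fallback "direct contradiction" argument is likewise left unfinished in the case where both exit points land in the interiors $U_i$, as you acknowledge. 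In short: right architecture, but the two load-bearing lemmas are missing.
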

\begin{proof}
The proof has two parts.  We first will show that
\be\label{cupD}
D(T_1)\cup D(T_2)=D(T_1\cup T_2),
\ee
and then that there is an acausal Cauchy surface $\Sigma$ that contains both $T_1$ and $T_2$ as regular open subsets.  
Moreover the union of two regular open sets with disjoint closures is regular, so \eqref{thm3res} then follows directly from theorem \ref{thm2}.\footnote{We note that readers who are willing to assume that $\Sigma_1=\Sigma_2$ only need to read the first part of the proof.}  

It is automatic that $D(T_1)\cup D(T_2)\subseteq D(T_1\cup T_2)$, so we need to show the opposite inclusion.  Say that $p\in D(T_1\cup T_2)$.  There must be a causal curve from $p$ to $T_1$ or $T_2$, so without loss of generality we consider a curve that goes from $p$ to $T_1$.  We further take $p$ to be in the future of $\Sigma_1$, again without loss of generality (if $p$ lies in $\Sigma_1$ then it must lie in $T_1$ so it is already in $D(T_1)$).   We now argue that $J^-(p)\cap \Sigma_1$ is a connected subset of $\Sigma_1$.  We will do this by showing that it is homeomorphic to $\dot{J}^-(p)\cap J^+(\Sigma_1)$, where the dot indicates taking the topological boundary.  This latter space is connected since every point in $\dot{J}^-(p)$ lies on a past-directed null geodesic from $p$ that lies entirely in $\dot{J}^-(p)$.\footnote{This follows from the time-reverse of theorem 8.1.6 in \cite{Wald:1984rg}, together with global hyperbolicity since if the past-directed future-inextendible null geodesic through a point $q\in \dot{J}^-(p)$ promised by theorem 8.1.6 didn't come from $p$ then it would not reach any Cauchy surface which lies to the future of $p$.}  Moreover by lemma 8.1.1 of \cite{Wald:1984rg} there exists a past-pointing timelike vector field on $M$.  Each integral curve of this vector field is an inextendible chronal curve, and each of these integral curves which intersects $J^-(p)\cap \Sigma_1$ must also intersect $\dot{J}^-(p)\cap J^+(\Sigma_1)$ since otherwise it would not be able to get to a Cauchy surface which is to the future of $p$.  Similarly any integral curve which intersects $\dot{J}^-(p)\cap J^+(\Sigma_1)$ must also intersect $J^-(p)\cap \Sigma_1$, and none of these intersections can happen more than once since both surfaces are achronal.  Thus we can define a homeomorphism $f:\dot{J}^-(p)\cap J^+(\Sigma_1)\to J^-(p)\cap \Sigma_1$ using these integral curves (see figure \ref{connfig}).  Thus $J^-(p)\cap \Sigma_1$ is connected. Next, we note that $J(T_2)$ is an open set since it is equal to $J(D(T_2))$ and $D(T_2)$ is open by lemma \ref{lem3}, so $J(T_2)\cap \Sigma_1$ is an open subset of $\Sigma_1$.  Since $p\in D(T_1\cup T_2)$, it must be the case that 
\be
J^-(p)\cap \Sigma_1\subseteq T_1 \cup \big(J(T_2)\cap \Sigma_1\big).
\ee
Moreover $T_1$ and $J(T_2)\cap \Sigma_1$ must be disjoint, since $D(T_1)$ and $D(T_2)$ are spatially disjoint. Since $J^-(p)\cap \Sigma_1$ is connected, it therefore must lie entirely either in $T_1$ or in $J(T_2)\cap \Sigma_1$, and since it has at least one point in $T_1$ it must lie in $T_1$.  Therefore $p$ is in $D(T_1)$.  Had we instead assumed that the causal curve from $p$ went to $T_2$ we would have concluded $p\in D(T_2)$, so either way we have $p\in D(T_1)\cup D(T_2)$ and \eqref{cupD} is established.   

\bfig
\includegraphics[height=4cm]{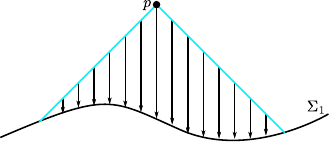}
\caption{Using the integral curves of a a timelike vector field to define a homeomorphism between $\dot{J}^-(p)\cap J^+(\Sigma_1)$ (shaded blue) and $J^-(p)\cap \Sigma_1$.}\label{connfig}
\efig
We now turn to constructing an acausal Cauchy surface containing $T_1$ and $T_2$.\footnote{This construction is inspired by the proof of lemma 14 in \cite{Headrick:2014cta}.}  As a warmup, we first note that if we define
\be
Y_0=I^+(\Sigma_1)\cap I^+(\Sigma_2),
\ee
then $\Sigma_0 \equiv \dot{Y_0}$ is an acausal Cauchy surface (here the dot again indicates taking the topological boundary). Indeed let $t_1$, $t_2$ be time functions such that $\Sigma_1$, $\Sigma_2$ are the slices $t_1=0$, $t_2=0$ respectively.  By definition a time function is a continuous function that is strictly increasing along any causal curve, and in any globally hyperbolic spacetime a time function exists such that each slice of constant $t$ is an acausal Cauchy surface (see theorem 8.3.14 of \cite{Wald:1984rg}). Moreover any acausal Cauchy surface is a constant $t$ slice for some time function (see theorem 5.15 of \cite{Bernal:2005qf}).  The surface $\Sigma_0$ can be viewed as the $u=0$ slice for the time function $u=\min(t_1,t_2)$, so it is necessarily acausal.  Moreover any inextendible causal curve must cross $\Sigma_0$, since it must begin to the past of both $\Sigma_1$ and $\Sigma_2$ and then later enter $Y_0$.   

\bfig
\includegraphics[height=5cm]{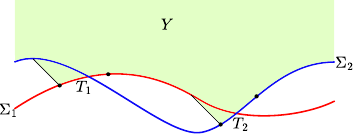}
\caption{Constructing an achronal Cauchy surface containing $T_1$ and $T_2$.  $\Sigma_1$ is shaded red, $\Sigma_2$ is shaded blue, $Y$ is shaded green, $T_1$ is the region between the two black dots on $\Sigma_1$, and $T_2$ is the region between the two black dots on $\Sigma_2$.  The Cauchy surface is the boundary of $Y$.}\label{cauchyfig}
\efig
This Cauchy surface however does not necessarily contain $T_1$ and $T_2$, so to get one that does we modify the definition to
\begin{align}\nonumber
Y&=Y_0\cup I^+(T_1)\cup I^+(T_2)\\
\Sigma&=\dot{Y}.
\end{align}
See figure \ref{cauchyfig} for an illustration of this construction.  We now argue that $\Sigma$ is an achronal Cauchy surface containing $T_1$ and $T_2$; we will modify it again later to make it acausal.  We can write $\Sigma$ as
\begin{align}\nonumber
\Sigma=&\big(\dot{Y}_0-I^+(T_1)\cap \dot{Y}_0-I^+(T_2)\cap \dot{Y}_0\big)\\\nonumber
&\cup \big(\dot{I}^+(T_1)-Y_0\cap \dot{I}^+(T_1)-I^+(T_2)\cap\dot{I}^+(T_1)\big)\\
&\cup\big(\dot{I}^+(T_2)-Y_0\cap \dot{I}^+(T_2)-I^+(T_1)\cap\dot{I}^+(T_2)\big),\label{Sigunion}
\end{align}
with the first factor in the union being acausal since $\dot{Y}_0$ is and the second two being achronal since the boundary of the future of any set is achronal (see theorem 8.1.3 of \cite{Wald:1984rg}).  Any future-directed chronal curve from $\dot{I}^+(T_1)$ or $\dot{I}^+(T_2)$ must immediately enter $I^+(T_1)$ or $I^+(T_2)$ respectively and not leave, and thus must enter $Y$ and not leave.  Therefore it cannot again intersect $\Sigma$.  Similarly any future-directed chronal (and in fact causal) curve from $\dot{Y}_0$ must immediately enter $Y_0$ and not leave, and therefore cannot again intersect $\Sigma$.  Therefore $\Sigma$ is achronal.  Moreover every inextendible causal curve must begin to the past of $\Sigma_1$ and $\Sigma_2$, and eventually enter $Y_0$, so it must cross $\Sigma$.  Therefore $\Sigma$ is an achronal Cauchy surface. Finally to see that $T_1$ and $T_2$ are contained in $\Sigma$, we note that $T_1$ is contained in $\dot{I}^+(T_1)$ but not in $Y_0$ or $I^+(T_2)$, the former because $Y_0$ does not intersect $\Sigma_1$ and the latter because $T_1$ and $T_2$ are spatially disjoint.  Therefore $T_1$ is contained in the middle factor of the union in \eqref{Sigunion}, and thus in $\Sigma$.  The same argument shows that $T_2$ is also contained in $\Sigma$.  

Finally we need to show how to deform $\Sigma$ to make it acausal while still containing $T_1$ and $T_2$.  The idea is to choose a timelike vector field $t^\mu$ on $M$ (which always exists, for example by raising the index of the gradient of a time function), and then define a new metric
\be
\wt{g}_{\mu\nu}(x)=g_{\mu\nu}(x)-\lambda(x)t_\mu(x) t_\nu(x)
\ee
with $\lambda(x)>0$.  This ``opens up the lightcones'', so a causal curve in the metric $g$ is a chronal curve in the metric $\wt{g}$ and a null curve in the metric $\wt{g}$ is spacelike in the metric $g$.  We emphasize that in general we will need to consider position-dependent $\lambda$, as a constant $\lambda$ only works if $\Sigma_1$ and $\Sigma_2$ are compact.  We then define
\begin{align}\nonumber
\wt{Y}&=Y_0\cup \wt{I}^+(T_1)\cup \wt{I}^+(T_2)\\
\wt{\Sigma}&=\dot{\wt{Y}}.
\end{align}
We now argue that for sufficiently small $\lambda$, $\wt{\Sigma}$ is an acausal Cauchy surface in the old metric $g$.  We first note that since $\dot{\wt{I}}^+(T_1)$ and $\dot{\wt{I}}^+(T_2)$ are achronal in the new metric, they are acausal in the old metric.  Moreover any future-directed causal curve in the old metric which starts in either of them is chronal in the new metric, and hence enters $\wt{Y}$ immediately and cannot leave.  A future-directed causal curve in the old metric which starts in $\dot{Y}_0$ stays in $Y_0$ just as before, and therefore also in $\wt{Y}$, so $\wt{\Sigma}$ is acausal in the old metric.  Moreover any inextendible causal curve $\gamma$ in the old metric must start out in the past of both $\Sigma_1$ and $\Sigma_2$, and we can choose $\lambda$ to be small enough that $\wt{I}^+(T_1)$ does not intersect $I^-(\Sigma_1)$ and $\wt{I}^+(T_2)$ does not intersect $I^-(\Sigma_2)$.\footnote{To see that such a $\lambda$ exists, note that since $\Sigma_1$ is acausal, at each point in $\Sigma_1$ we can choose $\lambda$ to be small enough that the local widened lightcone in a convex normal neighborhood does not intersect $\Sigma_1$.  This ensures that no future-directed causal curve in the new metric can cross $\Sigma_1$ more than once, so $\Sigma_1$ remains acausal in the new metric.  We can then make the same choice for $\Sigma_2$, taking the minimum of the two $\lambda$s for any points in common and then smoothing as needed.}  Therefore $\gamma$ must start out not in $\wt{Y}$.  On the other hand it must end up in the future of $\Sigma_1$ and $\Sigma_2$ since they are Cauchy surfaces in the old metric, and therefore $\gamma$ must cross into $\wt{Y}$.  Thus $\wt{\Sigma}$ is an acausal Cauchy surface in the old metric.  

The last thing we need to show is that $\wt{\Sigma}$ contains $T_1$ and $T_2$.  This will only work for sufficiently small $\lambda$, since if $T_1\cap \wt{I}^+(T_2)$ or $T_2 \cap \wt{I}^+(T_1)$ ends up being non-empty due to the lightcone opening in $\wt{g}$ then $T_1$ or $T_2$ will not be in $\wt{\Sigma}$. The idea is to choose $\lambda$ on each null generator of $\dot{I}^+(T_1)$ and $\dot{I}^+(T_2)$ to be small enough that the new generators in $\dot{\wt{I}}^+(T_1)$ and $\dot{\wt{I}}^+(T_2)$ are close enough to those in $\dot{I}^+(T_1)$ and $\dot{I}^+(T_2)$ to still avoid $T_2$ and $T_1$ respectively.  For each generator we can do this because $T_1$ and $T_2$ are spatially disjoint, so the distance in $\Sigma_1$ from $\ol{T}_1$ to each generator of $\dot{I}^+(T_2)$ is greater than zero.\footnote{This is because in a metric space the distance between a point $p$ and a non-empty closed set $C$ which does not contain $p$ is greater than zero.  Indeed since $C$ is closed, there must be a ball of radius $\epsilon>0$ centered at $p$ which does not intersect $C$. In our case $C$ is $\ol{T}_1$, $p$ is the intersection of a generator of $\dot{I}^+(T_2)$ with $\Sigma_1$, and the topology on $\Sigma_1$ is a metric space induced by an arbitrary choice of Riemannian metric on $M$. Here $\overline{T_1}$ does not contain $p$ because $D(T_1)$ and $D(T_2)$ are spatially disjoint.}
\end{proof}

\section{A general lattice proof of disjoint additivity}
\label{app:general-DA}

In section \ref{sec:lattice-DA}, we considered a general lattice system carrying the action of a compact Lie group $\S$, defined what it means for two regions $R_1$ and $R_2$ to be ``non-adjacent,'' and showed that disjoint additivity is satisfied for non-adjacent regions in the constrained Hilbert space $\H_S$.
To recapitulate, we first assumed that each $U(s)$ factorizes between complementary regions, $U(s) = U_{R}(s) \otimes U_{R'}(s)$.
We then defined $\N$ to be the subgroup of $\S$ that acts trivially on both $R_1$ and $R_2,$ and assumed the existence of Lie subgroups $\S_1, \S_2 \subseteq \S$ that mutually commute, that can be added to $\N$ to generate all of $\S,$ and such that $\S_1$ acts trivially on $R_2$ and vice versa.
In section \ref{sec:lattice-DA}, we made the additional simplifying assumption $\S = \S_1 \times \S_2 \times \N,$ but we claimed this was not necessary.
Here we give the more general proof.
Note that above and below, when we say that an operator $O$ acts trivially on the region $R$, what we mean is $O \subseteq \A(R)'.$

In section \ref{sec:lattice-DA}, we showed that it suffices to prove the identity $(O_1 O_2)_\S = (O_1)_\S (O_2)_\S$ for $O_j \in \A(R_j)$ and $(O_j)_\S$ defined as in equation \eqref{eq:aS}.
The left-hand side of this equation is
\begin{equation} \label{eq:app-prefactorized-integral}
    (O_1 O_2)_\S
        = \int ds\, U(s) O_1 O_2 U(s)^{\dagger}.
\end{equation}
The integrand does not change if we right-multiply $s$ by an element $x \in \N,$ since $x$ is assumed to act trivially on both $R_1$ and $R_2.$
It is also true that $\N$ is a normal subgroup of $\S,$ since for $s \in \S$ and $x \in \N$ we have
\begin{align}
    \begin{split}
    U(s x s^{-1}) O_j U(s x s^{-1})^{\dagger}
        & = U(s) U(x) \left[ U_{R_j}(s^{-1}) O_j U_{R_j}(s)\right] U(x^{-1}) U(s^{-1}) \\
        & = U(s) \left[ U_{R_j}(s^{-1}) O_j U_{R_j}(s) \right] U(s^{-1}) \\
        & = O_j.
    \end{split}
\end{align}
Moreover, $\N$ is topologically closed within $\S$ by its definition as an intersection,\footnote{We have $U(\N) = U(\S) \cap \A(R_1 \cup \R_2)'$; both sets on the right-hand side are closed in the strong topology, and the representation $U$ is a homeomorphism onto its image with respect to the strong topology.} so by the closed subgroup theorem it is a Lie subgroup of $\S.$
Putting this all together, the integral in equation \eqref{eq:app-prefactorized-integral} passes to a Haar integral over the quotient group $\S/\N$:
\begin{equation} \label{eq:prefactorized-integral-2}
    (O_1 O_2)_\S
        = \int_{\S/\N} d[s]\, U(s) O_1 O_2 U(s)^{\dagger}.
\end{equation}
We will now show that this integral factorizes into a piece that acts only on $R_1$ and a piece that acts only on $R_2.$

To accomplish this, we will first show that $\S$ possesses a dense subset of elements of the form $s^{(1)} s^{(2)} x$ with $s^{(j)} \in \S_j$ and $x \in \N.$
To see this, we note that by assumption (iii) above, elements of the form
\begin{equation}
    s^{(1)}_{1} s^{(2)}_1 x_1 \dots s^{(1)}_n s^{(2)}_n x_n
\end{equation}
are dense in $\S$.
But every such element can be rewritten in the form $s^{(1)} s^{(2)} x$ by writing
\begin{equation}
    x_{n-1} s^{(1)}_n s^{(2)}_n x_n
        = (s^{(1)}_n s^{(2)}_n) \left[ (s^{(1)}_n s^{(2)}_n)^{-1} x_{n-1} (s^{(1)}_n s^{(2)}_n)\right] x_n,
\end{equation}
using normality of $\N$ to conclude that the term in brackets is in $\N$, and then proceeding by induction using the assumption that $\S_1$ and $\S_2$ commute.

For each element of the form $s^{(1)} s^{(2)} x,$ the choice of representatives $s^{(1)} s^{(2)}$ is not necessarily unique, since if $\S_j$ has nontrivial intersection with $\N,$ then elements in this intersection can be moved freely between $s^{(j)}$ and $x.$
We will see however that this is the only non-uniqueness, so there is a map from elements of the form $s^{(1)} s^{(2)} x$ to the Lie group
\begin{equation} \label{eq:product-group}
    (\S_1 / (\S_1 \cap \N)) \times (\S_2 / (\S_2 \cap \N))
\end{equation}
given by
\begin{equation}
    s^{(1)} s^{(2)} x \mapsto ([s^{(1)}],[s^{(2)}]).
\end{equation}
To show this, suppose we have
\begin{equation}
    s^{(1)} s^{(2)} x = \tilde{s}^{(1)} \tilde{s}^{(2)} \tilde{x}.
\end{equation}
Straightforward manipulations yield the expression
\begin{equation}
    1 = s^{(1)} s^{(2)} (\tilde{s}^{(1)} \tilde{s}^{(2)})^{-1} \left[ (\tilde{s}^{(1)} \tilde{s}^{(2)}) x \tilde{x}^{-1} (\tilde{s}^{(1)} \tilde{s}^{(2)})^{-1} \right]
\end{equation}
The term in square brackets is some term $\hat{x} \in \N$, and commutativity of $\S_1$ and $\S_2$ lets us rewrite the whole expression as
\begin{equation}
    \hat{x}^{-1} = s^{(1)} (\tilde{s}^{(1)})^{-1} s^{(2)} (\tilde{s}^{(2)})^{-1},
\end{equation}
The right-hand side must act trivially on both $R_1$ and $R_2$.
Since all elements in $\S_2$ act trivially on $R_1,$ we conclude that $(s^{(1)} \tilde{s}^{(1)})^{-1}$ must act trivially on $R_1$; hence it is an element of $\N$, and the group elements $s^{(1)}$ and $\tilde{s}^{(1)}$ are in the same equivalence class of the quotient $\S_1 / (\S_1 \cap \N).$
Similar considerations hold for $\S_2$.

We now have a well defined map from a dense subset of $\S$ onto the product group in equation \eqref{eq:product-group}.
By similar arguments to those given in the preceding paragraph, it is easy to show that this map is a homomorphism, and that its kernel is exactly $\N$.
So if we can show that our map is continuous, then it will extend to all of $\S$, and the first isomorphism theorem will guarantee the equivalence
\begin{equation} \label{eq:quotient-product}
    \S / \N
        \cong (\S_1 / (\S_1 \cap \N)) \times (\S_2 / (\S_2 \cap \N)).
\end{equation}
But because $\S$ is a Lie group, continuity of a homomorphism only needs to be checked at the identity; so if we have a sequence
\begin{equation} \label{eq:g-convergence}
    s_{j}^{(1)} s_{j}^{(2)} x_j \to 1,
\end{equation}
then we must show that the sequences $[s_j^{(1)}]$ and $[s_j^{(2)}]$ both converge to $[1].$
This part of the argument is a bit technical, so we give it in a separate subsection below.
The results of that subsection establish the isomorphism in equation \eqref{eq:quotient-product}.
This allows us to rewrite equation \eqref{eq:prefactorized-integral-2} as
\begin{align}
    \begin{split}
    (O_1 O_2)_\S
        & = \left( \int_{(\S_1 / (\S_1 \cap \N))} d[s_1]\, U(s_1) O_1 U(s_1)^{\dagger} \right) \left( \int_{(\S_2 / (\S_2 \cap \N))} d[s_2]\, U(s_2) O_2 U(s_2)^{\dagger} \right),
    \end{split}
\end{align}
and now each integral on the right-hand side can be lifted to the full group $\S$ to obtain
\begin{align}
    \begin{split}
    (O_1 O_2)_\S
        & = (O_1)_\S (O_2)_\S,
    \end{split}
\end{align}
completing the proof of disjoint additivity.

\subsection{A continuity lemma for Lie groups}
\label{app:continuity-lemma}

Here we prove that every identity-converging sequence
\begin{equation} \label{eq:app-group-convergence}
    s_j^{(1)} s_j^{(2)} x_j \to 1
\end{equation}
yields converging sequences in the quotient spaces, $[s_j^{(1)}] \to 1$ and $[s_j^{(2)}] \to 1.$
In the proof, we will make use of the factorization property
\begin{equation}
    U(s) = U_{R}(s) \otimes U_{R'}(s),
\end{equation}
which implies that each map $s \mapsto U_{R}(s)$ is a projective representation of the Lie group $\S$, and moreover that this map is projectively continuous.
Consequently, we can pass to the quotient and view $U_R$ as a continuous homomorphism of $\S$ into the projective unitary group $PU(\H).$
The image of $\S$ under this map is a compact Lie group, which will allow us to use the tools of Lie theory to study the maps $U_R$ and $U_{R'}.$
Note however that this group may not be isomorphic to $\S$, as $U_R$ can have a kernel.

To begin our proof, we start with the continuity assumption of section \ref{sec:Lie-constraint-definitions}, so that \eqref{eq:app-group-convergence} implies\footnote{When we write expressions like this, we are always using the strong topology: $O_n \to O$ means $O_n |\psi\rangle \to O |\psi\rangle$ for every vector $|\psi\rangle.$}
\begin{equation}
    U(s_j^{(1)} s_j^{(2)} x_j) \to 1.
\end{equation}
We then use the factorization assumption, giving
\begin{equation}
    U_{R_1}(s_j^{(1)} s_j^{(2)} x_j) \otimes U_{R_1'}(s_j^{(1)} s_j^{(2)} x_j) \to 1.
\end{equation}
Because $s_j^{(2)}$ and $x_j$ act trivially on $R_1,$ this gives
\begin{equation}
	 \chi_j U_{R_1}(s_j^{(1)}) \otimes U_{R_1'}(s_j^{(1)} s_j^{(2)} x_j) \to 1
\end{equation}
for some sequence of phases $\chi_j,$ which implies that $U_{R_1}(s_j^{(1)})$ converges projectively to the identity.

For the rest of the appendix we will drop the superscript, and simply discuss a sequence $s_j$ in $\S_1$ for which $U_{R_1}(s_j)$ converges projectively to the identity.
Our goal is to show $[s_j] \to 1,$ which means we must find a sequence $z_j \in \S_1 \cap \N$ with $s_j z_j \to 1$.
Since we assumed in section \ref{sec:Lie-constraint-definitions} that the representation was faithful, we may equivalently check the condition
\begin{equation}
    U(s_j z_j) \to 1.
\end{equation}

We begin by writing
\begin{equation} \label{eq:app-g-splitting}
    [U(s_{j})] = [U_{R_1}(s_j) \otimes U_{R_1'}(s_j)],
\end{equation}
where brackets around a unitary denote the projective quotient (not to be confused with brackets around a group element, which denote the quotient by $\S_1 \cap \N$).
So far we know
\begin{equation} \label{eq:app-convergence}
   [U_R(s_j)\otimes 1]= [U(s_j)] [1 \otimes U_{R_1'}(s_j^{-1})] \to 1.
\end{equation}
The idea is that while $[U(s_j)]$ and $[1 \otimes U_{R_1'}(s_j^{-1})]$ are both sequences in the compact Lie group generated by $[U_{R_1}(\S_1) \otimes 1]$ and $[1 \otimes U_{R_1'}(\S_1)],$ they live in different Lie subgroups.
One of them is contained in the subgroup $[U(\S_1)]$, while the other is contained in the subgroup $[1 \otimes U_{R_1'}(\S_1)]$.
But equation \eqref{eq:app-convergence} roughly says that these two sequences become very close to one another for large $j$ --- therefore we might hope to find a sequence in the intersection that approximates either sequence arbitrarily well; we will find our group elements $z_j$ by performing such an approximation in $[U(\S_1)]$ for the sequence $[1 \otimes U_{R_1'}(s_j^{-1})]$.

Every Lie group admits a right-invariant metrization --- see for example \cite[theorem 1.22]{montgomery2018topological}.
So there is a metric $d(\cdot, \cdot)$ on $[U_{R_1}(\S_1) \otimes 1] \times [1 \otimes U_{R_1'}(\S_1)]$ that induces the Lie group topology and that is invariant under right-multiplication of both entries.
From this and from equation \eqref{eq:app-convergence}, one deduces
\begin{equation} \label{eq:app-distance-convergence}
    d([U(s_j)], [1 \otimes U_{R_1'}(s_j)]) \to 0.
\end{equation}
Since $[U(\S_1)]$ is compact, any open set containing it has, within its interior, some $\epsilon$-sized thickening around $[U(\S_1)]$ --- see figure \ref{fig:app-thickening-a}.\footnote{If a compact set $K$ is contained in an open set $\Omega,$ then because the distance function is continuous, there is a minimum distance between $K$ and the complement of $\Omega$. Covering $K$ by balls with radii less than this minimum distance gives an $\epsilon$-thickening of $K$ that is contained in $\Omega.$}
So equation \eqref{eq:app-distance-convergence} tells us that $[1 \otimes U_{R_1'}(s_j)]$ eventually lies within any open set containing $[U(\S_1)].$
In fact, this tells us that it eventually enters any open set containing the intersection $[U(\S_1)] \cap [1 \otimes U_{R_1'}(\S_1)]$ --- see figure \ref{fig:app-thickening-b}.
For if $\Omega$ is an open set containing the intersection, then we may take the union with the complement of $[1 \otimes U_{R_1'}(\S_1)]$ to obtain an open set containing $[U(\S_1)]$; the sequence $[1 \otimes U_{R_1'}(s_j)]$ must eventually lie within this open set, but it cannot ever enter the complement of $[1 \otimes U_{R_1'}(\S_1)],$ so instead it must eventually lie within $\Omega.$

\begin{figure}[h]
    \centering
    \subcaptionbox{\label{fig:app-thickening-a}}[0.45\linewidth]
    {%
    \begin{tikzpicture}
        \draw [ultra thick, fill=lightgray, fill opacity=0.7] (0,0) to[out=80, in=210] ++(2, 1) to[out=210-180, in=180] ++(1,1) to[out=0, in=100] ++(1, -1.5) to[out=280, in=30] ++(-1, -2) to[out=210, in=-20] ++(-2, 0.75) to [out=160, in=260] (0,0);
        \draw [dashed] (2.2, 0) circle (2.5);
        \begin{scope}[scale=1.1, xshift=-0.2cm]
            \draw [thick, fill=lightgray, fill opacity=0.2] (0,0) to[out=80, in=220] ++(2, 1.1) to[out=220-180, in=180] ++(1,0.9) to[out=0, in=100] ++(1, -1.5) to[out=280, in=30] ++(-1, -2) to[out=210, in=-20] ++(-2, 0.75) to [out=160, in=260] (0,0);
        \end{scope}
        \node at (2.2, 0) {$K$};
        \node at (4.5, 2) {$\Omega$};
        \node at (1.8, 1.6) {$K_{\epsilon}$};
    \end{tikzpicture}
    }
    \subcaptionbox{\label{fig:app-thickening-b}}[0.45\linewidth]
    {
    \begin{tikzpicture}
        \fill [ultra thick, fill=RoyalBlue, fill opacity=0.7] (0,0) to[out=80, in=210] ++(2, 1) to[out=210-180, in=180] ++(1,1) to[out=0, in=100] ++(1, -1.5) to[out=280, in=30] ++(-1, -2) to[out=210, in=-20] ++(-2, 0.75) to [out=160, in=260] (0,0);
        \fill [ultra thick, fill=BrickRed, fill opacity=0.7] (-1,0) circle (1.75);
        \draw [ultra thick] (0,0) to[out=80, in=210] ++(2, 1) to[out=210-180, in=180] ++(1,1) to[out=0, in=100] ++(1, -1.5) to[out=280, in=30] ++(-1, -2) to[out=210, in=-20] ++(-2, 0.75) to [out=160, in=260] (0,0);
        \draw [ultra thick] (-1,0) circle (1.75);

        \begin{scope}
            \draw [dashed] (-0.15, 0) to[out=80, in=190] ++(0.95, 0.8) to[out=-80, in=70] ++(-0.1, -1.6) to[out=160, in=260] (-0.15, 0);
            \node at (0.75, 1.1) {$\Omega$};
        \end{scope}

        \node at (2.2, 0) {$[U(\S_1)]$};
        \node at (-1.45, 0) {$[1 \otimes U_{R_1'}(\S_1)]$};
    \end{tikzpicture}
    }

    \caption{(a) In a metric space, for every open set $\Omega$ containing a compact set $K$, there is an $\epsilon$-thickening $K_{\epsilon}$ that is contained entirely in $\Omega.$
    (b) Every open set $\Omega$ containing the intersection $[1 \otimes U_{R_1'}(\S_1)] \cap [U(\S_1)]$ can be supplemented by the complement of $[1 \otimes U_{R_1'}(\S_1)]$ to form an open set containing $[U(\S_1)].$
    A sequence in $[1 \otimes U_{R_1'}(\S_1)]$ that eventually enters this larger open set must necessarily enter $\Omega.$}
    \label{fig:app-thickening}
\end{figure}
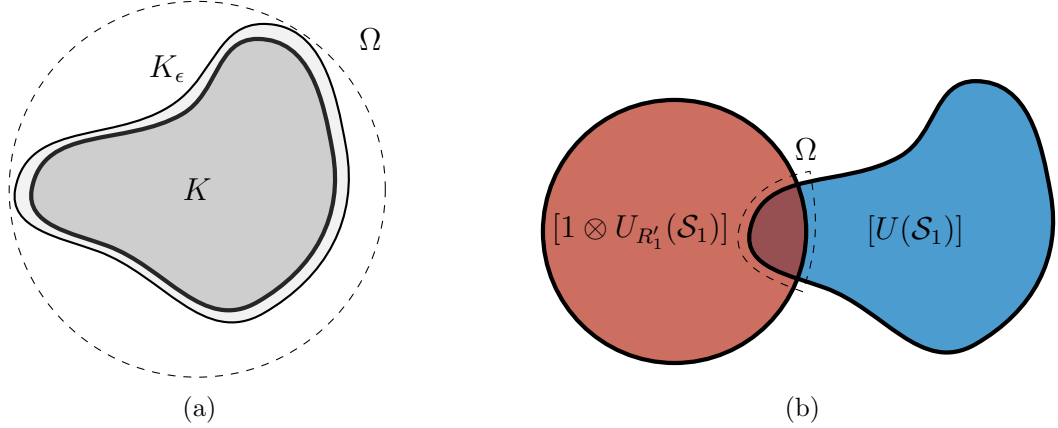

We now know that for any $\epsilon$-size thickening of $[U(\S_1)] \cap [U_{R_1'}(\S_1)],$ the sequence $[1 \otimes U_{R_1'}(s_j)]$ is eventually contained entirely within this set.
In other words, the distance
\begin{equation}
    \inf\{ d([1 \otimes U_{R_1'}(s_j)], p) \, |\, p \in [U(\S_1)] \cap [1 \otimes U_{R_1'}(\S_1)]\} 
\end{equation}
converges to zero, and by compactness of the intersection we may replace the infimum by a minimum.
Letting $[U(z_j^{-1})]$ be a point that attains the minimum for each $j,$ we find
\begin{align}
\begin{split}
    d([U(s_j z_j)],1)
        & = d([U(s_j)], [U(z_j^{-1})]) \\
        & \leq d([U(s_j)], [1 \otimes U_{R_1'}(s_j)])
            + d([1 \otimes U_{R_1'}(s_j)], [U(z_j^{-1})]) \\
        & \to 0.
    \end{split}
\end{align}
So we have demonstrated the existence of a sequence $z_j \in \S_1$ such that
\begin{enumerate}[(i)]
    \item $[U(z_j)] \in [U(\S_1)]\cap[1 \otimes U_{R_1'}(\S_1)]$, so $U(z_j)$ acts trivially on operators in regions  $R_1$ and $R_2$.  Therefore $z_j$ must be contained in $\N$.
    \item the sequence $U(s_j z_j)$ converges projectively to the identity.
\end{enumerate}
If we can show that $z_j$ may be chosen so that $U(s_j z_j)$ converges \textit{exactly} to the identity, without needing to introduce phase factors, then we are done.
To accomplish this, we can perform an identical analysis to the one we have already performed, but this time carry it out in the group $U(\S_1)$ with $U(1)$ as the quotient subgroup, instead of in the group $PU(\S_1)$ with $PU(\S_1 \cap \N)$ as the quotient subgroup.
Namely, since $U(s_j z_j)$ converges projectively to the identity, we know there is a sequence $U(s_j z_j) \in U(\S_1)$ and a sequence $\zeta_j \in U(1)$ such that we have
\begin{equation}
    \zeta_j U(s_j z_j) \to 1.
\end{equation}
By repeating exactly the steps we performed above, one finds that there exists a sequence $\lambda_j$ in the intersection $U(\S_1) \cap U(1)$ with
\begin{equation}
    \lambda_j U(s_j z_j) \to 1.
\end{equation}
But since $\lambda_j$ is in the intersection $U(\S_1) \cap U(1)$, it can be represented as $U(v_j)$ for some $v_j \in \S_1 \cap \N.$\footnote{$\lambda_j$ is a multiple of the identity, so it acts trivially on $R_1$, which is why $v_j$ must be in $\N$.}
We can therefore rewrite $z_j v_j = z_j'$ and obtain $U(s_j z_j') \to 1$ with $z_j' \in \S_1 \cap \N.$

\section{Majorana chain with an even number of fermions}
\label{sec:Even_Majo_chain_Haag_duality}

In this appendix we prove that the algebra of operators for even number of Majorana fermions in 1+1D satisfies Haag duality. 
The space is a 1D ring of $L=2\ell$ sites, with a single Majorana fermion $\chi_j$ operator per site. 
Here  $j=1,2,\cdots, L=2\ell$. 
The Majorana fermion operators obey the Clifford algebra $\{ \chi_j ,\chi_{j'}\} = 2\delta_{j j'}$. 
A subregion $\R$ is specified by a collection of sites, and we assign the algebra of operator $\cal {A}(R)$ to be those generated by $\chi_j$ with $j\in {\cal R}$.

These operators act on a $2^{\ell}$-dimensional Hilbert space, which furnishes an irreducible representation of the Clifford algebra. We choose the following representation for the Majorana fermions:
\ie
\label{eqn:Majo_Rep}
&\chi_1 = X \otimes \mathbf{1}\otimes\mathbf{1}\otimes \cdots\otimes \mathbf{1} ,\\
&\chi_2 = Y \otimes \mathbf{1} \otimes \mathbf{1}\otimes \cdots,\\
&\chi_3 = Z \otimes X \otimes \mathbf{1}\cdots,\\
&\chi_4 = Z \otimes Y \otimes \mathbf{1}\cdots,\\
\vdots
\fe
where there are $\ell$ tensor factors, each of which acts on a qubit composed from a pair of Majorana fermions. 
Hence each qubit site is made out of two Majorana sites. 
Note that we define spatial subregions $\cal R$ with respect to the Majorana sites. In particular,  we allow a subregion $\cal R$ to consist of odd number of Majorana sites. 
The fermion parity operator in this representation is 
\ie
\label{eqn:F-parity}
(-1)^F = Z\otimes Z\otimes Z\cdots 
\fe
which implements an inner automorphism $\chi_j\to -\chi_j$.

 For any subregion $\R$, it is clear that $\mathcal{A}(\R)\subseteq \mathcal{A}(\R^\prime)^\prime$. Hence, to prove Haag duality we need to prove $\mathcal{A}(\R^\prime)^\prime\subseteq  \mathcal{A}(\R)$. 
It turns out that it will be easier to prove the contrapositive,
\begin{align}
\label{eqn:contrapos_HD}
    x\notin \mathcal{A}(\R)\implies x\notin \mathcal{A}(\R^\prime)^\prime, ~\text{i.e.}~ \exists~ ~ b\in \mathcal{A}(\R^\prime) ~\text{such that}~ [x,b]_\pm\neq 0.
\end{align}
Here we assume $x$ is either a bosonic or a fermionic operator. 
We split the proof into two cases with even or odd number of sites in $\cal R$.

\subsection{Subregions with even number of fermions}\label{app:even}

For simplicity, we focus on the case where the subregion $R$ consists of lattice sites $j\in \{1,2,\cdots, 2r\}$. 
More general subregions can be treated by applying a unitary transformation that permutes the fermions.\footnote{Explicitly, this unitary is a product of $\pi_{ij}\equiv i(-1)^F {\chi_j-\chi_i\over \sqrt{2}}$ (with $i\neq j$), which acts on the fermions as $\pi_{ij} \chi_i \pi_{ij}^{-1} = \chi_j,\pi_{ij} \chi_j \pi_{ij}^{-1} = \chi_i$, and $\pi_{ij} \chi_k \pi_{ij}^{-1} = \chi_k$ if $k\neq i,j$. }
(Note that this also holds if $\R$ has disconnected components.)
The algebra $\mathcal{A}(\R)$ associated to subregion $\R$  is the algebra generated by $\{\chi_j\}_{j\in\{1,2,...,2r\}}$. 
With the representation \eqref{eqn:Majo_Rep}, it is clear that 
\begin{align}\label{app:AR}
{\cal A}({R}) = \{ X_{R} \otimes \mathbf{1}^{\otimes R'} ~|~X_{R}\in \text{Mat}(2^r,\mathbb{C})\},
\end{align}
where $O^{\otimes R'}$ is the tensor product of the local operator $O$ from every qubit site in region $R'$.
The sets of bosonic and fermionic operators in $\mathcal{A}(\R^\prime)$ are respectively
\begin{align}
&{\cal A}^+({R}') = \{ \mathbf{1}^{\otimes R}\otimes M_{R'}^+ ~|~M_{R'}^+ \in \text{Mat}(2^{\ell-r},\mathbb{C}), ~[(Z)^{\otimes R'} ,M_{R'}^+]=0\}\,,\\
&{\cal A}^-({R}') = \{ (Z)^{\otimes R}\otimes M_{R'}^-~|~M_{R'}^- \in \text{Mat}(2^{\ell-r},\mathbb{C}), ~\{(Z)^{\otimes R'} ,M_{R'}^-\}=0\}\,.
\end{align}

Since we assume $x\notin {\cal A}({R})$, using \eqref{app:AR}, there must exist a matrix $Y_{R'}\in \text{Mat}(2^{\ell-r},\mathbb{C})$ such that 
$[x, ~\mathbf{1}^{\otimes R} \otimes Y_{R'}~]\neq 0$.\footnote{Note that this is the ordinary commutator, not the supercommutator.  This is because for the ordinary commutator $A(R)$ is the commutant of the set of operators of the form $I\otimes Y_{R'}$.} 
We can decompose this matrix into its bosonic and fermionic part: $Y_{R'}=Y_{R'}^+ +Y_{R'}^-$, where $[(Z)^{\otimes R'},Y_{R'}^+ ]= 0 $ and $\{(Z)^{\otimes R'},Y_{R'}^- \}= 0$.\footnote{The Hilbert space decomposes into two  eigenspaces of $(-1)^F$, and $Y_{\R'}$ can be block-decomposed in this basis. Diagonal blocks are bosonic, and off-diagonal blocks are fermionic.}
It follows that at least one of the following holds:
\begin{align}
[x, ~\mathbf{1}^{\otimes R}\otimes Y_{R'}^+~]\neq 0~~\text{or}~~[x, ~\mathbf{1}^{\otimes R} \otimes Y_{R'}^-~]\neq 0 \,.
\end{align}

\paragraph{Case 1 $[x, ~\mathbf{1}^{\otimes R}\otimes Y_{R'}^+~]\neq 0$}: Then  the operator $\mathbf{1}^{\otimes R} \otimes Y_{R'}^+$, which is a bosonic operator in ${\cal A}^+({R}')$, does not commute with $x$, and we have  $x\notin {\cal A}({R}')'$. 

\paragraph{Case 2 $[x, ~\mathbf{1}^{\otimes R}\otimes Y_{R'}^-~]\neq 0$}: Then  the operator $(-1)^F(\mathbf{1}^{\otimes R}\otimes Y_{R'}^-)=(Z)^{\otimes R} \otimes (Z)^{\otimes R'} Y_{ R'}^-$, which is a fermionic operator in ${\cal A}^-({ R}')$, does not supercommute with $x$. That is, $x \, (-1)^F(\mathbf{1}^{\otimes R}\otimes Y_{ R'}^-) \neq (-1)^{|x|} \,(-1)^F(\mathbf{1}^{\otimes R}\otimes Y_{ R'}^-)\,x$. Hence $x\notin{\cal A}({ R}')'$.

In either case, we have shown that $x\notin {\cal A}({ R})$ implies $x\notin{\cal A}({ R}')'$, and hence the proof when $ R$ has even number of sites.

\subsection{Subregions with odd number of fermions}

Again, without loss of generality, we let $ R$ be the subregion consisting of sites $\{1,2,\cdots, 2r-1\}$. 
Let $\tilde {R} ={ R}\cup \{2r\}$ be the subregion with one more site at $2r$ included. 
Suppose $x\notin {\cal A}({ R})$, we split the proof into two cases. 

In the first case $x\notin {\cal A}(\tilde R)$, where $\tilde R$ has even number of sites. Then the proof in the Appendix \ref{app:even} implies that $x\notin {\cal A}(\tilde R')'$, which implies $x\notin {\cal A}({ R}')'$ as $ {\cal A}({ R}')'\subseteq{\cal A}(\tilde R')'$.

In the second case $x$ is in ${\cal A}(\tilde R)$ but not in ${\cal A}({ R})$. Then $x$ can be written as
\begin{align}
x= W_0 +W_1 \,\chi_{2r}\,,~~~W_0,W_1\in {\cal A}({ R})\,,
\end{align}
and $W_1\neq0$. By noting that $W_0$ and $W_1$ have opposite parity under $(-1)^F$, it is immediately clear that $x$ does not supercommute with $\chi_{2r}\in {\cal A}({ R}')$. Hence $x\notin {\cal A}({ R}')'$.

Putting everything together, we have proven that the local operator algebras of an even number of Majorana operators satisfy both  additivity and Haag duality.  
 
\bibliography{ref}

\bibliographystyle{JHEP}
\end{document}